\newcommand{\E}{\mathbb{E}}
\newcommand{\R}{\mathbb{R}}
\newcommand{\vv}[1]{\bm{#1}}
\newcommand{\setbvals}{V}
\newcommand{\setprices}{\mathcal{P}}
\newcommand{\Dmech}{\mathcal{M}_d}
\newcommand{\Xv}{\bm{X}}
\newcommand{\Pv}{\bm{P}}
\newcommand{\tv}{\bm{t}}
\newcommand{\Ger}[1]{{\color{red}#1}}
\newcommand{\evn}[1]{#1}
\begin{document}
\title{Optimal Mechanism Design with Risk-loving Agents}
\author{Evdokia Nikolova \and Emmanouil Pountourakis\and and Ger Yang}
\institute{The University of Texas at Austin}

\maketitle
	
\begin{abstract}
One of the most celebrated results in mechanism design is Myerson's characterization of the revenue optimal auction for selling a single item. However, this result relies heavily on the assumption that buyers are indifferent to risk. In this paper we investigate the case where the buyers are risk-loving, i.e. they prefer gambling to being rewarded deterministically. We use the standard model for risk from expected utility theory, where risk-loving behavior is represented by a convex utility function. 

We focus our attention on the special case of exponential utility functions. 
We characterize the optimal auction and show that randomization can be used to extract more revenue than when buyers are risk-neutral. Most importantly, we show that the optimal auction is simple: the optimal revenue can be extracted using a randomized take-it-or-leave-it price for a single buyer and using a loser-pay auction, a variant of the \evn{all-pay} auction, for multiple buyers. Finally, we show that these results no longer hold for convex utility functions beyond exponential. 
\end{abstract}

\section{Introduction}

The classic mechanism design problem, pioneered by Myerson's seminal work \citep{myerson1981optimal}, considers designing an auction that maximizes the auctioneer's revenue.  There is rich literature on this mechanism design problem under different settings.  However, most prior work assumes the buyers are utility maximizers with quasilinear utility functions, where the utility function is linear in either the payment or the buyer's value.  These assumptions often make the problem simple and easy to analyze.  In the real world, agents need not follow such assumptions.  In fact, under different behavioral models for the buyers, the auctioneer is able to draw more revenue than under the standard setting, where the buyers are maximizing their linear utility functions.  One particular example is when the buyers are risk-averse \citep{maskin1984optimal,chawla2017aversion}.  
In this case, the seller/auctioneer can design an ``insurance''-based auction to extract more revenue from risk-averse buyers.  
In this paper, we study the setting where the buyers are \emph{risk-loving}.  We ask whether the auctioneer can take advantage of such risk-loving behavior, and if so, what can be achieved?

Recently, experiments in electricity markets and transportation networks have demonstrated the importance of designing a mechanism for risk-loving agents.  
Electric utility companies are considering how to incentivize 
customers to reduce their electricity consumption in peak load times so as to alleviate the strain on the grid and to prevent expensive line capacity and transformer upgrades.  Some of the more successful attempts to achieve a desired ``demand response" have included offering lottery coupons to consumers for scaling back demand~\citep{li2015energy}.  
In transportation networks, similar lottery schemes have been applied to reduce congestion in the rush hour \citep{lu2015framework,balaji1,balaji2}.  
In both cases, more consumer response was elicited from lotteries, where a consumer was offered a small chance to win a big reward, than from small fixed payments.  
Hence, there is a need for a theoretical foundation and analysis of the optimal lottery schemes to improve the consumer response and experience in these nation critical infrastructure applications.

In economics, von Neumann-Morgenstern's expected utility theory \citep{von1945theory} has been a standard model to describe people's preferences. According to this theory, an agent evaluates the payoff of an event by applying a utility function on the wealth it generates, and takes the expectation over all possible events to evaluate the payoff of a given action.
As such, expected utility theory provides a simple way to describe how people behave when facing \emph{risk} --- a risk-averse player has a concave utility function, whereas a risk-loving player has a convex utility function.  
Consider a payment scheme where a buyer can choose one of two payment options.  In the first option, the buyer either pays $\$100$ or $\$0$, each with probability $50\%$.  In the second option, she has to pay $\$50$ with certainty. These two options have the same expected gains.  A risk-neutral buyer, who has a linear utility function, is indifferent between these two options.  A risk-averse buyer is going to choose the second option because she prefers the less risky payment scheme.  A risk-loving buyer will choose the first option because she is more willing to take risks.

In the above example, the expected payment the seller receives is $\$50$.  If the buyer is risk-loving, we can extract more revenue by replacing the first option with an even more risky payment option.  For example, we can offer another payment option in which the buyer pays $\$110$ or $0$, each with probability $50\%$.  From the risk-loving buyer's perspective, this new payment option is still preferable to the second option.  Therefore, the expected payment the seller receives will increase to $\$55$.  In fact, it has been shown by \citet{hinnosaar2016impossibility} that in the absence of any regulation, the seller is able to extract infinite expected revenue from a risk-loving buyer by simply taking advantage of this trick --- offering a menu option that asks the buyer to pay a very high amount with a very small probability.  Therefore, in this paper, we will mainly focus on the \emph{bounded transfer} setting, i.e. where we upper bound the \emph{ex-post} payment that the seller may ask the buyer to pay.  In other words, the amount of payment by the buyer is upper-bounded by some specific value under all circumstances. Particlarly, the \emph{bounded transfer} requirement can be shown to be equivalent to the buyers having a publicly known really high yet still bounded budget.

\subsection{Our results and Techniques}

In this paper, we focus on a special case of risk-loving agents, that use an exponential utility function of the form $u(x)=\beta(e^{\alpha x}-1)$. We seek to design individually rational and incentive compatible mechanisms that maximize the revenue.  We assume bounded transfers, that is the maximum payment of the mechanism is bounded, and characterize the optimal mechanism. Surprisingly, we show that if the value distribution of the agents is well behaved, then the optimal revenue can be extracted using a randomized take-it-or-leave-it price for a single buyer and a loser-pay auction, a variant of the all-pay auction where the winner gets a refund, for multiple buyers. 

Our analysis combines a generalized virtual value function similarly to Myerson's analysis~\citep{myerson1981optimal} and the duality framework developed by~\citet{cai2016duality}. \evn{In particular}, we upper bound the revenue of the optimal mechanism by defining a dual solution that can be interpreted as a generalization of the virtual value function. Then we show that this solution matches the revenue obtained by a randomized-take-it-or-leave-it price and the loser-pay auction, for a single buyer and for multiple buyers, respectively. To our surprise, the virtual value function that captures the marginal revenue is different in the single buyer and multiple buyer settings, which may be explained due to the additional uncertainty introduced by the extra buyers. 

These results are in stark contrast with the risk-averse setting where the seller can improve the revenue by offering a plethora of lotteries each with a deterministic price but different allocation probabilities \citep{maskin1984optimal}. The risk-averse buyer opts to pay for lotteries that are priced close to her value and the risk is used as a deterrent for under-bidding. On the other hand, we can extract more revenue from a risk-loving buyer by randomizing the payment. This is because the buyer gains more utility from gambles so that we can increase the probability that the price is accepted. This difference in how risk behavior is exploited explains the conceptually different nature of revenue maximizing mechanisms in the two settings.

\section{Related Work}
Most work \evn{on optimal mechanism design} beyond the risk-neutral setting has focused on risk-averse preferences. 
The classic results of \citet{maskin1984optimal} and \citet{matthews1983selling} provide a characterization of the optimal mechanism with concave utility functions. A recent result in this area by \citet{dughmi2012mechanisms} is that any mechanism designed for risk-neutral buyers can be adjusted to also align the incentives of risk-averse buyers and obtain similar guarantees.  \citet{fu2013prior} consider the design of prior-independent mechanisms (that have no access to the buyers' private value distributions) for risk-averse buyers. Finally,  \citet{chawla2017aversion} study the design of robust mechanisms under the cumulative prospect theory model.

To the best our knowledge, the only work on mechanism design under risk-loving behavior is by \citet{hinnosaar2016impossibility}, who shows that in the absence of regulations, the seller can extract infinite revenue from the buyer with asymptotically risk-loving behavior under both \evn{the} expected utility theory and prospect theory models.

Recently, the duality theory framework has drawn attention in the mechanism design community for understanding optimal mechanisms for selling multiple items.  For example, \citet{daskalakis2017strong,daskalakis2013mechanism} and \citet{giannakopoulos2014duality,giannakopoulos2015selling} discovered the connection between the dual problem and the optimal transport (bipartite matching) problem.    \citet{cai2016duality} consider a duality framework via linear programming, and identify a connection between the virtual valuations and the dual variables.  In our setting, the problem results in a different form of dual problem than in the multi-item setting, hence we seek to establish a new duality framework that diverges from the multi-item setting to different behavior models.

\section{Problem Statement} \label{sec:mech:ps}
We study revenue maximization for a single seller and $n$ symmetric buyers. The seller has a single item to sell and each buyer $i$ has a private value $t_i$ for the item. We use $\tv=(t_1,\dots,t_n)$ to denote the values of all buyers.  We let $\setbvals=\{v_1,v_2,\dots,v_K\}$ denote the set of all possible values, which is shared by all buyers. For simplicity, we assume $v_1=0$ and $v_1<v_2<\dots<v_K$.  Additionally, we assume each buyer's private value is drawn independently from a known identical distribution with probability mass function $f$.  Without loss of generality, we assume $f(v)>0$ for all $v \in \setbvals$.  Further, we let $\setprices = \{z_1,z_2,\dots,z_M\}$ denote the set of allowed payments, where $z_1=0$ (no positive transfers) and $z_1 < z_2 < \dots <z_M$. Here we implicitly assume that $\setprices$ is upper-bounded by $z_M$ \footnote{Without this assumption, it can be shown that there exists a mechanism that attains infinite revenue from risk-loving buyers \cite{hinnosaar2016impossibility}.}. This implies that our setting becomes equivalent with the case where the payments are unconstrainted but the buyer has a publicly known budget of $z_M$ as we show in subsection~\ref{subsec:budget}. We additionally require that $z_M > v_K$, that is, the upper bound of the payment is larger than the largest possible buyer's value.


Each buyer seeks to maximize her utility given by a function $u: \R \rightarrow \R$, which we assume is strictly increasing and $u(0)=0$. If $u$ is linear, then we say the buyers are \emph{risk-neutral} and if $u$ is convex, then we say that the buyers are \emph{risk-loving}.  For the rest of the paper, we focus on a special case of convex utility, specifically the exponential utility function  given by $u(x)=\beta(e^{\alpha x}-1)$ for some $\alpha>0$ and $\beta>0$. Unless otherwise noted, we will assume such \evn{an} exponential utility function for the buyers.

\textbf{Notation.} Let $[R]$ denote the set $\{1,2,\dots,R\}$, for any positive integer $R$. For any vector $\vv{v}$, we use $\vv{v}_{-i}$ to denote the vector generated by removing the $i$-th coordinate from $\vv{v}$.  Also, we use $(v,\vv{v}_{-i})$ to denote the vector generated by replacing the $i$-th coordinate of $\vv{v}$ with $v$.

\subsection{Direct Mechanisms and Bayesian Incentive Compatibility}
In a direct mechanism the auctioneer elicits bids from each buyer and then decides on their allocation probabilities and payments. 
We represent such a mechanism by $\Dmech=(\Xv,\Pv)$, where $\Xv: \setbvals^n \rightarrow \{0,1\}^n$ is a random allocation function and   $\Pv: \setbvals^n \rightarrow \setprices^n$ is a payment function which can also be randomized. 
Given all buyers' values \evn{$\vv{t}=(t_1,\dots,t_n)$},  we refer to  the random variable \evn{$(\Xv(\vv{t}),\Pv(\vv{t}))$} as the \emph{outcome} of the mechanism at \evn{$\vv{t}$}.

We require that our mechanism $\Dmech$ is \emph{Bayesian incentive compatible} (BIC), that is, for each buyer, it is in her best interest to truthfully report her value in expectation. Note that this expectation takes into account the randomness of the mechanism as well as the uncertainty about the other buyers' values.  Formally, $\Dmech = (\Xv,\Pv)$ is BIC if for any $i \in [n]$ and for any $v \in \setbvals$ and $v' \in \setbvals$, it holds that

\begin{equation} \label{eq:ic}
\E[u(v \Xv(v,\vv{t}_{-i}) - \Pv(v,\vv{t}_{-i}))] \ge \E[u(v \Xv(v',\vv{t}_{-i}) - \Pv(v',\vv{t}_{-i}))],
\end{equation} 
where the expectation is taken over $\Xv$, $\Pv$, and \evn{$\vv{t}_{-i}$}.  A mechanism is \emph{individually rational} (IR) if it guarantees a non-negative expected utility for every buyer that truthfully reveals her value, i.e, for any $i \in [n]$ and for any $v \in \setbvals$, it holds that

\begin{equation} \label{eq:ir}
\E[u(v \Xv(v,\vv{t}_{-i}) - \Pv(v,\vv{t}_{-i}))] \ge 0.
\end{equation}
Note that if we only allow non-negative payments, then we must have \evn{$\Pv(0,\vv{t}_{-i}) \le 0$} almost surely.

\subsection{Bounded transfers and budgeted buyers}\label{subsec:budget}

As we stated in the beginning of Seciton~\ref{sec:mech:ps} we require the mechanism to charge ex-post payments from a finite pool of $\setprices$, where $z_M$ is the largest ex-post price that also satisfies that  $z_M>v_K$, i.e., the upper-bound on the payment is larger than the highest value of the buyer. The finiteness of $z_M$ can be thought of as buyers having a finite budget equal to $z_M$. Particulalry, for the case where $\setprices$ was unbounded but the buyers had  a budget of $z_M$ then no reveue maximizing IR mechanism would ever charge an ex-post price larger than $z_M$. Similarly, in any feasible mechanism under the bounded-transfer setting, the buyers with budget greater than the upper-bound on the ex-post price behave as if they had no budget at all. As a result, in both of those cases the revenue-maximizing BIC and IR mechanism are the same.

\subsection{Myerson's Mechanism and Virtual Values}

One of the fundamental results of auction theory is Myerson's    characterization of revenue optimal mechanisms  for risk-neutral buyers \citep{myerson1981optimal}. This is achieved by an ammortized analysis that expresses the revenue of any mechanism via the \emph{virtual value function} $\phi(v)$, which captures the marginal revenue of allocating to a buyer with value $v$. The virtual value function is defined for a  continuous distribution of values (and can be similarly  defined for a discrete one), with cumulative distribution function $F$ and probability density function $f$, as \begin{equation} \label{eq:meyerson_virtual_value}  
	\phi(v) = v - \frac{1-F(v)}{f(v)}.
\end{equation}
The revenue of the mechanism equals the expected virtual surplus, i.e., the expected virtual value of the winner. As a result, if the value distributions satisfy certain properties, the optimal mechanism turns out to be quite simple: for a single buyer it is just a take-it-or-leave-it price and for multiple symmetric buyers it is the second price auction with a common reserve. However, this definition of virtual values heavily relies on the risk-neutrality assumption.  Our analysis generalizes this definition for risk-loving buyers in Definition~\ref{risklovingvvf} in order to derive our results. 

\subsection{Revenue Maximization as an Optimization Problem}

Our goal is to characterize the optimal mechanism for revenue maximization. To that end, we model the mechanism design question as an optimization problem. We define the decision variables 
$\{y_{i,j}^0, y_{i,j}^1\}_{i \in [n], j \in [M]}$,
where $y_{i,j}^0: \setbvals^n \rightarrow [0,1]$ and $y_{i,j}^1: \setbvals^n \rightarrow [0,1]$,  that encode the mechanism $\Dmech$ as follows:   $y_{i,j}^0(\tv)$ represents the probability that buyer $i$ does not get the item and pays $z_j$ when the buyers' values are $\tv$. Similarly, $y_{i,j}^1(\tv)$ represents the probability that buyer $i$ gets the item and pays $z_j$, given the buyers' values are $\tv$.

Those decision variables capture both the allocation and the payment of the mechanism given any reported values. To see this, the allocation probability that buyer $i$ gets the item given values $\tv$ is  
$\sum_{j} y_{i,j}^1(\tv)$ and the expectation of her randomized payment is  
$\sum_{j} z_j y_{i,j}^1(\tv)+\sum_j z_j y_{i,j}^0(\tv)$ where the first and second summand correspond to her expected payment if she wins or loses the item respectively. 

For the sake of succinctness of our optimization problem formulation, we further define the {\em interim} version of the decision variables $y_{i,j}^1(\tv),y_{i,j}^0(\tv)$, denoted by $y_{i,j}^1(v_k),y_{i,j}^0(v_k)$. Namely, given that the buyer has value $v_k$, what is the expected probability of winning/losing the item and paying value $z_j$ in expectation over the values of the other buyers \evn{$\vv{v}_{-i}$}? These interim variables are given by:
\begin{align} 
&y_{i,j}^1(v_k) = \sum_{\vv{v}_{-i} \in V^{n-1}} y_{i,j}^1(v_k, \vv{v}_{-i}) f(\vv{v}_{-i}),
&y_{i,j}^0(v_k) = \sum_{\vv{v}_{-i} \in V^{n-1}} y_{i,j}^0(v_k, \vv{v}_{-i}) f(\vv{v}_{-i}).
\label{eq:lp_cxij}
\end{align}

We can express the interim allocation $x_i(k)$ of buyer $i$ at value $v_k$ as $x_i(k) = \sum_j y_{i,j}^1(v_k)$ and her interim payments in case of win $p_i(k)$ and loss $q_i(k)$ as:
\begin{align*}
&p_i(k) = \frac{\sum_j z_j y_{i,j}^1(v_k)}{x_i(k)},
&q_i(k) = \frac{\sum_j z_j y_{i,j}^0 (v_k)}{1-x_i(k)}. 
\end{align*}
 The above follow from the definition of conditional probability. 
 With this notation, we can rewrite the BIC constraint as:
\begin{equation} \label{eq:lp_cbic}
\sum_j \left[ y_{i,j}^1(v_k) u(v_k - z_j) + y_{i,j}^0(v_{k}) u(-z_j) \right] \ge \sum_j \left[ y_{i,j}^1(v_{k'}) u(v_k - z_j) + y_{i,j}^0(v_{k'}) u(-z_j) \right], \forall k, k' \in [K],
\end{equation}
where the first and second summand on the left hand side correspond to the expected utility if buyer $i$ wins and loses\evn{, respectively,} after truthfully reporting $v_k$. Similarly, the first and second summand on the right hand side correspond to the expected utility if buyer $i$ wins and loses, respectively, after misreporting $v_{k'}$.

In addition, we can write the IR constraint as
\begin{equation} \label{eq:lp_cir}
\sum_j \left[ y_{i,j}^1(v_k) u(v_k - z_j) + y_{i,j}^0(v_k) u(-z_j) \right] \ge 0, \qquad \forall k \in [K].
\end{equation}

Finally, we need to satisfy the feasibility constraints

\begin{align} \label{eq:lp_cf}
&\sum_j \left( y_{i,j}^0(\vv{v}) + y_{i,j}^1(\vv{v}) \right) = 1, & \sum_i \sum_j y_{i,j}^1(\vv{v}) \le 1, \qquad \forall \vv{v} \in \setbvals^n.
\end{align}

Therefore, we can find the optimal mechanism by solving the following linear program:
\begin{align} 
\text{Maximize} \qquad &\sum_i \sum_{v \in \setbvals} f(v) \sum_j z_j \left[ y_{i,j}^0(v) +  y_{i,j}^1(v) \right] \label{eq:lp_obj} \\
\text{Subject to} \qquad & \text{Constraints \eqref{eq:lp_cxij}, \eqref{eq:lp_cbic}, \eqref{eq:lp_cir}, and \eqref{eq:lp_cf}.} \nonumber\\
& y_{i,j}^0(\vv{v}) \ge 0, \quad y_{i,j}^1(\vv{v}) \ge 0, & \forall \vv{v} \in \setbvals^n. \nonumber
\end{align}

\subsection{Overview of Main Theorems and Results}
The main result of this paper is that there is no need to actually solve the linear program \eqref{eq:lp_obj} in order to compute the optimal mechanism.  Instead, we take advantage of the linear program formulation \eqref{eq:lp_obj} of the problem to help us derive simple mechanisms that are optimal.  \evn{In particular}, when there is a single risk-loving buyer with \evn{an} exponential utility function, we show that the optimal mechanism is a \emph{randomized ``take-it-or-leave-it" price}, which offers the buyer a single randomized price irrespectively of her value.  We present this result in the following theorem:
\begin{theorem}[Restatement of Theorem~\ref{thm:mec_exputil}] \label{thm:main_single_buyer}
	Consider a single risk-loving buyer with exponential utility.  The optimal mechanism is the revenue maximizing randomized take-it-or-leave-it price.
\end{theorem}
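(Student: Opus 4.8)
The plan is to specialize the linear program~\eqref{eq:lp_obj} to $n=1$, use the multiplicative form of the exponential utility to linearize the incentive constraints, collapse the payment side to an extreme form, and then run a Myerson-style amortized analysis built on the generalized virtual value of Definition~\ref{risklovingvvf}, with the final inequality certified through the duality framework of~\citet{cai2016duality}. First I would introduce $w_k:=e^{\alpha v_k}$ and, for a one-buyer mechanism, the ``exponential moments'' $A(k):=\sum_j y_{1,j}^1(v_k)e^{-\alpha z_j}$, $B(k):=\sum_j y_{1,j}^0(v_k)e^{-\alpha z_j}$ and the win probability $x(k):=\sum_j y_{1,j}^1(v_k)$. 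Since $u(v_k-z_j)=\beta w_k e^{-\alpha z_j}-\beta$ and $u(-z_j)=\beta e^{-\alpha z_j}-\beta$, the utility of a type-$v_k$ buyer who reports $v_{k'}$ is exactly $\beta\bigl(w_kA(k')+B(k')-1\bigr)$, so~\eqref{eq:lp_cbic} becomes the linear family $w_kA(k)+B(k)\ge w_kA(k')+B(k')$ and~\eqref{eq:lp_cir} becomes $w_kA(k)+B(k)\ge 1$. As usual these force $A(\cdot)$ to be non-decreasing and tie down the information rent $U(k):=w_kA(k)+B(k)-1\ge0$ through $(w_k-w_{k-1})A(k-1)\le U(k)-U(k-1)\le(w_k-w_{k-1})A(k)$.

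Next I would observe that the constraints above, together with the feasibility constraints~\eqref{eq:lp_cf}, involve the decision variables only through $\bigl(x(k),A(k),B(k)\bigr)$, whereas the revenue~\eqref{eq:lp_obj} is the only place where the detailed payment distribution enters. Hence, for any fixed $\bigl(x(k),A(k),B(k)\bigr)$ one may optimize the payments separately; because $z\mapsto(e^{-\alpha z},z)$ is convex, among all distributions on $\setprices$ with prescribed mass and prescribed $e^{-\alpha z}$-moment the one supported on $\{0,z_M\}$ maximizes the expected payment. This yields that the optimal revenue equals $\tfrac{z_M}{1-e^{-\alpha z_M}}\sum_k f(v_k)\bigl(1-A(k)-B(k)\bigr)$, and, using the rent identity $1-A(k)-B(k)=(w_k-1)A(k)-U(k)$, equals $\tfrac{z_M}{1-e^{-\alpha z_M}}\sum_k f(v_k)\bigl[(w_k-1)A(k)-U(k)\bigr]$.

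Then I would amortize $\sum_k f(v_k)U(k)$. The new ingredient relative to Myerson is that one must chain the telescoping rent inequalities with the bounded-transfer constraint $B(k)\ge0$, i.e.\ $U(k)\ge w_kA(k)-1$: together they give an upper bound on the revenue of the form $\tfrac{z_M}{1-e^{-\alpha z_M}}\sum_k f(v_k)\,\varphi(v_k)\,A(k)$, where $\varphi$ is the generalized virtual value of Definition~\ref{risklovingvvf}; notably $\varphi(v_k)$ acquires a $1/w_k$-type factor that the envelope-only virtual value would miss. To make this rigorous I would exhibit the matching dual-feasible solution of~\eqref{eq:lp_obj} in the manner of~\citet{cai2016duality}, with Myerson-type nearest-neighbor BIC multipliers and the IR multipliers carried by the types at and below the threshold.

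Finally, maximizing $\sum_k f(v_k)\varphi(v_k)A(k)$ over non-decreasing $A$---ironing $\varphi$ to its monotone majorant exactly as in Myerson, so that no regularity hypothesis on $f$ is needed in the single-buyer case---shows the optimizer is a two-level step: $A(k)=0$ below the threshold $k^{*}$ where the ironed virtual value changes sign, and $A(k)=s^{*}:=1/w_{k^{*}}$ above it. This profile is implemented exactly by the randomized take-it-or-leave-it price that charges $z_M$ with probability $\tfrac{1-s^{*}}{1-e^{-\alpha z_M}}$ and $0$ otherwise (the types $v_k\ge v_{k^{*}}$ accept it); a short computation shows its revenue is $\tfrac{z_M(1-s^{*})}{1-e^{-\alpha z_M}}\Pr[v\ge v_{k^{*}}]$, which meets the upper bound, and optimizing over the threshold recovers the revenue-maximizing such price. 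The main obstacle is precisely this amortization step: in contrast to the risk-neutral case the ``marginal revenue'' is not the allocation times a virtual value, because the bounded-transfer constraint interacts with the rent envelope; correctly merging the two into a single virtual value---equivalently, guessing the right dual certificate---is the crux, and it is also the source of the surprising $1/w_k$ correction that separates this setting from Myerson's.
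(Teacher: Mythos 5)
Your plan follows the same overall route as the paper --- collapse payments to the extremes $\{0,z_M\}$, define a generalized virtual value, certify optimality with a Myerson/duality argument in the style of \citet{cai2016duality}, and iron when the distribution is irregular --- but your change of variables $A(k)=\sum_j y^1_{1,j}(v_k)e^{-\alpha z_j}$, $B(k)=\sum_j y^0_{1,j}(v_k)e^{-\alpha z_j}$ is a genuinely cleaner way to organize it: it makes the BIC and IR constraints exactly linear with ``type'' $w_k=e^{\alpha v_k}$, so the problem becomes a discrete single-parameter Myerson problem with one extra constraint $B(k)\ge 0$, and ironing can be done on the convexified revenue curve directly rather than via the paper's $e^{-\alpha v}$-distorted loops of Appendix~\ref{sec:app_iron_single}. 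Your reduction to two-point payments via convexity of $z\mapsto e^{-\alpha z}$ is the same content as Theorem~\ref{thm:rl-random} and Lemma~\ref{lemma:exp_two_price1}, and the identity $1-A(k)-B(k)=(w_k-1)A(k)-U(k)$ is correct.

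The gap sits exactly at the step you yourself call the crux. The displayed bound $\mathrm{Rev}\le\tfrac{z_M}{1-e^{-\alpha z_M}}\sum_kf(v_k)\varphi(v_k)A(k)$, with $\varphi$ the virtual value of Definition~\ref{risklovingvvf}, cannot be right as stated: a functional linear in $A$ with positive coefficients above the threshold is maximized over monotone $A(k)\in[0,x(k)]\subseteq[0,1]$ by pushing $A$ to its cap, not by stopping at the interior level $s^{*}=1/w_{k^{*}}$; and plugging $A\equiv 1/w_{k^{*}}$ into your expression gives $\mathrm{OPT}/\bigl((1-e^{-\alpha z_M})w_{k^{*}}\bigr)$ rather than $\mathrm{OPT}$, so the constants do not close either. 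The envelope inequalities alone lower-bound $\sum_kf(v_k)U(k)$ by $\sum_k(w_{k+1}-w_k)\bigl(1-F(v_k)\bigr)A(k)$ and lead to the non-tight bound $\tfrac{z_M}{1-e^{-\alpha z_M}}\bigl(1-F(v_{k^{*}-1})\bigr)(w_{k^{*}}-1)$; the constraints $B(k)\ge 0$, i.e.\ $U(k)\ge w_kA(k)-1$, must therefore enter with specific nonnegative multipliers, chosen jointly with the adjacent BIC multipliers so that for every $k\ge k^{*}$ the net coefficient of $A(k)$ in the amortized bound vanishes and the revenue is carried entirely by the constant terms. That is precisely what the paper's certificate achieves via the binding pattern $\Gamma_k(0;\lambda,\mu)=\Gamma_k(z_M;\lambda,\mu)=\nu_k$ with $\lambda_{k,k-1}=\sum_{\ell\ge k}f(v_\ell)z_M/\bigl(u(v_k)-u(v_k-z_M)\bigr)$ and, notably, $\mu_k=0$ --- so no IR multipliers at or below the threshold, contrary to your guess. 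Until those multipliers are exhibited and verified feasible, the matching upper bound for the randomized take-it-or-leave-it price is asserted rather than proved.
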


When there are multiple symmetric risk-loving buyers, we show that 
the optimal mechanism is a \emph{loser-pay} auction with a reserve price.  In a loser-pay auction, the item is awarded to the buyer with the highest value but only the buyers who do not get the item are paying. Similarly to the single buyer case, all payments are randomized between the minimum and the maximum price.  We will show the following theorem:
\begin{theorem}[Restatement of Theorem~\ref{thm:two_buyer_regular}] \label{thm:main_multiple_buyers}
	Consider $n \ge 2$ risk-loving buyers with exponential utility $u(x)=\beta(e^{\alpha x}-1)$.  Assume  $z_M \gg \alpha$.  Then, the optimal mechanism is a loser-pay auction.
\end{theorem}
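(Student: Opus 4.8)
The plan is to prove optimality by the duality route sketched in the introduction: exhibit a feasible dual solution to the linear program \eqref{eq:lp_obj} whose value equals the revenue of the loser-pay auction, so that weak LP duality sandwiches the optimum. Concretely, I would first form the Lagrangian dual of \eqref{eq:lp_obj}, attaching multipliers $\lambda_{i,k,k'}\ge 0$ to the BIC constraints \eqref{eq:lp_cbic}, $\mu_{i,k}\ge 0$ to the IR constraints \eqref{eq:lp_cir}, and sign-appropriate multipliers to the feasibility constraints \eqref{eq:lp_cf}, after eliminating the interim variables through their defining relation \eqref{eq:lp_cxij}. Mirroring Myerson's argument (and the single-buyer proof behind Theorem~\ref{thm:main_single_buyer}), I would restrict the support of the dual to the local downward incentive constraints $\lambda_{i,k,k-1}$ together with the single IR constraint at the bottom type $v_1=0$, and solve the resulting stationarity conditions in the $y$-variables to pin the multipliers down in closed form. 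Substituting back collapses the Lagrangian into an expected ``virtual surplus'': a sum over value profiles $\vv v$ of a per-buyer term that is linear in the winning probabilities $\sum_j y^1_{i,j}(\vv v)$ and in the loser-payment probabilities $y^0_{i,j}(\vv v)$, with coefficients given by a generalized virtual value $\phi^{(n)}$ — the multi-buyer analogue of Definition~\ref{risklovingvvf}, which differs from the single-buyer one precisely because eliminating \eqref{eq:lp_cxij} folds in the distribution of the other $n-1$ buyers.

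Second, I would argue that, subject only to the feasibility constraints \eqref{eq:lp_cf} and to interim monotonicity, this virtual surplus is maximized pointwise (profile by profile) by the loser-pay rule: award the item to the buyer with the highest value whenever it clears the reserve induced by the sign of $\phi^{(n)}$, let the winner's net payment be $0$ (the refund), and set each non-winner's ex-post payment to the extreme value $z_M$ with exactly the probability that makes the bottom IR constraint and each local downward BIC constraint tight. Two ingredients enter here. The ``well-behaved distribution'' (regularity) hypothesis guarantees $\phi^{(n)}$ is nondecreasing in $v_k$, so the highest-value allocation is monotone, hence implementable, and the greedy choice of winner is optimal. The hypothesis $z_M\gg\alpha$ is what forces the per-buyer maximization over the payment level $z_j$ to land at the endpoint $z_M$ rather than at an interior price: since $u$ is convex, pushing a loser's payment toward the largest allowed value raises the revenue extractable per unit of incentive slack, and once $z_M$ is large relative to $\alpha$ this endpoint choice dominates uniformly, which is exactly why the optimal randomized payment schedule degenerates to the two-point $\{0,z_M\}$ lottery of the loser-pay auction.

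Finally, I would verify that the constructed dual assignment is genuinely feasible — all $\lambda$'s and $\mu$'s nonnegative and every dual constraint satisfied, again invoking regularity and $z_M\gg\alpha$ — and then compute the revenue of the loser-pay auction directly and check that it equals the dual objective, closing the loop. The main obstacle, more delicate here than in the single-buyer case, is Steps one and two together: correctly identifying the binding-constraint pattern, extracting the right multi-buyer virtual value $\phi^{(n)}$, and then proving that the loser-pay allocation-and-payment rule pointwise maximizes the induced virtual surplus. The nonlinearity of $u$ means the payment variables appear simultaneously in the objective and, through $u(v_k-z_j)$ and $u(-z_j)$, in the incentive constraints, so $\phi^{(n)}$ must encode marginal revenue and marginal incentive slack at once; showing that the joint maximizer is exactly ``highest type wins, losers pay $z_M$ with a type-calibrated probability,'' and that this is where $z_M\gg\alpha$ becomes unavoidable, is the technical crux. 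A secondary subtlety is the bookkeeping involved in eliminating the interim variables \eqref{eq:lp_cxij}, which is the source of the $n$-dependence and thus of the discrepancy with Theorem~\ref{thm:main_single_buyer}.
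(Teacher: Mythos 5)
Your overall strategy --- exhibit a feasible dual solution to \eqref{eq:lp_obj} supported on local downward IC constraints whose objective matches the revenue of the loser-pay auction --- is exactly the route the paper takes. But as written the proposal has two genuine gaps beyond being a plan rather than a proof. First, you misattribute the role of the hypothesis $z_M \gg \alpha$. You claim it is what forces the optimal payment lottery to degenerate to the two-point $\{0,z_M\}$ distribution; in fact that endpoint structure follows from the convexity of $u$ alone, via the observation that each dual constraint function $\Gamma_{i,k}(z;\lambda,\mu)$ and $\Pi_{i,k}(z;\lambda,\mu)$ is either increasing or strongly convex in $z$, so only the constraints at $z=0$ and $z=z_M$ can bind, for any $z_M$. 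The actual role of Assumption~(A1) is on the primal side: it guarantees that the loser's payment probability $q_i(k) = \frac{1}{1-x_i(k)}\sum_{k'=k^*}^{k}\frac{[x_i(k')-x_i(k'-1)]u(v_{k'})}{-u(-z_M)}$ is at most $1$, i.e., that the loser-pay auction is a feasible mechanism at all. Without that separate feasibility/IR/BIC verification of the primal candidate (the paper's Theorem~\ref{lem:multibuyer_mech_feasible}), its ``revenue'' is not a valid lower bound and the duality sandwich does not close.

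Second, the step you yourself flag as the technical crux --- identifying the binding-constraint pattern and the correct multi-buyer virtual value --- is left unresolved, and the explanation you offer for the $n$-dependence (bookkeeping from eliminating the interim variables \eqref{eq:lp_cxij}) is not the right mechanism. What actually changes is that the dual variable $\gamma_{\vv{k}}$ for the single-item constraint appears only in the winner-side constraints $f(\vv{k}_{-i})\Gamma_{i,k_i}(z_j;\lambda,\mu)\le \nu_{i,\vv{k}}+\gamma_{\vv{k}}$ and not in the loser-side constraints $f(\vv{k}_{-i})\Pi_{i,k_i}(z_j;\lambda,\mu)\le \nu_{i,\vv{k}}$. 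With $\nu_{i,\vv{k}}=0$ the binding pair above the reserve becomes $\Gamma_{i,k}(0)$ and $\Pi_{i,k}(z_M)$ (losers pay), rather than $\Gamma_{i,k}(0)$ and $\Gamma_{i,k}(z_M)$ as in the single-buyer case, and enforcing this pattern requires scaling the multipliers to $\lambda_{i,k,k-1}=\sum_{\ell\ge k}f(v_\ell)z_M\,e^{\alpha v_k}/(u(v_k)-u(v_k-z_M))$; the extra $e^{\alpha v_k}$ is precisely the source of the multi-buyer virtual value $\Phi_u$ of Definition~\ref{defn:virtual_value_multi_buyers} and of the revenue gain over the second-price benchmark. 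Without pinning this down, the ``pointwise virtual-surplus maximization'' step cannot be carried out. Finally, the theorem as stated does not assume regularity, so a complete proof also needs the ironing construction (adding loops to $\lambda$ that preserve the binding identities $\Gamma_{i,k}(z_M)=e^{-\alpha z_M}\Gamma_{i,k}(0)$ and $\Pi_{i,k}(z_M)=0$), which your proposal does not address.
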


%

\section{Optimal Mechanism Design for a Single Buyer} \label{sec:single_buyer}

In this section, we characterize the revenue-maximizing mechanism for selling an item to a \evn{single} risk-loving agent. Specifically, we show that \evn{the optimal mechanism is} a \emph{randomized ``take-it-or-leave-it" price}, that offers the buyer a single randomized price irrespectively of her value. \evn{To prove that, we first} characterize the revenue generated by the optimal randomized take-it-or-leave-it price. Then, we prove that this mechanism \evn{remains} optimal if we allow an arbitrary \Ger{BIC} mechanism, by utilizing the optimization problem formulation and the duality framework to find a matching upper bound. Note that this result is quite similar to what the Myerson characterization implies for the single risk-neutral buyer setting with the exception that the optimal ``take-it-or-leave-it" price is always deterministic. However, as Example~\ref{ex:rl-incr-rev} demonstrates, this is no longer true for the risk-loving setting, and randomizing the price is required even in the case where the allocation probability is deterministic.  

For the rest of this section, we are going to use the following \emph{menu of options} interpretation that provides an equivalent description of BIC mechanisms in the single buyer case: A menu consists of a tuple $(\Xv_i,\Pv_i)$ where $\Xv_i$ and $\Pv_i$ represent the allocation and payment random variables. Instead of the buyer revealing her value, the seller offers the buyer a collection of menu options and the buyer chooses the menu option that maximizes her utility.  For example, a deterministic take-it-or-leave-it price can be described with the menu options $(0,0),(1,\Pv)$ where $\Pv$ is a point-mass at \evn{some} price $p$. 

\subsection{Sub-optimality of Deterministic ``Take-it-or-leave-it" Prices}

We illustrate that randomizing a take-it-or-leave-it price can result in increased revenue as the buyer shifts from a risk-neutral to a risk-loving utility function.
For the sake of the presentation we use a continuous distribution but similar results can be derived using a discrete value distribution.

\begin{example} \label{ex:rl-incr-rev}
Assume the buyer's value $t$ for the item is distributed according to the Uniform distribution $U(0,1)$.  Then, the optimal mechanism with a risk-neutral buyer is to offer the buyer a take-it-or-leave it price of $1/2$, producing a revenue of $1/4$. Now, consider the case of a risk-loving buyer with utility $u$. Her utility of  accepting this price is $u(t-1/2)$.  Now, consider a different scheme using a randomized take-it-or-leave-it price: with probability $1/2$ pay nothing and with probability $1/2$ pay $1$.  \Ger{Note that this scheme has the same expected payment as the first one.  However, the} expected utility of the buyer for this option is 
$ \frac{1}{2}u\left(t\right) + \frac{1}{2}u\left(t-1\right) $
and by Jensen's inequality, we get that
$ \frac{1}{2}u\left(t\right) + \frac{1}{2}u\left(t-1\right) > u\left(t - \frac{1}{2}\right), $
which indicates that the expected utility function for the randomized menu option is always above the utility function of the menu option $(1,1/2)$.  This means that the probability that a buyer accepts the randomized menu option is greater than $1/2$, i.e. $\Pr[\frac{1}{2}u(t)+\frac{1}{2}u(t-1) \ge 0] \ge \Pr[u(t-1/2)\ge 0]=1/2$. Therefore, offering the randomized menu option earns more revenue.
\end{example}

\subsection{Optimal Take-it-or-leave-it Randomized Price}

In Example~\ref{ex:rl-incr-rev}, we saw that randomizing the take-it-or-leave-it price increased the revenue extracted by a risk-loving buyer.  This gives rise to the question\evn{: What} is the  revenue maximizing randomized take-it-or-leave-it price? 

\begin{definition}
	A  randomized take-it-or-leave-it price with allocation and price $\Pv$ is a mechanism that contains only two menu options $(0,0)$ and $(1,\Pv)$.
\end{definition}

In a randomized take-it-or-leave-it price scheme, the seller posts a (possibly randomized) price $\Pv$ of the item, and asks the buyer to accept it or not.  If the buyer rejects the price, then her allocation is zero and she pays nothing.  According to \citet{myerson1981optimal}, we know that with a risk-neutral buyer, it is optimal to post $\Pv$ with the expectation of $\Pv$ equal to $\arg\max_v (v \Pr[t \ge v])$.  In contrast, for a risk-loving buyer we show that $\Pv$ must be randomized and specifically have positive support only for the maximum and minimum allowable price.  
Formally, we state this in the following theorem: 
\begin{theorem} \label{thm:rl-random}
	With a risk-loving buyer, the randomized take-it-or-leave-it price with following randomized payment rule
	$$
	\Pv = 
	\begin{cases}
	z_M, &\text{w.p. }  \frac{u(v^*)}{u(v^*)-u(v^*-z_M)}\\
	0, &\text{otherwise}
	\end{cases}
	$$ 
	where $v^* = \arg\max_{v \in \setbvals} \frac{z_M \Pr[t \ge v] u(v)}{u(v)-u(v-z_M)}$, is the optimal randomized take-it-or-leave-it price.  The optimal take-it-or-leave-it price has revenue $ \max_{v \in \setbvals} \frac{z_M \Pr[t \ge v] u(v)}{u(v)-u(v-z_M)}$. 
\end{theorem}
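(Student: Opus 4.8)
The plan is to reduce the problem to a one–dimensional search over a threshold value, and then, for each fixed threshold, to solve a small optimization whose optimum is attained by a two-point distribution on $\{0,z_M\}$.

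First I would observe that a randomized take-it-or-leave-it price with payment $\Pv$ offers the buyer only the two options $(0,0)$ and $(1,\Pv)$, so a buyer with value $v$ accepts iff $\E[u(v-\Pv)]\ge u(0)=0$ (breaking ties toward acceptance). Since $u$ is strictly increasing, the map $v\mapsto\E[u(v-\Pv)]=\sum_j \Pr[\Pv=z_j]\,u(v-z_j)$ is strictly increasing, so the set $S$ of accepting values in $\setbvals$ is an up-set $\{v\in\setbvals:v\ge v^*\}$ for some $v^*\in\setbvals$ (or is empty, giving zero revenue). Because $\Pv$ is offered irrespective of $t$, the expected revenue is exactly $\E[\Pv]\cdot\Pr[t\ge v^*]$. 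Consequently the optimal randomized take-it-or-leave-it revenue equals $\max_{v^*\in\setbvals}\Pr[t\ge v^*]\cdot M(v^*)$, where $M(v^*):=\max\{\E[\Pv]:\E[u(v^*-\Pv)]\ge 0\}$: the ``$\le$'' direction follows by taking $v^*=\min S$ for an arbitrary mechanism, and ``$\ge$'' holds because any $\Pv$ feasible for $M(v^*)$ has $v^*$ (and all larger values) in its accepting set.

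Second I would compute $M(v^*)$ for a fixed $v^*\in\setbvals$. Writing each allowed payment as the convex combination $z_j=\frac{z_j}{z_M}z_M+\bigl(1-\frac{z_j}{z_M}\bigr)\cdot 0$, convexity of $u$ gives $u(v^*-z_j)\le \frac{z_j}{z_M}u(v^*-z_M)+\bigl(1-\frac{z_j}{z_M}\bigr)u(v^*)$; taking expectations over $\Pv$ and using feasibility $\E[u(v^*-\Pv)]\ge 0$ yields $\bigl(1-\frac{\E[\Pv]}{z_M}\bigr)u(v^*)+\frac{\E[\Pv]}{z_M}u(v^*-z_M)\ge 0$. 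Since $z_M>v_K\ge v^*$ and $u$ is strictly increasing with $u(0)=0$, we have $u(v^*-z_M)<0\le u(v^*)$, so rearranging gives $\E[\Pv]\le \frac{z_M\,u(v^*)}{u(v^*)-u(v^*-z_M)}$. This bound is tight: the two-point rule that pays $z_M$ with probability $\frac{u(v^*)}{u(v^*)-u(v^*-z_M)}\in[0,1]$ and $0$ otherwise makes $\E[u(v^*-\Pv)]=0$ by a direct computation, so it is feasible and attains the bound; hence $M(v^*)=\frac{z_M\,u(v^*)}{u(v^*)-u(v^*-z_M)}$. Combining the two steps, the optimal revenue is $\max_{v^*\in\setbvals}\frac{z_M\,\Pr[t\ge v^*]\,u(v^*)}{u(v^*)-u(v^*-z_M)}$, attained by taking $v^*$ to be the maximizer and using the stated payment rule. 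The main work is the convexity (Jensen-type) argument bounding $M(v^*)$; the remaining points — that only thresholds $v^*\in\setbvals$ need be considered, the tie-breaking convention at $\E[u(v^*-\Pv)]=0$, and the degenerate case $v^*=v_1=0$ (where both the bound and the prescribed distribution collapse to $\Pv\equiv 0$) — are routine.
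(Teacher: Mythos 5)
Your proof is correct and rests on the same core step as the paper's: a Jensen/convexity argument showing that, for a fixed acceptance threshold $v^*$, the expected payment subject to $\E[u(v^*-\Pv)]\ge 0$ is maximized by the two-point distribution on $\{0,z_M\}$, yielding $M(v^*)=\frac{z_M\,u(v^*)}{u(v^*)-u(v^*-z_M)}$. The paper organizes the same inequality the other way around (it fixes $\E[\Pv]$ and replaces $\Pv$ by a coupled two-point $\Pv'$ with the same mean and pointwise higher expected utility, hence a weakly lower acceptance threshold), but the mathematical content and the resulting formula are identical, and your version is if anything a bit more careful about the discrete threshold structure.
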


Note that for a linear utility function $u$ this implements the optimal deterministic take-it-or-leave-it price for a risk neutral agent and produces the exact same revenue.  We call the optimal take-it-or-leave-it mechanism the \emph{revenue maximizing randomized take-it-or-leave-it price}.

Theorem~\ref{thm:rl-random} can be proved in two steps.  In the first step, we apply Jensen's inequality to show that given any pricing rule $\Pv$, we can construct another pricing rule $\Pv'$ that randomizes between $0$ and $z_M$, and achieves a larger utility than $\Pv$ for any value $v \in \setbvals$.  In the second step, we use the individual rationality constraint to derive the optimal pricing.  Due to space constraints, we defer the full proof to Appendix~\ref{sec:app_single_buyer1}.

\subsection{Duality Theory for Optimal Mechanisms}
\label{sec:sb_exp1}

In this section, we prove our main theorem (Theorem~\ref{thm:single_buyer_regular}), namely that the revenue-maximizing take-it-or-leave-it randomized price is optimal among all possible mechanisms for revenue extraction. This is similar in nature to the risk-neutral setting where \citet{myerson1981optimal} showed that a take-it-or-leave-it payment is optimal.  It is important to note that the same is not true for the case of risk-averse agents where multiple menu options can be used to extract revenue that approaches the social welfare given sufficient aversion to risk. 

We prove this theorem by upper bounding the revenue of the optimal mechanism using the optimization problem formulation \eqref{eq:lp_obj} and employing the duality framework.  Specifically, we identify dual variables of the  Lagrangian dual program and show that  it matches the maximum revenue obtained by a take-it-or-leave-it randomized price.  The core idea is to define a \emph{virtual value function} that captures the marginal revenue and assume a regularity condition to close the duality gap. The virtual values used in interpreting Myerson's result heavily rely on the assumption of risk neutrality, therefore we need a new definition of virtual value.

\begin{definition}[Virtual value function for a single buyer]~\label{risklovingvvf}
	In the single buyer setting, the virtual value function $\phi_u: [K] \rightarrow \R$ with respect to utility function $u$ is defined as
	$$
	\phi_u(k) = \frac{1}{f(v_k)} \left( \sum_{k' \ge k}f(v_{k'}) \cdot \frac{u(v_k)}{u(v_k)-u(v_k-z_M)} - \sum_{k' \ge {k+1}}f(v_{k'}) \cdot \frac{u(v_{k+1})}{u(v_{k+1})-u(v_{k+1}-z_M)} \right).
	$$
\end{definition}

Note that this definition of the virtual value function reduces to Myerson's virtual value function for linear $u$. 
To see how this new definition of virtual value function is related to the marginal revenue, consider a take-it-or-leave-it mechanism with a pricing rule that asks the buyer to pay $z_M$ with probability $p$, and pay zero otherwise.  If we would like to guarantee that this pricing rule is accepted by a buyer with value greater than $v$, then by individual rationality, we can find that the largest $p$, the probability of paying $z_M$, we can set is $\frac{u(v)}{u(v)-u(v-z_M)}$.  Therefore, the expected revenue we have from this mechanism is $\sum_{v' \ge v} f(v') z_M \frac{u(v)}{u(v)-u(v-z_M)}$.
As a result, we can find that $f(v_k)\phi_u(k)z_M$ is indeed the marginal increase on the revenue by moving the threshold value from $v_{k+1}$ to $v_k$.
Given the definition of the virtual value function above, \evn{we need to define the following regularity condition}:
\begin{definition}[Regular distribution for a single buyer]
	In the single buyer setting, a distribution $f$ is regular if the corresponding virtual value function is monotone increasing, i.e. for any $k,k' \in [K]$ and $k > k'$, it holds that $\phi_u(k) > \phi_u(k')$.
\end{definition}

Note that with this regularity condition, in the optimal take-it-or-leave-it mechanism considered in Theorem~\ref{thm:rl-random}, the optimal quantile can also be found by looking at the smallest $k \in \{0,1,\dots,K\}$ such that $\phi_u(k)>0$. We are now ready to state our first main result in Theorem~\ref{thm:single_buyer_regular}, which says that the revenue maximizing randomized take-it-or-leave-it price that we derived in Theorem~\ref{thm:rl-random} is the optimal mechanism for a single buyer, assuming the regularity condition.  Due to space constraints, we will only give a proof sketch that demonstrates our duality framework\evn{, and we} defer the full proof to Appendix~\ref{sec:app_single_buyer_duality}.

\begin{theorem}\label{thm:single_buyer_regular}
	Consider a single risk-loving buyer with exponential utility whose value is drawn from a regular distribution $f$.  Then, the optimal mechanism is the revenue maximizing randomized take-it-or-leave-it price.
\end{theorem}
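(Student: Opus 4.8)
The plan is to establish a matching upper bound on the optimal value of the linear program \eqref{eq:lp_obj} (specialized to $n=1$, so we drop the buyer index) via Lagrangian weak duality, following the template of \citet{cai2016duality}, and then observe that the randomized take-it-or-leave-it price of Theorem~\ref{thm:rl-random} attains it. First I would form the partial Lagrangian that dualizes the BIC constraints \eqref{eq:lp_cbic} with multipliers $\lambda_{k,k'}\ge 0$ and the IR constraints \eqref{eq:lp_cir} with multipliers $\mu_k\ge 0$, while keeping the feasibility constraints \eqref{eq:lp_cf} as explicit constraints on the remaining optimization. After regrouping, the Lagrangian is a linear functional of the variables $y_j^0(v_k),y_j^1(v_k)$ whose coefficients depend only on $\{\lambda_{k,k'}\}$, $\{\mu_k\}$, $f$, and the utility values $u(v_k-z_j),u(-z_j)$; weak duality then says that for \emph{any} nonnegative choice of multipliers, maximizing this functional over the feasibility polytope is an upper bound on the revenue of every BIC and IR mechanism. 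So it suffices to exhibit one good choice of $\{\lambda_{k,k'}\},\{\mu_k\}$.

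Second, I would use the ``canonical'' dual assignment that concentrates the IC mass on downward-adjacent misreports, i.e. $\lambda_{k,k-1}>0$ for $k\ge 2$ and $\lambda_{k,k'}=0$ otherwise, together with a single multiplier $\mu_1$ on the IR constraint of the lowest type $v_1=0$, with magnitudes chosen so the resulting flow telescopes. The point at which the exponential utility becomes essential is the identity $\tfrac{u(v_k)}{u(v_k)-u(v_k-z_M)}=\tfrac{1-e^{-\alpha v_k}}{1-e^{-\alpha z_M}}$, i.e. the maximum acceptance probability of a $\{0,z_M\}$-valued price tolerated by type $v_k$ has a clean form and, more importantly, the exponential separates the dependence on the value $v_k$ and on the price $z_j$ multiplicatively (since $u(v_k-z_j)+\beta = \beta e^{\alpha v_k}e^{-\alpha z_j}$). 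This separability is what allows a \emph{single} dual solution to control all prices $z_j\in\setprices$ simultaneously, and it is exactly what fails for general convex $u$ --- consistent with the paper's later claim that the characterization does not extend beyond exponential utility. With these multipliers, substitution collapses the coefficient of the ``win and pay $z_M$'' variable $y_M^1(v_k)$ into $f(v_k)\,\phi_u(k)\,z_M$, with $\phi_u$ exactly the virtual value of Definition~\ref{risklovingvvf}, while the coefficients of every ``lose'' variable $y_j^0(v_k)$ and of every intermediate-price win variable $y_j^1(v_k)$ with $j<M$ become nonpositive (here the hypothesis $z_M>v_K$ is used).

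Third, given the regularity assumption that $\phi_u$ is monotone increasing, maximizing the reduced linear functional over the feasibility polytope \eqref{eq:lp_cf} is immediate: it is optimal to allocate the item and charge $z_M$ precisely to the types with $\phi_u(k)>0$, which by monotonicity form an up-set $\{v_{k^*},\dots,v_K\}$, and to leave the others with no allocation and no payment. The Lagrangian optimum then equals $\sum_{k\ge k^*} f(v_k)\,\phi_u(k)\,z_M$, which telescopes (by the definition of $\phi_u$) to $\tfrac{z_M\Pr[t\ge v_{k^*}]\,u(v_{k^*})}{u(v_{k^*})-u(v_{k^*}-z_M)}=\max_{v\in\setbvals}\tfrac{z_M\Pr[t\ge v]\,u(v)}{u(v)-u(v-z_M)}$, which is exactly the revenue of the revenue-maximizing randomized take-it-or-leave-it price established in Theorem~\ref{thm:rl-random}. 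Since that mechanism is feasible for \eqref{eq:lp_obj}, weak duality closes the gap, proving optimality.

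The main obstacle I expect is the dual-feasibility verification in the second step: checking that the stated multipliers are nonnegative and that, with them, the coefficients of \emph{all} the ``bad'' primal variables (losing at any price, and winning at any intermediate price $z_j$ with $j<M$) are nonpositive \emph{at once}. This is the only place the exponential form and the inequality $z_M>v_K$ are genuinely used, and it is where the argument would break for a general convex utility. A minor loose end to handle separately is the degenerate case $\phi_u(k)\le 0$ for all $k$, where $k^*$ is undefined, the bound gives revenue $0$, and the optimal mechanism simply makes no sale.
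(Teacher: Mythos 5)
Your proposal follows essentially the same route as the paper's proof: the same adjacent-downward multipliers $\lambda_{k,k-1}=\sum_{\ell\ge k}f(v_\ell)z_M/\bigl(u(v_k)-u(v_k-z_M)\bigr)$, the same use of the exponential separability $u(v_k-z_j)+\beta=\beta e^{\alpha v_k}e^{-\alpha z_j}$ to control all prices in $\setprices$ with a single dual solution, the same telescoping of $\sum_{k\ge k^*}f(v_k)\phi_u(k)z_M$ to the revenue of Theorem~\ref{thm:rl-random}, and weak duality to close the gap; placing the residual weight on $\mu_1$ rather than on a fictitious $\lambda_{1,0}$ is immaterial since $v_1=0$. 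One intermediate claim is wrong, though, and you should fix it before attempting the verification you correctly flag as the main obstacle: under these multipliers the coefficients of the ``lose'' variables and of the intermediate-price ``win'' variables are \emph{not} all nonpositive for $k\ge k^*$. What is actually true (the paper's Lemma~\ref{lemma:exp_two_price1}) is that $\Gamma_k(\cdot)$ and $\Pi_k(\cdot)$ are each either increasing or strongly convex in $z$, hence maximized over $\setprices$ at an endpoint, and the endpoint values are $\Gamma_k(0)=\Gamma_k(z_M)=f(v_k)\phi_u(k)z_M$, $\Pi_k(0)=0$, and $\Pi_k(z_M)=(1-e^{-\alpha z_M})f(v_k)\phi_u(k)z_M$ --- three of which are strictly positive above the reserve. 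The per-type maximum is still $\max\{0,f(v_k)\phi_u(k)z_M\}$, so your upper bound survives unchanged, but the feasibility check you need is domination by this maximum rather than nonpositivity. Indeed, the tie $\Gamma_k(0)=\Gamma_k(z_M)$ is not an accident: by complementary slackness it is precisely what forces the optimal mechanism to randomize the payment between $0$ and $z_M$ rather than charge $z_M$ deterministically upon winning (which would violate IR for types $v_k<z_M$).
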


In order to prove this theorem, first note that the revenue of the optimal mechanism is upper-bounded by the Lagrangian dual program of the linear program \eqref{eq:lp_obj}.  For the single buyer case the dual program of \eqref{eq:lp_obj} can be simplified as
\begin{align}
\text{Minimize} \quad &\sum_{k \in [K]} \nu_k & \label{eq:mech_dual}
\end{align}
Subject to the following constraints:
\begin{align}
& f(v_k)z_j + \sum_{k'}\left(\lambda_{kk'}u(v_k-z_j)-\lambda_{k'k}u(v_{k'}-z_j)\right)+\mu_k u(v_k-z_j) \le \nu_k, &\forall k \in [K],j \in [M] \nonumber \\
& f(v_k)z_j + \sum_{k'}\left(\lambda_{kk'}-\lambda_{k'k}\right)u(-z_j) + \mu_ku(-z_j) \le \nu_k, &\forall k \in [K], j \in [M] \nonumber\\
& \mu_k \ge 0, \lambda_{kk'} \ge 0, &\forall k,k' \in [K] \nonumber,
\end{align}
where $\lambda_{kk'}$ corresponds to the BIC constraints  \eqref{eq:lp_cbic}, $\mu_k$ corresponds to the IR constraints  \eqref{eq:lp_cir}, and $\nu_k$ corresponds to the feasibility constraints  \eqref{eq:lp_cf}.  For the first two sets of the constraints in \evn{program} \eqref{eq:mech_dual}, we define
\begin{align*}
\Gamma_k(z;\lambda,\mu) &=f(v_k)z + \sum_{k'}\left(\lambda_{kk'}u(v_k-z)-\lambda_{k'k}u(v_{k'}-z)\right)+\mu_k u(v_k-z) \\
\Pi_k(z;\lambda,\mu) &=f(v_k)z + \sum_{k'}\left(\lambda_{kk'}-\lambda_{k'k}\right) u(-z)+\mu_k u(-z).
\end{align*}
In the dual program \eqref{eq:mech_dual}, the constraint $\Gamma_k(z;\lambda,\mu) \le \nu_k$ corresponds to the variable $y_{1,j}^1(v_k)$ in the primal \eqref{eq:lp_obj} and the constraint $\Pi_k(z;\lambda,\mu) \le \nu_k$ corresponds to the variable $y_{1,j}^0(v_k)$.  In fact, for any $k \in [K]$, we can show that among the sets of constraints $\{\Gamma_k(z_j;\lambda,\mu) \le \nu_k\}_{j \in [M]}$ and $\{\Pi_k(z_j;\lambda,\mu) \le \nu_k\}_{j \in [M]}$, only the following four can be binding: $\Gamma_k(0;\lambda,\mu)\le \nu_k$,  $\Gamma_k(z_M;\lambda,\mu)\le \nu_k$, $\Pi_k(0;\lambda,\mu)\le \nu_k$, and $\Pi_k(z_M;\lambda,\mu)\le \nu_k$.  This can be observed by the following lemma:
\begin{lemma} \label{lemma:exp_two_price1}
	In the dual program \eqref{eq:mech_dual}, both $\Gamma_k(z;\lambda,\mu)$ and $\Pi_k(z;\lambda,\mu)$ are either increasing or strongly convex in $z$ for $z \ge 0$.
\end{lemma}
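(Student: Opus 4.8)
The plan is to substitute the exponential utility $u(x)=\beta(e^{\alpha x}-1)$ and regard $\Gamma_k$ and $\Pi_k$ as smooth functions of the real variable $z$ on $[0,z_M]$ (recall prices only range over $\setprices\subseteq[0,z_M]$). Writing $u(c-z)=\beta e^{\alpha c}e^{-\alpha z}-\beta$, every such term has second $z$-derivative $\alpha^2\beta e^{\alpha c}e^{-\alpha z}$, while the linear term $f(v_k)z$ contributes nothing to the second derivative. Differentiating twice therefore gives $\Gamma_k''(z)=\alpha^2\beta\,e^{-\alpha z}\,C_k$ with $C_k=e^{\alpha v_k}\bigl(\sum_{k'}\lambda_{kk'}+\mu_k\bigr)-\sum_{k'}\lambda_{k'k}e^{\alpha v_{k'}}$, and, since $\Pi_k(z)=f(v_k)z+D_k\,u(-z)$ with $D_k=\sum_{k'}(\lambda_{kk'}-\lambda_{k'k})+\mu_k$, likewise $\Pi_k''(z)=\alpha^2\beta\,e^{-\alpha z}\,D_k$. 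The crucial structural point is that $C_k$ and $D_k$ do not depend on $z$, so each second derivative keeps a constant sign throughout $[0,z_M]$; hence each of $\Gamma_k,\Pi_k$ is either globally convex or globally concave in $z$.

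From here I would split into two cases for each function. If $C_k>0$, then since $e^{-\alpha z}\ge e^{-\alpha z_M}>0$ on $[0,z_M]$ we get $\Gamma_k''(z)\ge\alpha^2\beta e^{-\alpha z_M}C_k>0$, so $\Gamma_k$ is strongly convex there. If $C_k\le 0$, I would instead use the first derivative $\Gamma_k'(z)=f(v_k)-\alpha\beta e^{-\alpha z}C_k$: now the subtracted term is nonnegative, hence $\Gamma_k'(z)\ge f(v_k)>0$, where strict positivity uses the standing assumption $f(v)>0$ for all $v\in\setbvals$; thus $\Gamma_k$ is strictly increasing (the borderline $C_k=0$ falls here, $\Gamma_k$ being affine with slope $f(v_k)>0$). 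Repeating the argument verbatim with $D_k$ in place of $C_k$ handles $\Pi_k$, which is in fact simpler since it involves only the single term $u(-z)$.

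The argument is essentially routine differentiation, so I do not anticipate a real obstacle; the one place worth flagging is the concave case $C_k<0$ (resp. $D_k<0$), where $\Gamma_k$ (resp. $\Pi_k$) is globally concave in $z$ and one might worry monotonicity could fail. What rescues it is that the $z$-derivative of the $u$-part, namely $-\alpha\beta e^{-\alpha z}C_k$, then has the same (nonnegative) sign as the linear coefficient $f(v_k)$, so the strict positivity of $f(v_k)$ alone forces $\Gamma_k'>0$ regardless of how negative $C_k$ is. The only bookkeeping to keep straight is organizing the two cases cleanly and remembering that restricting $z$ to $[0,z_M]$ (which is all we need, since $\setprices\subseteq[0,z_M]$) is precisely what upgrades "convex" to "strongly convex".
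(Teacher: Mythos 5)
Your proof is correct and follows essentially the same route as the paper's: both isolate the $z$-dependence as $f(v_k)z + \beta C e^{-\alpha z} + \text{const}$ with your $C_k, D_k$ coinciding exactly with the paper's $B_k, A_k$, and then split on the sign of that coefficient (positive gives convexity, nonpositive gives derivative at least $f(v_k)>0$). Your explicit handling of the concave case via the first derivative, and the remark that strong convexity needs the restriction to the compact range $[0,z_M]$, are slightly more careful than the paper's terse statement but not a different argument.
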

The proof of this lemma can be found in Appendix~\ref{sec:app_single_buyer_duality}.
Now, we are ready to demonstrate how we can construct a set of dual variables that helps prove our main theorem in the following proof sketch.  Due to space constraints, the full calculation can be found in Appendix~\ref{sec:app_single_buyer_duality}.
\begin{proof}[Proof sketch of Theorem~\ref{thm:single_buyer_regular}]
	To prove the theorem, we construct a set of feasible dual variables that upper bound the value of the dual program \eqref{eq:mech_dual} by the revenue of the optimal take-it-or-leave-it randomized price.
	For any $k,k' \in [K]$, we set $\mu_k=0$ and 
	\begin{equation}
	\lambda_{kk'} = 
	\begin{cases}
	\sum_{\ell \ge k} f(v_{\ell}) z_M \frac{1}{u(v_k)-u(v_k-z_M)}, &\text{if }  k' = k-1\\
	0, &\text{otherwise}.
	\end{cases}
	\label{eq:single_lambda}
	\end{equation}
	For simplicity, we define $k^* = \min\{k: \phi_u(k)>0\}$. For any $k \in [K]$, we set
	\[
	\nu_k = \max \left\{ 0, f(v_k)\phi_u(k)z_M  \right\}.
	\]
	Under this choice of $\nu_k$, given the regularity condition, we can write the objective of the dual program \eqref{eq:mech_dual} as
	\[
	\sum_{k} \nu_k = \sum_{k: \phi_u(k)>0} f(v_k) \phi_u(k) z_M = \sum_{k \ge k^*} f(v_k) \phi_u(k) z_M.
	\]
	Then, by the definition of the virtual value function, we have
	\begin{align}
	\sum_{k} \nu_k &= \sum_{k \ge k^*} \left(\frac{ \sum_{k' \ge k}f(v_{k'}) u(v_k)}{u(v_k)-u(v_k-z_M)} -  \frac{\sum_{k' \ge {k+1}}f(v_{k'}) u(v_{k+1})}{u(v_{k+1})-u(v_{k+1}-z_M)} \right) z_M \nonumber \\
	&= \sum_{k \ge k^*}f(v_{k})z_M \cdot \frac{u(v_{k^*})}{u(v_{k^*})-u(v_{k^*}-z_M)},\label{eq:dual_obj_val}
	\end{align}
	where the last equality results from taking a telescopic sum.  We can find that \evn{the expression in the last line} \eqref{eq:dual_obj_val} is equal to the revenue of the optimal take-it-or-leave-it mechanism.
	
	It remains to verify that this choice of the dual variables is feasible for the dual program \eqref{eq:mech_dual}.  	
	Since we have argued that the sets of constraints $\{\Gamma_k(z_j;\lambda,\mu) \le \nu_k\}_{j \in [M]}$ and $\{\Pi_k(z;\lambda,\mu) \le \nu_k\}_{j \in [M]}$ can only be binding at $j=0$ and $j=M$, it suffices to check whether $\Gamma_k(0;\lambda,\nu) \le \nu_k$, $\Gamma_k(z_M;\lambda,\nu) \le \nu_k$, $\Pi_k(0;\lambda,\nu) \le \nu_k$, and $\Pi_k(z_M;\lambda,\nu) \le \nu_k$.  We defer the \evn{detailed} derivation to the full proof in Appendix~\ref{sec:app_single_buyer_duality}.  Here, we only show that by bringing our choice of the dual variables into these constraints, we have
	\begin{align*}
	\Gamma_k(0;\lambda,\mu) &= f(v_k) \phi_u(k)z_M , \quad
	&\Gamma_k(z_M;\lambda,\mu) &= f(v_k) \phi_u(k)z_M, \\
	\Pi_k(0;\lambda,\mu) &= 0,
	&\Pi_k(z_M;\lambda,\mu) &= (1-e^{-\alpha z_M}) f(v_k)\phi_u(k)z_M.
	\end{align*}
	By definition of $\nu_k$, we can find that this assignment of dual variables is feasible. Therefore, weak duality implies that the objective value of  \evn{program} \eqref{eq:lp_obj} is upper bounded by the revenue maximizing take-it-or-leave-it randomized price, which shows that this mechanism is optimal.
	
	In addition, we can find that given $k \in [K]$, as long as $\phi_u(k)>0$, the only binding dual constraints are $\Gamma_k(0;\lambda,\mu) \le \nu_k$ and $\Gamma_k(z_M;\lambda,\mu) \le \nu_k$. Therefore, by complementary slackness, in the optimal mechanism, the pricing scheme must be a randomization of $0$ and $z_M$, and must ask the buyer to pay only if she is given the item.  This coincides with the revenue-maximizing randomized take-it-or-leave-it price, which is our claimed optimal primal solution.
	
\end{proof}

\subsection{Optimal Mechanism beyond Regularity Condition}
\label{sec:single_ironing}
In the next step, we would like to extend our duality framework described in Theorem~\ref{thm:single_buyer_regular} to the case without a regularity condition.  Namely, we would like to prove that the revenue-maximizing take-it-or-leave-it randomized price is still optimal even though the virtual value function is not an increasing function, using the similar argument that we have made in Section~\ref{sec:sb_exp1}.  Formally, we would like to prove the following theorem:

\begin{theorem}\label{thm:mec_exputil}
	Consider a single risk-loving buyer with exponential utility.  The optimal mechanism is the revenue-maximizing randomized take-it-or-leave-it price.
\end{theorem}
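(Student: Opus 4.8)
The plan is to adapt Myerson's ironing to the duality framework of Theorem~\ref{thm:single_buyer_regular}: replace the virtual value $\phi_u$ of Definition~\ref{risklovingvvf} by a monotone ``ironed'' version, then exhibit a matching feasible solution of the dual \eqref{eq:mech_dual} exactly as in the regular case. The starting point is that all the relevant quantities live on a single one-dimensional revenue curve. With $q_k:=\Pr[t\ge v_k]=\sum_{k'\ge k}f(v_{k'})$ (so $q_1=1$, $q_{K+1}:=0$, $f(v_k)=q_k-q_{k+1}$), set
\[
S(k)\;:=\;\frac{z_M\,q_k\,u(v_k)}{u(v_k)-u(v_k-z_M)},\qquad S(K+1):=0.
\]
By Theorem~\ref{thm:rl-random} this is the revenue of the randomized take-it-or-leave-it price with threshold $v_k$, so the optimal such revenue is $S^\ast:=\max_k S(k)$, and Definition~\ref{risklovingvvf} rewrites as the telescoping identity $f(v_k)\,\phi_u(k)\,z_M=S(k)-S(k+1)$. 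In the regular proof $\sum_k\nu_k=\sum_k\max\{0,S(k)-S(k+1)\}$ collapses to $S^\ast$ precisely because regularity forces $S$ to be unimodal along the quantile axis; for an irregular distribution $S$ can oscillate and that dual is no longer tight, so it must be ironed.

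I would iron the curve $S$. Let $\hat S$ be the upper concave envelope on $[0,1]$ of the points $\{(q_k,S(k))\}_{k\in[K+1]}$ — a piecewise-linear concave function with breakpoints among the $q_k$ — and define the ironed virtual value by $f(v_k)\,\bar\phi_u(k)\,z_M:=\hat S(q_k)-\hat S(q_{k+1})$. Three elementary facts do the bookkeeping: (i) $\bar\phi_u$ is increasing in $k$, since concavity of $\hat S$ makes its segment slopes $(\hat S(q_k)-\hat S(q_{k+1}))/(q_k-q_{k+1})$ nondecreasing as $q$ decreases; (ii) the constant function $S^\ast$ is a concave majorant of all $S(k)$, so $\hat S\le S^\ast$ everywhere with equality at the peak of $S$, whence $\sum_k\max\{0,f(v_k)\bar\phi_u(k)z_M\}=\hat S(q_{\bar k^\ast})=S^\ast$ for $\bar k^\ast:=\min\{k:\bar\phi_u(k)>0\}$; and (iii) on each maximal ironed interval $I=\{a,\dots,b\}$ the quantity $\bar\phi_u(k)z_M$ equals the constant chord slope $(S(a)-S(b+1))/(q_a-q_{b+1})$, which sits below $S(k)/q_k=z_M\,u(v_k)/(u(v_k)-u(v_k-z_M))$ for all $k\in I$ because the latter is increasing in $k$.

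With these in hand I would repeat the dual construction from the proof sketch of Theorem~\ref{thm:single_buyer_regular}, now with $\nu_k:=\max\{0,f(v_k)\bar\phi_u(k)z_M\}$ and a flow $\lambda$ (keeping $\mu_k\equiv 0$, possibly with a few slack multipliers $\mu_k>0$ at the interval endpoints) that coincides with the adjacent chain \eqref{eq:single_lambda} away from the ironed intervals and, inside each ironed interval, activates both incentive directions between consecutive types so as to force $\Gamma_k(0;\lambda,\mu)=\Gamma_k(z_M;\lambda,\mu)=f(v_k)\bar\phi_u(k)z_M$ for every $k$. The two $\Pi_k$ constraints then come for free: $\Pi_k(0;\lambda,\mu)=0$ always, and the exponential form of $u$ yields the identity $\Pi_k(z_M;\lambda,\mu)=\Gamma_k(z_M;\lambda,\mu)-e^{-\alpha z_M}\,\Gamma_k(0;\lambda,\mu)$ for arbitrary dual variables, so $\Pi_k(z_M;\lambda,\mu)=(1-e^{-\alpha z_M})\,\Gamma_k(0;\lambda,\mu)$ lies between $0$ and $\Gamma_k(0;\lambda,\mu)\le\nu_k$. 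By Lemma~\ref{lemma:exp_two_price1} only the four constraints at $z\in\{0,z_M\}$ can bind, so this dual solution is feasible; weak duality bounds the value of \eqref{eq:lp_obj} by $\sum_k\nu_k=S^\ast$, and since Theorem~\ref{thm:rl-random} exhibits a randomized take-it-or-leave-it price attaining revenue $S^\ast$, that price is optimal; as in the regular case the complementary-slackness structure (only the $\Gamma_k$ constraints at $z\in\{0,z_M\}$ can be tight) further confirms that an optimal mechanism need only randomize the price between $0$ and $z_M$ and charge it upon allocation.

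The step I expect to be the main obstacle is the explicit flow $\lambda$ inside an ironed interval. Unlike the regular case, where one closed-form chain works globally, the chain magnitudes over an ironed interval satisfy a coupled, two-sided recursion, and one must verify both that all multipliers stay nonnegative and that the certificate glues continuously onto the outside chain at the interval's two endpoints. The reason this goes through — and one place where the exponential form is crucial — is that the monotonicity of $S(k)/q_k$ combines with the elementary identities $u(v)-u(v-z_M)=\beta e^{\alpha v}(1-e^{-\alpha z_M})$, $u(-z_M)=-\beta(1-e^{-\alpha z_M})$, and $u(v-z_M)-u(-z_M)=e^{-\alpha z_M}u(v)$ to pin the signs of the recursion's coefficients. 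A minor wrinkle is that an ironed interval may straddle the sign change $\bar k^\ast$ of $\bar\phi_u$; the $\max\{0,\cdot\}$ in the definition of $\nu_k$ absorbs this exactly as in the regular proof.
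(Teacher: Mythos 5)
Your proposal follows essentially the same route as the paper's proof: iron the virtual value (your concave envelope of the revenue curve $S$ in quantile space is equivalent to the paper's convexification of the cumulative virtual value), then add two-directional ``loops'' to $\lambda$ inside each ironed interval, calibrated by the $e^{-\alpha v}$ factors so that $\Gamma_k(0)=\Gamma_k(z_M)=f(v_k)\widetilde{\phi}_u(k)z_M$ and $\Pi_k(z_M)=(1-e^{-\alpha z_M})\Gamma_k(0)$ are preserved, and finally telescope $\sum_k\nu_k$ to the revenue of the optimal randomized take-it-or-leave-it price. The step you flag as the main obstacle (explicit nonnegative loop magnitudes gluing onto the outside chain) is left at the same level of detail in the paper itself, and it does go through because the cumulative excess $\hat S(q_{k+1})-S(k+1)\ge 0$ within each ironed interval lets one realize the redistribution by nonnegative adjacent transfers.
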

The technique that we use to prove this theorem consists of two steps.  First, we construct an \emph{ironed virtual value} $\widetilde{\phi}_u$.  The ironed virtual value function is an increasing function constructed based on the original virtual value function, which is similar to the risk-neutral case as in Myerson's work.  After that, we can slightly modify the dual variables that we specified in Section~\ref{sec:sb_exp1} to match the ironed virtual value and then claim the optimality of the revenue-maximizing randomized take-it-or-leave-it price by strong duality.

Surprisingly, having a non-linear utility function does not complexify the ironing process.  We can directly apply Myerson's ironing for a risk-loving buyer by \emph{convexifying} the cumulative virtual value function.  Formally, we can define the ``ironed virtual value'' function in the following way:
\begin{definition}[Ironed virtual value for a single buyer]
	Given a virtual value function $\phi_u$.  Let $\{[a_1,b_1],[a_2,b_2],\dots,[a_m,b_m]\}$ denote the intervals that are not convex on the cumulative virtual value function $F_{\phi}(k) = \sum_{\ell \le k} \phi_u(\ell)$.  Then, the ironed virtual value function is defined as
	\[
	\widetilde{\phi}_{u}(k) = 
	\begin{cases}
	\frac{ \sum_{\ell = a_i}^{b_i} f(v_{\ell}) \phi_u(\ell) }{ \sum_{\ell = a_i}^{b_i} f(v_{\ell}) } , &\text{if }  k \in [a_i,b_i], \text{ for any } i \in [m]\\
	\phi_u(\ell), &\text{otherwise}.
	\end{cases}
	\]
\end{definition}
Although the ironing process is straightforward, how to modify the dual variables to match the ironed virtual value is more tricky.  Recall that under the choice of $\lambda$ as specified in \eqref{eq:single_lambda}, the left hand side of the first dual constraint is upper-bounded by the virtual value, i.e. $\Gamma_k(0;\lambda,\mu) = \Gamma_k(z_M;\lambda,\mu) = f(v_k)\phi_u(k)z_M$.  Motivated by \cite{cai2016duality}, we show that we can add \emph{loops} to $\lambda$ to alter the value of $\Gamma_k$.\footnote{In \cite{cai2016duality}, the dual variables can be interpreted as \emph{flows}.  However, in our setting, this interpretation no longer holds.  We need to handle the distortion caused by the non-linear utility function. }  More precisely, consider some $k' < k$.  If we add $Ae^{-\alpha v_{k}}$ to $\lambda_{k,k'}$ and add $Ae^{-\alpha v_{k'}}$ to $\lambda_{k',k}$ for some $A>0$, then $\Gamma_{k}(0;\lambda,\mu)$ is increased by $A[e^{-\alpha v_{k'}} - e^{-\alpha v_{k}}]$ and $\Gamma_{k'}(0;\lambda,\mu)$ is decreased by $A[e^{-\alpha v_{k'}} - e^{-\alpha v_{k}}]$. This distorted modification guarantees that the binding structure does not change, i.e. $\Gamma_k(0;\lambda,\mu)=\Gamma_k(z_M;\lambda,\mu)$ and $\Gamma_{k'}(0;\lambda,\mu)=\Gamma_{k'}(z_M;\lambda,\mu)$.  With this idea, we are able to apply Myerson's ironing by iterative adding loops to $\lambda$ in the way such that $\Gamma_k(0;\lambda,\mu) = f(v_k)\widetilde{\phi}_u(k) z_M$ holds after the modification.  After that, we are able to follow the steps in the proof of Theorem~\ref{thm:main_single_buyer} to prove the theorem.  Due to the space constraint, we defer the detailed proof of Theorem~\ref{thm:mec_exputil} to Appendix~\ref{sec:app_iron_single}.  
\section{Optimal Mechanism Design with Multiple Symmetric Buyers} \label{sec:multi-buyer}

In this section, we extend our analysis of optimal mechanism design to multiple symmetric buyers ($n\ge 2$).  Since the buyers are symmetric, their values come from the same distribution and they have the same utility function.
We show that the \emph{loser-pay auction} achieves the maximum revenue in the multiple-buyer case. In a loser-pay auction, the \evn{buyer with the highest value wins the item and only the buyers that do not obtain the item pay}. In addition,  similarly to the randomized take-it-or-leave-it pricing, all payments are made using a mixing of the minimum and the maximum price. This auction could be thought of as implementing an incentive-compatible version of the all-pay auction but adjusting it to achieve maximum discrepancy between the two outcomes. 

When characterizing the revenue maximizing mechanisms, our analysis is similar to the analysis in Section~\ref{sec:sb_exp1} in that it uses the virtual value formulation and the duality framework to upper bound the revenue obtained by the optimal mechanism. In what follows, we first give an example of the loser-pay auction and show how it improves the revenue compared to the second price auction, which is optimal for risk-neutral buyers.

\subsection{An Example}

\begin{example} \label{ex:two_buyers}
	Consider two buyers. Assume the private values of both buyers are distributed \evn{independently} according to the uniform distribution $U(\{0,1\})$, and $u(x)=e^{\alpha x}-1$.  Also assume $3(e^{\alpha x}-1) < 1- e^{-\alpha z_M}$.  Consider the following mechanism:
	\begin{enumerate}
		\item The item is allocated to the buyer who reports the higher value.  \evn{If both buyers} report $1$, the item is allocated uniformly at random.  \evn{If both buyers} report $0$, the item is not allocated to anyone.
		\item If a buyer reports $1$, she gets the item and does not pay anything.  However, if she reports $1$ and she does not get the item, then she pays $z_M$ with probability $\frac{3(e^{\alpha}-1)}{1-e^{-\alpha z_M}}$.
	\end{enumerate}  
	To verify this mechanism is BIC and IR, we check the utility curve if a buyer reports $1$ to the seller.  Consider buyer $1$. Given her true value is $t_1$, her expected utility is
	\[
	x_1(2) u(t_1) + (1-x_1(2)) \frac{3(e^{\alpha} -1)}{1-e^{-\alpha z_M}} u(-z_M) = \frac{3}{4} \left(e^{\alpha t_1}-1\right) - \frac{3}{4}(e^{\alpha}-1), 
	\]
	which is $0$ if $t_1=1$ and $-\frac{3}{4}(e^{\alpha}-1)$ if $t_1=0$.  This verifies that the mechanism is BIC and IR.  Next, we can find that the revenue of this mechanism is
	\[
	\text{Rev} = \sum_{i \in \{1,2\}} \Pr[t_i = 1] (1-x_i(2)) \frac{3(e^{\alpha}-1)}{1-e^{-\alpha z_M}} z_M = \frac{3}{4} \frac{e^{\alpha}-1}{1-e^{-\alpha z_M}} z_M.
	\]
\end{example}
From the above example, we make two observations.  First, we find that compared with the case of a single buyer, the revenue is increased by a factor of $\frac{3}{2}e^{\alpha}$.  Note that in the case of risk-neutral buyers, this factor is only $\frac{3}{2}$.  The additional factor of $e^{\alpha}$ comes from the second observation --- the buyer \emph{pays if she does not get the item}.  In the rest of this section, we show that these two properties hold in the optimal mechanism.

\subsection{The Loser-pay Auction}
\label{sec:multi_buyer_intro}
Recall that in the setting with risk-neutral buyers, the second-price auction with reserve price is optimal.  A natural \evn{question here} is, when the buyers are risk-loving, does the optimal mechanism take a similar form?  We show that, given the assumption that $z_M \gg \alpha$, i.e., the maximum allowed price is far greater than the level of risk-loving, the optimal mechanism corresponds to the revenue maximizing  loser-pay auction.  From Example~\ref{ex:two_buyers}, we already know that the optimal payment rule is to ask the buyer who loses to pay a randomized price.  The reserve price in our risk-loving setting, similarly to the risk-neutral setting, is going to be computed via a \emph{virtual value function}.

We have already defined a virtual value function for the convex utility function in Section~\ref{sec:sb_exp1}.  However, recall that the virtual value function is the marginal revenue in the quantile space.  According to Example~\ref{ex:two_buyers}, there is an additional $e^{\alpha}$ factor in the revenue of the multi-buyer setting, hence now we need a different definition of the virtual value function than in the single buyer case.    \evn{Specifically}, in the multi-buyer setting, we need to consider the following new virtual value function that takes this $e^{\alpha}$ factor into account:
\begin{definition}[Virtual value function for multiple buyers]
	\label{defn:virtual_value_multi_buyers}
	In the multi-buyer case, the virtual value function $\Phi_u: [K] \rightarrow \R$ with respect to an exponential utility function $u$ is defined as
	$$
	\Phi_u(k) = \frac{1}{f(v_k)} \left( \sum_{k' \ge k}f(v_{k'}) \cdot \frac{e^{\alpha v_k} u(v_k)}{u(v_k)-u(v_k-z_M)} - \sum_{k' \ge {k+1}}f(v_{k'}) \cdot \frac{e^{\alpha v_{k+1}} u(v_{k+1})}{u(v_{k+1})-u(v_{k+1}-z_M)} \right).
	$$
\end{definition}
In Section~\ref{sec:multi_duality}, we will show that this additional $e^{\alpha}$ factor comes from the \emph{competition} that only happens when there are multiple buyers.  Similar to the single buyer case, we call a distribution $f$ a \emph{regular distribution} if its corresponding virtual value function is a monotone increasing function.  If $f$ is not regular, then we can apply Myerson's ironing to the new virtual value function as what we have described in Section~\ref{sec:single_ironing}.  We let $\widetilde{\Phi}_u$ denote the ironed virtual value of $\Phi_u$.  The formal definition of $\widetilde{\Phi}_u$ can be found in Appendix~\ref{sec:app_iron_multiple}.

Now, we are ready to describe the loser-pay auction --- the mechanism that we claim to be optimal when the buyers are risk-loving:
\begin{definition}[The loser-pay auction]
	A loser-pay auction is a direct mechanism with the following allocation and payment rule:
	\begin{enumerate}
		\item Suppose each buyer $i$ bids $v_{k_i}$. Then, the auctioneer allocates the item to buyer $i$ if $\widetilde{\Phi}_u(k_i)>\widetilde{\Phi}_u(k_{i'})$ for every other buyer $i' \not= i$ provided $\widetilde{\Phi}_u(k_i)>0$.  
		\item Suppose each buyer $i$ bids $v_{k_i}$.  If buyer $i$ submits the bid with the largest ironed virtual value and ties with $n_t-1$ other buyers, then she gets the item with probability $1/n_t$ provided $\widetilde{\Phi}_u(k_i)>0$.
		\item If buyer $i$ bids $v_k$ and gets the item, she pays nothing.
		\item If buyer $i$ bids $v_k$ with $\widetilde{\Phi}_u(k) \le 0$, she pays nothing, i.e. $q_i(k)=0$.
		\item If buyer $i$ bids $v_k$ with $\widetilde{\Phi}_u(k) >0$ and does not get the item, then she pays $z_M$ with probability
		\begin{align*}
		q_i(k) = \frac{1}{1-x_i(k)}\sum_{k' = k^*}^{k} \frac{[x_i(k')-x_i(k'-1)]u(v_{k'})}{-u(-z_M)},
		\end{align*}
		where $k^* = \min\{k \in [K]: \Phi_u(k)>0\}$ is the index of the reserve price.
	\end{enumerate}
\end{definition}
For simplicity, in this section, we use $\vv{k}=(k_1,\dots,k_n)$ to denote the indices of bids from all buyers.  Also, we use $f(\vv{k}) = \prod_{i \in [n]} f(v_{k_i})$ to denote the probability that $\tv=(v_{k_1},v_{k_2},\dots,v_{k_n})$.
If buyer $i$ submits her bid $v_k$ that is no less than the reserve price $v_{k^*}$, then the interim probability that she gets the item is 
$$ x_i(k)=\sum_{\vv{k}_{-i} \in [K]^{n-1}} f(\vv{k}_{-i})\left[ \frac{\mathbbm{1}\left\{ \widetilde{\Phi}(v_{k_i}) \ge \widetilde{\Phi}(v_{k_{i'}}), \forall i' \not= i \right\}}{\sum_{i' \in [n]} \mathbbm{1}\left\{ \widetilde{\Phi}(v_{k_i}) = \widetilde{\Phi}(v_{k_{i'}}) \right\}} \right], $$
where $\mathbbm{1}\{A\}$ is the indicator function that equals  $1$ if event $A$ is true, and $0$ otherwise.  In order to guarantee that the payment rule of the loser-pay auction is feasible, we need to make the following assumption, which plays an important role in guaranteeing $q_i(k) <1$:

\noindent
(A1) $z_M \gg \alpha$ so that for each $v \in \setbvals$, it holds that $\frac{1-\frac{1}{n}f(v)}{\frac{1}{n}f(v)} \cdot (e^{\alpha v} - 1) < 1- e^{-\alpha z_M}$.

We show that the loser-pay auction is feasible, individually rational, and Bayesian incentive compatible under Assumption~(A1).  Due to space constraints, we defer the proof of the following theorem to Appendix~\ref{sec:app_multi_buyer_ppt}.

\begin{theorem} \label{lem:multibuyer_mech_feasible}
Consider $n \ge 2$ buyers with exponential utility.  With Assumption~(A1), the loser-pay auction is feasible, individually rational, and Bayesian incentive compatible.
\end{theorem}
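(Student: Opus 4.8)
The plan is to reduce the statement to two structural facts about the interim allocation $x_i(\cdot)$ of the loser-pay auction together with a short computation. First I would establish that (i) $x_i(k)$ is non-decreasing in $k$, and (ii) $x_i(k)=0$ whenever $\widetilde{\Phi}_u(k)\le 0$, in particular for $k<k^*$. Both follow directly from the allocation rule: the item goes to the bidder with the largest ironed virtual value, uniformly among ties and only if that value is positive, $\widetilde{\Phi}_u$ is monotone in the bid index by construction of the ironing, and for any fixed profile of competing bids the probability of winning is a non-decreasing step function of one's own ironed virtual value (it is $0$ until one reaches the competitors' maximum, then a positive fraction at a tie, then $1$). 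With (i)--(ii) in hand I would substitute the payment rule into the expected utility of a truthful bid $v_k$ with $\widetilde{\Phi}_u(k)>0$, namely $x_i(k)u(v_k)+(1-x_i(k))q_i(k)u(-z_M)$, cancel $-u(-z_M)$, and apply the telescoping identity $x_i(k)=\sum_{\ell=k^*}^{k}(x_i(\ell)-x_i(\ell-1))$ to rewrite this utility as $\sum_{\ell=k^*}^{k}(x_i(\ell)-x_i(\ell-1))\,(u(v_k)-u(v_\ell))$; for $k<k^*$ the truthful utility is $0$.

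Individual rationality is then immediate, since in this identity every summand is non-negative ($x_i$ monotone, $u$ increasing, $\ell\le k$). For Bayesian incentive compatibility I would carry out the same substitution for a deviating report $v_{k'}$ while the true value is $v_k$: since allocation and loss-payment depend only on the report while only the factor $u(v_k)$ depends on the true value, the deviating utility equals $\sum_{\ell=k^*}^{k'}(x_i(\ell)-x_i(\ell-1))(u(v_k)-u(v_\ell))$ when $\widetilde{\Phi}_u(k')>0$, and $0$ otherwise. Comparing, the truthful and deviating utilities differ only by the terms between indices $k$ and $k'$: when $k'>k$ each such extra term has $u(v_k)-u(v_\ell)\le 0$, and when $k'<k$ one is discarding terms with $u(v_k)-u(v_\ell)\ge 0$, so in both cases truthtelling weakly dominates; deviating into the region $\widetilde{\Phi}_u\le 0$ gives $0$, dominated by IR. This is the standard monotone-allocation argument, the only subtlety being to keep the $u(v_k)$ factor fixed through the telescoping.

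The substantive part is feasibility, and specifically the bound $q_i(k)\le 1$, which is exactly where Assumption (A1) enters. Feasibility of the allocation is clear (at most one winner, tie probabilities summing to one), and $q_i(k)\ge 0$ follows from monotonicity of $x_i$ together with $u(v_{k'})\ge 0>u(-z_M)$. For the upper bound I would estimate $(1-x_i(k))q_i(k)$ crudely using $u(v_{k'})\le u(v_k)$ for $k'\le k$ and telescope, obtaining $q_i(k)\le \frac{x_i(k)}{1-x_i(k)}\cdot\frac{e^{\alpha v_k}-1}{1-e^{-\alpha z_M}}$; it then suffices to show $1-x_i(k)\ge \frac1n f(v_k)$, after which (A1) closes the argument. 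For this allocation bound I would observe that whenever at least one other buyer reports $v_k$, buyer $i$ fails to win with probability at least $\frac12$ (she is tied with, or beaten by, a competitor of at least equal ironed virtual value), hence $1-x_i(k)\ge \frac12\bigl(1-(1-f(v_k))^{n-1}\bigr)$, and conclude with the elementary inequality $\frac12(1-(1-p)^{n-1})\ge p/n$ valid for $n\ge 2$ (both sides vanish at $p=0$, the left side has the larger derivative there and is concave, and is at least $1/n$ at $p=1$).

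I expect this $q_i(k)\le 1$ step to be the main obstacle: it forces one to pair the crude payment estimate with a lower bound on the losing probability $1-x_i(k)$ that is tight enough to be absorbed by (A1), and as Example~\ref{ex:two_buyers} shows --- there $1-x_1(2)=\frac14=\frac1n f(1)$ exactly --- there is essentially no slack to spare. A secondary technical point is the bookkeeping around ironing: one needs that $\widetilde{\Phi}_u$ is monotone and is non-positive precisely on the indices below $k^*$, so that the allocation rule and the summations $\sum_{\ell=k^*}^{k}$ in the payment rule remain mutually consistent; I would invoke this property of the ironed virtual value rather than re-derive it.
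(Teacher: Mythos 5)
Your proposal is correct and follows essentially the same route as the paper's proof: the paper establishes IR and BIC by induction on $k$ via the recursion $U_{i,k}(v_k)=U_{i,k-1}(v_k)$ together with the single-crossing property of the utility curves, which is exactly the local form of your global telescoping identity $U_{i,k'}(v_k)=\sum_{\ell=k^*}^{k'}\bigl(x_i(\ell)-x_i(\ell-1)\bigr)\bigl(u(v_k)-u(v_\ell)\bigr)$, and it proves feasibility by the identical chain $q_i(k)\le \frac{x_i(k)}{1-x_i(k)}\cdot\frac{u(v_k)}{-u(-z_M)}$ followed by Assumption~(A1). The one point where you go beyond the paper is the explicit justification of the allocation bound $1-x_i(k)\ge \frac{1}{n}f(v_k)$, which the paper uses implicitly in the step $\frac{x_i(k)}{1-x_i(k)}\le\frac{1-\frac{1}{n}f(v_k)}{\frac{1}{n}f(v_k)}$; your argument for that bound is valid.
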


\subsection{Duality Theory for Optimal Mechanism Design}
\label{sec:multi_duality}

We now show that the loser-pay auction is \evn{an} optimal mechanism.  Formally, we will show the following theorem, which is our main result when we have $n \ge 2$ risk-loving buyers:
\begin{theorem}\label{thm:two_buyer_regular}
	Consider $n \ge 2$ buyers with exponential utility.  With Assumption~(A1), the loser-pay auction is the optimal mechanism.
\end{theorem}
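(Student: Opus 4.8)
The plan is to follow the duality route of Theorem~\ref{thm:single_buyer_regular}, using the multi-buyer virtual value $\Phi_u$ of Definition~\ref{defn:virtual_value_multi_buyers} in place of $\phi_u$ and adding one new family of dual variables for the coupling (``at most one winner'') constraint in \eqref{eq:lp_cf}. First I would write the Lagrangian dual of \eqref{eq:lp_obj} for $n\ge 2$: BIC multipliers $\lambda_{i,kk'}\ge 0$, IR multipliers $\mu_{i,k}\ge 0$, free multipliers $\xi_i(\vv{v})$ for the per-buyer equalities $\sum_j(y_{i,j}^0(\vv{v})+y_{i,j}^1(\vv{v}))=1$, and nonnegative multipliers $\nu(\vv{v})\ge 0$ for $\sum_i\sum_j y_{i,j}^1(\vv{v})\le 1$. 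Exploiting symmetry I would seek $\lambda_i\equiv\lambda$, $\mu_i\equiv\mu$, $\xi_i(\vv{v})=f(\vv{v}_{-i})\,\sigma(k_i)$ and $\nu(\vv{v})=f(\vv{v})\,\rho(\max_i k_i)$; dividing the dual constraint attached to $y_{i,j}^1(\vv{v})$ by $f(\vv{v}_{-i})$ turns it into $\Gamma_{k_i}(z_j;\lambda,\mu)\le \sigma(k_i)+f(v_{k_i})\rho(\max_i k_i)$ and the one attached to $y_{i,j}^0(\vv{v})$ into $\Pi_{k_i}(z_j;\lambda,\mu)\le \sigma(k_i)$, with $\Gamma_k,\Pi_k$ exactly the functions of Section~\ref{sec:sb_exp1}. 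Since $\nu$ does not appear inside $\Gamma_k$ or $\Pi_k$, the analogue of Lemma~\ref{lemma:exp_two_price1} holds verbatim, so dual feasibility only needs checking at $z=0$ and $z=z_M$.

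Next I would exhibit the dual solution $\mu\equiv 0$, $\sigma\equiv 0$, $\lambda_{k,k-1}=z_M\Pr[t\ge v_k]/(-u(-z_M))$ with all other $\lambda_{kk'}=0$, and $\rho(k)=z_M\,\widetilde{\Phi}_u^{+}(k)$ (with $\widetilde{\Phi}_u$ replaced by $\Phi_u$ in the regular case). This $\lambda$ is precisely the single-buyer choice \eqref{eq:single_lambda} rescaled by the factor $e^{\alpha v_k}$, and this rescaling --- forced because $\xi$ and $\nu$ carry the weight $f(\vv{v}_{-i})$ while it is the \emph{loser}, not the winner, who pays --- is exactly what produces the extra $e^{\alpha}$ term of $\Phi_u$. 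A short computation then gives $\Gamma_k(0;\lambda,\mu)=z_M f(v_k)\Phi_u(k)$, $\Pi_k(0;\lambda,\mu)=\Pi_k(z_M;\lambda,\mu)=0$, and the identity $\Gamma_k(z_M;\lambda,\mu)=\Gamma_k(0;\lambda,\mu)-z_M\bigl(\Pr[t\ge v_k]e^{\alpha v_k}-\Pr[t\ge v_{k+1}]e^{\alpha v_{k+1}}\bigr)$, where the bracketed quantity equals $f(v_k)\bigl((1-e^{-\alpha z_M})\Phi_u(k)+1\bigr)$ and is therefore nonnegative for every $k\ge k^*$. Dual feasibility follows: the $\Pi$-constraints read $\Pi_k(z_j)\le 0$ and hold since $\Pi_k$ is strongly convex on $[0,z_M]$ with $\Pi_k(0)=\Pi_k(z_M)=0$; the $\Gamma$-constraints read $\Gamma_k(z_j)\le z_M f(v_{k})\widetilde{\Phi}_u^{+}(\max_i k_i)$, which at $z=0$ is monotonicity of $\widetilde{\Phi}_u$ and at $z=z_M$ follows from the identity and the sign of the bracketed quantity.

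It remains to match the bound. The dual objective collapses to $\sum_{\vv{v}}\nu(\vv{v})=z_M\,\E\bigl[\widetilde{\Phi}_u^{+}(\max_i k_i)\bigr]=n z_M\sum_k f(v_k)\widetilde{\Phi}_u^{+}(k)\,x(k)$, where $x(k)$ is the interim allocation of the loser-pay auction; an Abel-summation identity run against the payment rule $q_i(\cdot)$ --- together with the standard fact that $x(\cdot)$ is constant on each ironed interval --- shows this number is exactly the revenue of the loser-pay auction (the multi-buyer form of ``virtual surplus equals revenue''). Weak duality then caps the value of \eqref{eq:lp_obj} by the revenue of the loser-pay auction, and since Theorem~\ref{lem:multibuyer_mech_feasible} makes the loser-pay auction a feasible, BIC, IR mechanism under Assumption~(A1), it is optimal. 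Complementary slackness --- only $\Gamma_k(0)$ tight for the buyer of highest virtual value, only $\Pi_k(0)$ and $\Pi_k(z_M)$ tight for a losing buyer with $k\ge k^*$ --- reconfirms that an optimal mechanism allocates to the highest ironed virtual value, never charges the winner, and randomizes every loser's payment between $0$ and $z_M$. For non-regular $f$, exactly as in Section~\ref{sec:single_ironing}, I would add $e^{-\alpha v}$-weighted loops to $\lambda$ to move $\Gamma_k(0)$ from $z_M f(v_k)\Phi_u(k)$ to $z_M f(v_k)\widetilde{\Phi}_u(k)$; such a loop shifts $\Gamma_k(z)$ by a $z$-independent amount, so the identity relating $\Gamma_k(z_M)$ and $\Gamma_k(0)$, hence the binding structure, is preserved.

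The step I expect to be the main obstacle is the per-profile bookkeeping of the competition multiplier $\nu(\vv{v})$: unlike the single-buyer case, where only interim constraints are active, here $\nu(\vv{v})$ must be chosen so that, simultaneously at every value profile, the winner's $\Gamma$-constraint is exactly tight while every other buyer's $\Gamma$-constraint and every $\Pi$-constraint still holds, and reconciling this with a single global $\lambda$ is precisely what pins down the $e^{\alpha}$ factor. A secondary nuisance is the sign analysis of $\Pr[t\ge v_k]e^{\alpha v_k}-\Pr[t\ge v_{k+1}]e^{\alpha v_{k+1}}$ needed for the $z=z_M$ constraint and its propagation through the ironed intervals. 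Assumption~(A1) itself does not enter the dual argument; it is exactly what is needed for the primal witness --- the loser-pay auction --- to be feasible, namely to guarantee $q_i(k)<1$.
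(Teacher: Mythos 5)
Your proposal follows essentially the same route as the paper's proof: the same dual program with per-buyer multipliers (your $\sigma\equiv 0$ is the paper's $\nu_{i,\vv{k}}=0$) and a single-item multiplier $\nu(\vv{v})=f(\vv{v})\,z_M\Phi_u^{+}(\max_i k_i)$ (the paper's $\gamma_{\vv{k}}$), the identical $\lambda$ (your $z_M\Pr[t\ge v_k]/(-u(-z_M))$ equals the paper's $\sum_{\ell\ge k}f(v_\ell)z_M e^{\alpha v_k}/(u(v_k)-u(v_k-z_M))$), endpoint-only feasibility checks via the convexity lemma, the same telescoping identity matching $\sum_{\vv{k}}\gamma_{\vv{k}}$ to the loser-pay revenue, and loop-based ironing for irregular $f$. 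Two linked algebra slips are worth fixing. First, your identity for $\Gamma_k(z_M)$ drops the $f(v_k)z_M$ term coming from the linear part of $\Gamma_k$; the correct relation is the cleaner multiplicative one, $\Gamma_k(z_M;\lambda,\mu)=e^{-\alpha z_M}\,\Gamma_k(0;\lambda,\mu)$, which gives feasibility for \emph{all} $k$ (including $k<k^*$, where your sign argument does not apply) without any sign analysis of $\Pr[t\ge v_k]e^{\alpha v_k}-\Pr[t\ge v_{k+1}]e^{\alpha v_{k+1}}$. Second, because the multi-buyer binding structure is multiplicative rather than additive, the ironing loops must be reweighted by the extra $e^{\alpha v}$ factor relative to the single-buyer case --- i.e., equal increments $\delta=\delta'$ on $\lambda_{k_0,k_1}$ and $\lambda_{k_1,k_0}$, not $e^{-\alpha v}$-weighted ones; only then is $\Gamma_k(z_M)=e^{-\alpha z_M}\Gamma_k(0)$ and $\Pi_k(z_M)=0$ preserved (an $e^{-\alpha v}$-weighted loop produces a $z$-independent shift, which would preserve the difference $\Gamma_k(z_M)-\Gamma_k(0)$ but break both of these identities). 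Neither slip changes the architecture of the argument, which matches the paper's.
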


The core idea behind the proof of Theorem~\ref{thm:two_buyer_regular} is similar to Theorem~\ref{thm:single_buyer_regular}.  First, we write down the dual program of the linear program \eqref{eq:lp_obj}.  Then, we construct a set of feasible dual variables and show that under \evn{this} assignment, the dual objective is equal to the revenue of the loser-pay auction.  Then, by strong duality of linear programming, we conclude that the revenue maximizing loser-pay auction is optimal.  When we have $n \ge 2$ symmetric buyers, the dual program of the linear program \eqref{eq:lp_obj} can be derived to be:
\Ger{\begin{align}
\text{Minimize} \quad &\sum_{\vv{k} \in [K]^n} \left(\sum_{i \in [n]} \nu_{i,\vv{k}}  + \gamma_{\vv{k}} \right) & \label{eq:mech_dual_multibuyer}
\end{align}}
Subject to the following constraints:
\begin{align}
& f(\vv{k}_{-i})\Gamma_{i,k_i}(z_j;\lambda,\mu) \le \nu_{i,\vv{k}} + \gamma_{\vv{k}},  &\forall i \in [n], \forall \vv{k} \in [K]^n ,j \in [M] \nonumber \\
& f(\vv{k}_{-i})\Pi_{i,k_i}(z_j;\lambda,\mu) \le \nu_{i,\vv{k}}, &\forall i \in [n], \forall \vv{k} \in [K]^n, j \in [M] \nonumber \\
& \mu_{\vv{k}} \ge 0, \lambda_{i,k_i,k_i'} \ge 0, \gamma_{\vv{k}} \ge 0 &\forall i \in [n], \forall \vv{k} \in [K]^n, \nonumber
\end{align}
where
\begin{align*}
\Gamma_{i,k}(z;\lambda,\mu) &=f(v_k)z + \sum_{k'}\left(\lambda_{i,k,k'}u(v_k-z)-\lambda_{i,k',k}u(v_{k'}-z)\right)+\mu_{i,k} u(v_k-z) \\
\Pi_{i,k}(z;\lambda,\mu) &=f(v_k)z + \sum_{k'}\left(\lambda_{i,k,k'}-\lambda_{i,k',k}\right) u(-z)+\mu_{i,k} u(-z).
\end{align*}
In the dual program \eqref{eq:mech_dual_multibuyer}, the dual variable $\lambda_{i,k,k'}$ corresponds to the BIC constraint \eqref{eq:lp_cbic} for buyer $i$, and the dual variable $\mu_{i,k}$ corresponds to the IR constraints \eqref{eq:lp_cir} for buyer $i$.  Differently from the single buyer case in program \eqref{eq:mech_dual}, for the feasibility constraints \eqref{eq:lp_cf}, we have the dual variable $\nu_{i,\vv{k}}$ associated with each buyer, and $\gamma_{\vv{k}}$ associated with the single-item constraint.  Also, the constraint $f(\vv{k}_{-i})\Gamma_{i,k_i}(z_j;\lambda,\mu) \le \nu_{i,\vv{k}} + \gamma_{\vv{k}}$ corresponds to the primal variable $y_{i,j}^1(v_{k_1},\dots, v_{k_n})$ and the constraint $f(\vv{k}_{-i})\Pi_{i,k_i}(z_j;\lambda,\mu) \le \nu_{i,\vv{k}}$ corresponds to the primal variable $y_{i,j}^0(v_{k_1},\dots,v_{k_n})$.

Similarly to Lemma~\ref{lemma:exp_two_price1}, we show in Lemma~\ref{lemma:two_buyer_exp_two_price1} in the appendix that both $\Gamma_{i,k}(z;\lambda,\mu)$ and $\Pi_{i,k}(z;\lambda,\mu)$ are either increasing or strongly convex in $z$.  This means that the first two sets of constraints in \eqref{eq:mech_dual_multibuyer} can only be binding at either end points $j=1$ or $j=M$.

If $f$ is a regular distribution, then we can use the following assignment to prove the strong duality: For each buyer $i \in [n]$ and any $k,k' \in [K]$, we set $\mu_{i,k}=0$ and 
\begin{equation}
\lambda_{i,k,k'} = 
\begin{cases}
\sum_{\ell \ge k} f(v_{\ell}) z_M \frac{e^{\alpha v_k}}{u(v_k)-u(v_k-z_M)}, &\text{if }  k' = k-1\\
0, &\text{otherwise.}
\end{cases}
\label{eq:multi_lambda}
\end{equation}
For each buyer $i \in [n]$ and any $\vv{k} \in [K]^n$, we also set $\nu_{i,\vv{k}}=0$ and $\gamma_{\vv{k}} = \max_{i \in [n]} \left\{ 0, f(\vv{k})\Phi_u(k_i)z_M  \right\}$.  Under this assignment, we have $f(\vv{k}_{-i})\Pi_{i,k_i}(0; \lambda,\mu) = f(\vv{k}_{-i})\Pi_{i,k_i}(z_M; \lambda,\mu) = 0$ and
\begin{align*}
&f(\vv{k}_{-i})\Gamma_{i,k_i}(0; \lambda,\mu) = f(\vv{k})\Phi_u(k_i)z_M,
\quad
&f(\vv{k}_{-i})\Gamma_{i,k_i}(z_M; \lambda,\mu) = f(\vv{k})\Phi_u(k_i)z_M e^{-\alpha z_M}.
\end{align*}
Differently from the assignment \eqref{eq:single_lambda} that we made in the single buyer case, the binding structure changes here because of the additional $\gamma_{\vv{k}}$ in the first dual constraint from the single item constraint.  For $k>k^*$, the dual constraints no longer bind at $\Gamma_{i,k}(0;\lambda,\mu)$ and $\Gamma_{i,k}(z_M;\lambda,\mu)$ but at $\Gamma_{i,k}(0;\lambda,\mu)$ and $\Pi_{i,k}(z_M;\lambda,\mu)$.  The additional $e^{\alpha}$ factor in \eqref{eq:multi_lambda} ensures this new binding pattern.

If $f$ is not a regular distribution, we can add \emph{loops} to $\lambda$ as we have done in Section~\ref{sec:single_ironing} for the single buyer case.  Due to the space constraint, we defer the full discussion with the regularity condition to Appendix~\ref{sec:app_multi_buyer_opt} and the discussion without regularity condition to Appendix~\ref{sec:app_iron_multiple}.

\section{Conclusion} \label{sec:conclusion}
In this paper we studied the revenue-maxizining mechanism for  the special case of risk-loving agents with exponential utility functions. We demonstrate that  for both a single and multiple symmetric buyers the optimal auction is simple. A natural question is whether or not the same results extend beyond the expoential utility function. Unfortunately, it can be shown that optimal auction for the case of a single agent with quadratic utility function is more complicated than the simple randomized take-it-or-leave it offer and requires us to utilize at least one more menu option as shown in Appendix~\ref{quadratic}. 

\bibliographystyle{plainnat}
\bibliography{refs}
\newpage

\section{Appendix}
\appendix
\section{Missing Proofs from Optimal Mechanism Design for a Single Buyer} \label{sec:app_single_buyer}

\subsection{Optimal Take-it-or-leave-it Randomized Price}
\label{sec:app_single_buyer1}
\begin{proof}[Proof of Theorem~\ref{thm:rl-random}]
	Consider the menu option $(1,\Pv)$ with $\E[\Pv]=p$.  Suppose $\Pv$ is not randomized only at $0$ and $z_M$. We will show that we can flip one more coin to construct a new randomized payment $\Pv'$ with only $0$ and $z_M$  and obtain potentially more revenue.  Conditioned on $\Pv$, we set $\Pv'=z_M$ with probability $\Pv/z_M$, and set $\Pv'=0$ otherwise.  Then, we can find that given the buyer's value is $v$, the expected utility of choosing $(1,\Pv')$ is
	\begin{align*} 
	\E[u(v-\Pv')] &= \E[(\Pv/z_M)u(v-z_M) + (1-\Pv /z_M)u(v)] \\
	&\ge \E[u((\Pv/z_M)(v-z_M) + (1-\Pv/z_M)v)] = \E[u(v-\Pv)] 
	\end{align*}
	where the inequality comes from Jensen's inequality.  This means that for any value, the expected utility curve of $(1,\Pv')$ is no less than $(1,\Pv)$.  Let $t_0$ denote the intersection of $\E[u(x-\Pv)]$ and the $x$-axis, i.e. let $t_0$ be such that
	$$ \E[u(t_0-\Pv)] = 0. $$
	Similarly, let $t_0'$ denote the intersection of $\E[u(x-\Pv')]$ and the $x$-axis, i.e.
	$$ \E[u(t_0'-\Pv')] = 0. $$
	We can find that since $\E[u(x-\Pv')]$ is above $\E[u(x-\Pv)]$, and both functions are monotone increasing, we have $t_0' \le t_0$. This implies that the revenue using a randomized take-it-or-leave-it  price can be achieved by randomizing only between $0$ and $z_M$. 
	
	As a result, given the expectation of a randomized payment, we can without loss of generality consider its implementation that randomizes over the extremes as follows: with probability $p/z_M$ charge $z_M$ and with the remaining probability charge $0$. 
	To find the probability that such a menu is accepted, we solve for the threshold value $t'$ that is indifferent between accepting and rejecting:
	
	$$ (p/z_M) u(t'-z_M) + (1-p/z_M) u(t') = 0. $$
	
	Solving for $p$, we can find that $p = \frac{z_M u(t')}{u(t') - u(t'-z_M)}$.
	Since the probability of a buyer choosing this menu option is $\Pr[t\ge t']$, we can conclude that the revenue is $\frac{z_M\Pr[t\ge t']u(t')}{u(t')-u(t'-z_M)}$.
\end{proof} 

\subsection{Duality Theory for Optimal Mechanisms}
\label{sec:app_single_buyer_duality}
First, we show that in the dual program \eqref{eq:mech_dual}, for any $k \in [K]$, among the sets of constraints $\{\Gamma_k(z_j;\lambda,\mu) \le \nu_k\}_{j \in [M]}$ and $\{\Pi_k(z_j;\lambda,\mu) \le \nu_k\}_{j \in [M]}$, only the following four can be binding: 
\begin{enumerate}
	\item $\Gamma_k(0;\lambda,\mu)\le \nu_k$
	\item $\Gamma_k(z_M;\lambda,\mu)\le \nu_k$
	\item $\Pi_k(0;\lambda,\mu)\le \nu_k$
	\item $\Pi_k(z_M;\lambda,\mu)\le \nu_k$
\end{enumerate}
This can be observed by noticing that both $\Gamma_k$ and $\Pi_k$ are either increasing or strongly convex in $z$ for $z \ge 0$, which we prove in the following lemma (restatement of Lemma~\ref{lemma:exp_two_price1}):
\begin{lemma} 
	In the dual program \eqref{eq:mech_dual}, both $\Gamma_k(z;\lambda,\mu)$ and $\Pi_k(z;\lambda,\mu)$ are either increasing or strongly convex in $z$ for $z \ge 0$.
\end{lemma}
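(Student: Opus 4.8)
The plan is to exploit the specific form of the exponential utility to reduce both $\Gamma_k$ and $\Pi_k$, viewed as functions of $z$, to the elementary shape $f(v_k)\,z + C e^{-\alpha z} + D$, and then to read off monotonicity versus convexity from the sign of a single constant $C$.

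First I would substitute $u(x)=\beta(e^{\alpha x}-1)$. The key observation is that $u(v_k-z)=\beta e^{\alpha v_k}e^{-\alpha z}-\beta$ and $u(-z)=\beta e^{-\alpha z}-\beta$, so every summand of $\Gamma_k$ and of $\Pi_k$ that depends on $z$ is a constant multiple of $e^{-\alpha z}$ plus a $z$-free constant. Collecting terms, $\Gamma_k(z;\lambda,\mu)=f(v_k)z+C_\Gamma e^{-\alpha z}+D_\Gamma$ and $\Pi_k(z;\lambda,\mu)=f(v_k)z+C_\Pi e^{-\alpha z}+D_\Pi$, where
\[
C_\Gamma=\beta\Big(e^{\alpha v_k}\big(\sum_{k'}\lambda_{kk'}+\mu_k\big)-\sum_{k'}\lambda_{k'k}e^{\alpha v_{k'}}\Big),\qquad C_\Pi=\beta\Big(\sum_{k'}(\lambda_{kk'}-\lambda_{k'k})+\mu_k\Big),
\]
and $D_\Gamma,D_\Pi$ absorb the remaining constants; note the coefficient of $z$ is $f(v_k)>0$ in both cases, while $C_\Gamma,C_\Pi$ may have either sign.

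Next I would analyze the scalar function $g(z)=a z+C e^{-\alpha z}+D$ with $a=f(v_k)>0$ and $z\ge 0$, using $g'(z)=a-\alpha C e^{-\alpha z}$ and $g''(z)=\alpha^2 C e^{-\alpha z}$, and split on the sign of $C$. If $C\le 0$ then $g'(z)\ge a>0$ for all $z\ge 0$, so $g$ is strictly increasing on $[0,\infty)$. If $C>0$ then $g''(z)=\alpha^2 C e^{-\alpha z}>0$ for all $z$, so $g$ is strictly convex; moreover, on the bounded range $[0,z_M]$ that is actually relevant (since $\setprices\subseteq[0,z_M]$) we have $g''(z)\ge\alpha^2 C e^{-\alpha z_M}>0$, i.e.\ $g$ is strongly convex there. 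Specializing to $g=\Gamma_k(\cdot;\lambda,\mu)$ and $g=\Pi_k(\cdot;\lambda,\mu)$ yields the lemma.

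Finally, I would record the consequence invoked in the main text. The dual constraints attached to $v_k$ are $\{\Gamma_k(z_j;\lambda,\mu)\le\nu_k\}_{j\in[M]}$ and $\{\Pi_k(z_j;\lambda,\mu)\le\nu_k\}_{j\in[M]}$ over the grid $0=z_1<z_2<\dots<z_M$, and such a constraint can be tight at an optimum only if its left-hand side is maximal over the grid. When $g$ is increasing this maximum is attained at $z_M$; when $g$ is convex, the maximum of $g$ over a finite subset of $[0,z_M]$ is attained at an endpoint, i.e.\ at $0$ or at $z_M$. Hence for each $k$ only $\Gamma_k(0;\lambda,\mu)\le\nu_k$, $\Gamma_k(z_M;\lambda,\mu)\le\nu_k$, $\Pi_k(0;\lambda,\mu)\le\nu_k$, and $\Pi_k(z_M;\lambda,\mu)\le\nu_k$ can bind. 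I do not expect any real obstacle here; the only point deserving care is that $Ce^{-\alpha z}$ provides \emph{strong} (rather than merely strict) convexity only on a compact interval, which is precisely the regime in which the price grid $\setprices$ lives, so the phrase ``increasing or strongly convex for $z\ge 0$'' should be understood over $[0,z_M]$.
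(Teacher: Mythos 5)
Your proof is correct and follows essentially the same route as the paper's: substitute the exponential utility to rewrite each function as $f(v_k)z + Ce^{-\alpha z} + \text{const}$ (your $C_\Gamma, C_\Pi$ are exactly the paper's $\beta B_k, \beta A_k$) and case-split on the sign of $C$. Your added remark that $Ce^{-\alpha z}$ yields strong convexity only on the compact range $[0,z_M]$ is a fair refinement of the paper's phrasing, and is harmless since strict convexity already suffices for the endpoint-binding conclusion.
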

\begin{proof}
	To show this, we can simply rewrite $\Gamma_k(z;\lambda,\mu)$ in the following way
	\begin{align}
	\Gamma_k(z;\lambda,\mu) &=f(v_k)z + \sum_{k'}\left(\lambda_{kk'}\beta(e^{\alpha(v_k-z)}-1)-\lambda_{k'k}\beta(e^{\alpha(v_{k'}-z)}-1)\right)+\mu_k \beta(e^{\alpha(v_k-z)}-1) \nonumber \\ 
	&=f(v_k)z + \beta \left[ \sum_{k'}\left(\lambda_{kk'}e^{\alpha v_k} - \lambda_{k'k}e^{\alpha v_{k'}}\right)e^{-\alpha z} +  \mu_k e^{\alpha v_k} e^{-\alpha z} -  \sum_{k'} \left(\lambda_{kk'} - \lambda_{k'k}\right) - \mu_k \right] \nonumber \\
	&=f(v_k)z + \beta B_k e^{-\alpha z} - \beta A_k,
	\end{align}
	where
	\begin{align*}
	A_k &= \sum_{k'}(\lambda_{kk'}-\lambda_{k'k}) + \mu_k \\
	B_k &= \sum_{k'}(\lambda_{kk'}e^{\alpha v_k} - \lambda_{k'k}e^{\alpha v_{k'}}) +  \mu_k e^{\alpha v_k}.
	\end{align*}
	We can find that if $B_k>0$, then $\Gamma_k(z;\lambda,\mu)$ is strongly convex in $z$ whereas if $B_k \le 0$, then $\Gamma_k(z;\lambda,\mu)$ is monotone increasing in $z$.
	Similarly, for $\Pi_k(z;\lambda,\mu)$, we have
	\begin{align}
	\Pi_k(z;\lambda,\mu) &=f(v_k)z + \sum_{k'}\left(\lambda_{kk'}-\lambda_{k'k}\right) \beta(e^{-\alpha z}-1)+\mu_k \beta(e^{-\alpha z}-1) \nonumber \\ 
	&=f(v_k)z + \beta A_k e^{-\alpha z} - \beta A_k,
	\end{align}
	Also, we can find that if $A_k>0$, then $\Pi_k(z;\lambda,\mu)$ is strongly convex in $z$ whereas if $A_k \le 0$, then $\Pi_k(z;\lambda,\mu)$ is monotone increasing in $z$.
\end{proof}
Now, we are ready to prove the main theorem.
\begin{proof}[Proof of Theorem~\ref{thm:single_buyer_regular}]
	To prove the theorem, we construct a set of feasible dual variables that upper bound the value of the dual program \eqref{eq:mech_dual} by the revenue of the optimal take-it-or-leave-it randomized price.
	
	For any $k,k' \in [K]$, we set $\mu_k=0$ and 
	\[
	\lambda_{kk'} = 
	\begin{cases}
	\sum_{\ell \ge k} f(v_{\ell}) z_M \frac{1}{u(v_k)-u(v_k-z_M)}, &\text{if }  k' = k-1\\
	0, &\text{otherwise}.
	\end{cases}
	\]
	For simplicity, we define $k^* = \min\{k: \phi_u(k)>0\}$. For any $k \in [K]$, we set
	\[
	\nu_k = \max \left\{ 0, f(v_k)\phi_u(k)z_M  \right\}.
	\]
	Note that under this choice of $\nu_k$, the objective of the dual program \eqref{eq:mech_dual} becomes
	\begin{align}
	\sum_{k} \nu_k &= \sum_{k: \phi_u(k)>0} f(v_k) \phi_u(k) z_M \nonumber \\
	&= \sum_{k \ge k^*} f(v_k) \phi_u(k) z_M \nonumber \\
	&= \sum_{k \ge k^*} \left(\frac{ \sum_{k' \ge k}f(v_{k'}) u(v_k)}{u(v_k)-u(v_k-z_M)} -  \frac{\sum_{k' \ge {k+1}}f(v_{k'}) u(v_{k+1})}{u(v_{k+1})-u(v_{k+1}-z_M)} \right) z_M \nonumber \\
	&= \sum_{k \ge k^*}f(v_{k})z_M \cdot \frac{u(v_{k^*})}{u(v_{k^*})-u(v_{k^*}-z_M)}, \label{eq:dual_obj_val2}
	\end{align}
	where the second equality follows from the regularity condition, and the last equality results from taking a telescopic sum.  We can find that \eqref{eq:dual_obj_val2} is equal to the revenue of the optimal take-it-or-leave-it mechanism.
	
	It remains to verify that this choice of the dual variables is feasible for the dual program \eqref{eq:mech_dual}.  By Lemma~\ref{lemma:exp_two_price1}, it suffices to check whether $\Gamma_k(0;\lambda,\mu) \le \nu_k$, $\Gamma_k(z_M;\lambda,\mu) \le \nu_k$, and $\Pi_k(z_M;\lambda,\mu) \le \nu_k$.
	First, we check:
	\begin{align*}
	\Gamma_k(0; \lambda,\mu) &= \lambda_{k,k-1} u(v_k) - \lambda_{k+1,k} u(v_{k+1})\\
	&= \sum_{\ell \ge k} f(v_{\ell}) z_M \frac{u(v_k)}{u(v_k)-u(v_k-z_M)} - \sum_{\ell \ge k+1} f(v_{\ell}) z_M \frac{u(v_{k+1})}{u(v_{k+1})-u(v_{k+1}-z_M)}\\
	&= f(v_k)\phi_u(k)z_M \le \nu_k
	\end{align*}
	and 
	\begin{align*}
	\Gamma_k(z_M; \lambda,\mu) &= f(v_k)z_M + \lambda_{k,k-1} u(v_k-z_M) - \lambda_{k+1,k} u(v_{k+1}-z_M)\\
	&=  f(v_k)z_M + \frac{\sum_{\ell \ge k} f(v_{\ell}) z_M u(v_k-z_M)}{u(v_k)-u(v_k-z_M)} -  \frac{\sum_{\ell \ge k+1} f(v_{\ell}) z_M u(v_{k+1}-z_M)}{u(v_{k+1})-u(v_{k+1}-z_M)}\\
	&=  \sum_{\ell \ge k} f(v_k)z_M \frac{u(v_k)-u(v_k-z_M)}{u(v_k)-u(v_k-z_M)} - \sum_{\ell \ge k+1} f(v_k)z_M \frac{u(v_{k+1})-u(v_{k+1}-z_M)}{u(v_{k+1})-u(v_{k+1}-z_M)} \\
	&\quad+ \sum_{\ell \ge k} f(v_{\ell}) z_M \frac{u(v_k-z_M)}{u(v_k)-u(v_k-z_M)} -  \sum_{\ell \ge k+1} f(v_{\ell}) z_M \frac{u(v_{k+1}-z_M)}{u(v_{k+1})-u(v_{k+1}-z_M)}\\
	&= \sum_{\ell \ge k} f(v_{\ell}) z_M \frac{u(v_k)}{u(v_k)-u(v_k-z_M)} - \sum_{\ell \ge k+1} f(v_{\ell}) z_M \frac{u(v_{k+1})}{u(v_{k+1})-u(v_{k+1}-z_M)}\\
	&= f(v_k)\phi_u(k)z_M \le \nu_k.
	\end{align*}
	For $\Pi_k(z_M;\lambda,\mu)$, we note that
	\begin{align*}
	f(v_k)\phi_u(k)z_M &= \sum_{\ell \ge k} f(v_{\ell}) z_M \frac{u(v_k)}{u(v_k)-u(v_k-z_M)} - \sum_{\ell \ge k+1} f(v_{\ell}) z_M \frac{u(v_{k+1})}{u(v_{k+1})-u(v_{k+1}-z_M)}\\
	&= \sum_{\ell \ge k} f(v_{\ell}) z_M \frac{1-e^{-\alpha v_k}}{1-e^{-\alpha z_M}} - \sum_{\ell \ge k+1} f(v_{\ell}) z_M \frac{1-e^{-\alpha v_{k+1}}}{1-e^{-\alpha z_M}}\\
	&= \frac{f(v_k)}{1-e^{-\alpha z_M}} + \sum_{\ell \ge k+1} f(v_{\ell}) z_M \frac{e^{-\alpha v_{k+1}}}{1-e^{-\alpha z_M}} - \sum_{\ell \ge k} f(v_{\ell}) z_M \frac{e^{-\alpha v_k}}{1-e^{-\alpha z_M}}.
	\end{align*}
	Then,
	\begin{align*}
	\Pi_k(z_M;\lambda,\mu) &= f(v_k)z_M + \lambda_{k,k-1} u(-z_M) - \lambda_{k+1,k} u(-z_M) \\
	&= f(v_k)z_M + \sum_{\ell \ge k} f(v_{\ell}) z_M \frac{u(-z_M)}{u(v_k)-u(v_k-z_M)} - \sum_{\ell \ge k+1} f(v_{\ell}) z_M \frac{u(-z_M)}{u(v_{k+1})-u(v_{k+1}-z_M)}\\
	&= f(v_k)z_M - \sum_{\ell \ge k} f(v_{\ell}) z_M e^{-\alpha v_k} + \sum_{\ell \ge k+1} f(v_{\ell}) z_M e^{-\alpha v_{k+1}}\\
	&=(1-e^{-\alpha z_M}) f(v_k)\phi_u(k)z_M \le \nu_k.
	\end{align*}
	Therefore, the choice of the dual variables is feasible.  Weak duality implies that the objective value of  \eqref{eq:lp_obj} is upper bounded by the revenue maximizing randomized take-it-or-leave-it  price, which shows that the revenue maximizing randomized take-it-or-leave-it price is optimal.
\end{proof}

\subsection{Optimal Mechanism beyond Regularity Condition}
\label{sec:app_iron_single}
As we have mentioned in Section~\ref{sec:single_ironing}, the main intuition behind the proof of Theorem~\ref{thm:mec_exputil} is that we can construct an \emph{ironed virtual value} $\widetilde{\phi}_u$.  Then, we can slightly modify the dual variables that we specified in Section~\ref{sec:sb_exp1} to match the ironed virtual value.  More specifically, consider the dual program \eqref{eq:mech_dual} and suppose we have a new set of dual variables $(\widetilde{\lambda},\widetilde{\mu},\widetilde{\nu})$.  We would like to show that 
\begin{enumerate}
	\item $\widetilde{\nu}_k = \max\{0,f(v_k)\widetilde{\phi}_u z_M\}$, which relates the dual variables with the ironed virtual value.
	\item $\sum_k \widetilde{\nu}_k = \max_{k \in [K]} \left[ \sum_{\ell \ge k} f(v_{\ell}) z_M \frac{u(v_{k})}{u(v_{k})-u(v_{k}-z_M)} \right]$, which shows the strong duality with the revenue-maximizing take-it-or-leave-it randomized price.
\end{enumerate}

The ironing argument that we are going to make in this section is motivated by \cite{cai2016duality}.  However, since in the risk-loving setting, we can no longer interpret the dual variables as ``flows'' as \cite{cai2016duality} have done in the risk-neutral setting, the challenge here is to identify how to carefully take care of the distortion caused by a non-linear utility function when adding a \emph{loop} to the dual variables. 

Before we give a formal proof of Theorem~\ref{thm:mec_exputil}, let us give some intuition that leads to the ironing process.  Consider the dual variables $\lambda'$.  Suppose $\lambda'$ coincides with $\lambda$ except at $k=k_0$ and $k=k_1$ for some $k_1<k_0$:
\begin{align}
\lambda_{k_0,k_1}' &= \lambda_{k_0,k_1} + 
\sum_{\ell \ge k_0} A z_M \frac{1}{u(v_{k_0})-u(v_{k_0}-z_M)} \label{eq:loop1}\\
\lambda_{k_1,k_0}' &= \lambda_{k_1,k_0} + \sum_{\ell \ge k_0} A z_M \frac{1}{u(v_{k_1})-u(v_{k_1}-z_M)} \label{eq:loop2}
\end{align}
for some $A > 0$.  Comparing this with the definition of $\lambda$ in \eqref{eq:single_lambda}, here we are introducing an additional \emph{loop} of amount $A$ between $k_0$ and $k_1$.  Note that this particular way of adding the loop ensures $\Gamma_{k}(0;\lambda',\mu)=\Gamma_{k}(z_M;\lambda',\mu)$ and $\Pi_k(z_M;\lambda',\mu) = (1-e^{-\alpha z_M})\Gamma_k(0;\lambda',\mu)$ for any $k \in [K]$.  Then, we can find that at $k=k_0$, we have
\begin{align*}
\Gamma_{k_0}(0; \lambda',\mu) &= \Gamma_{k_0}(0;\lambda, \mu) + \sum_{\ell \ge k_0} A z_M \left[ \frac{u(v_{k_0})}{u(v_{k_0})-u(v_{k_0}-z_M)} - \frac{u(v_{k_1})}{u(v_{k_1})-u(v_{k_1}-z_M)} \right] \\
&= \Gamma_{k_0}(0;\lambda, \mu) + (K - k_0 + 1) A z_M \left[ \frac{e^{-\alpha v_{k_1}} - e^{-\alpha v_{k_0}}}{1-e^{-\alpha z_M}} \right]\\
&> \Gamma_{k_0}(0;\lambda, \mu). \\
\end{align*}
Similarly, at $k_1$, we have
\begin{align*}
\Gamma_{k_1}(0; \lambda',\mu) &= \Gamma_{k_1}(0;\lambda, \mu) + \sum_{\ell \ge k_0} A z_M \left[- \frac{u(v_{k_0})}{u(v_{k_0})-u(v_{k_0}-z_M)} + \frac{u(v_{k_1})}{u(v_{k_1})-u(v_{k_1}-z_M)} \right] \\
&= \Gamma_{k_1}(0;\lambda, \mu) + (K - k_0 + 1) A z_M \left[ \frac{e^{-\alpha v_{k_0}} - e^{-\alpha v_{k_1}}}{1-e^{-\alpha z_M}} \right] \\
&< \Gamma_{k_1}(0;\lambda, \mu). \\
\end{align*}
We can find that after adding the loop, $\Gamma_{k_0}$ is increased by $(K - k_0 + 1) A z_M \left[ \frac{e^{-\alpha v_{k_1}} - e^{-\alpha v_{k_0}}}{1-e^{-\alpha z_M}} \right]$ and $\Gamma_{k_1}$ is decreased by $(K - k_0 + 1) A z_M \left[ \frac{e^{-\alpha v_{k_1}} - e^{-\alpha v_{k_0}}}{1-e^{-\alpha z_M}} \right]$.  Note that in the proof of Theorem~\ref{thm:single_buyer_regular}, we have shown that $\Gamma_{k}(0;\lambda,\nu) = f(v_k)\phi_u(k)z_M$ for any $k \in [K]$, which relates the dual constraint $\Gamma_k(0;\lambda,\nu)$ with the virtual value $\phi_u(k)$ at $k$.  With this idea, the process of \emph{adding a loop} in $\lambda$ is equivalent to modifying the virtual value function.  In addition, we can see that adding a loop does not change the average virtual value, i.e. $\Gamma_{k_0}(0;\lambda',\mu) + \Gamma_{k_1}(0;\lambda',\mu) = \Gamma_{k_0}(0;\lambda,\mu) + \Gamma_{k_1}(0;\lambda,\mu) = f(v_{k_0})\phi_u(k_0)z_M + f(v_{k_1})\phi_u(k_1)z_M$. 

We call the particular process of modifying the virtual value function into a monotone increasing function the \emph{ironing} process.  In the ironing process, we consider the convex hull of the \emph{revenue curve}, which is the cumulative function of the virtual value function, defined as
\[
F_{\phi}(k) = \sum_{\ell \le k} \phi_u(\ell).
\]
Then, we take the derivative of the revenue curve as the ironed virtual value function.  Formally, we can define the ironing process as follows:
\begin{definition}[Ironed virtual value for a single buyer]
	Given a virtual value function $\phi_u$.  Let $\{[a_1,b_1],[a_2,b_2],\dots,[a_m,b_m]\}$ denote the minimum intervals that are not convex on the revenue curve $F_{\phi}$.  The ironed virtual value function is defined as
	\[
	\widetilde{\phi}_{u}(k) = 
	\begin{cases}
	\frac{ \sum_{\ell = a_i}^{b_i} f(v_{\ell}) \phi_u(\ell) }{ \sum_{\ell = a_i}^{b_i} f(v_{\ell}) } , &\text{if }  k \in [a_i,b_i], \text{ for any } i \in [m]\\
	\phi_u(\ell), &\text{otherwise}.
	\end{cases}
	\]
\end{definition}

Then, we are ready to prove Theorem~\ref{thm:mec_exputil}:

\begin{proof}[Proof of Theorem~\ref{thm:mec_exputil}]
	To prove Theorem~\ref{thm:mec_exputil}, similar to the proof of Theorem~\ref{thm:single_buyer_regular}, we need to construct a set of feasible dual variables that makes the value of the dual program \eqref{eq:mech_dual} equals to the revenue of the optimal take-it-or-leave-it randomized price.
	
	First, consider the $\lambda$ and $\mu$ that we have used in the proof of Theorem~\ref{thm:single_buyer_regular}:
	For any $k,k' \in [K]$, we set $\mu_k=0$ and 
	\[
	\lambda_{kk'} = 
	\begin{cases}
	\sum_{\ell \ge k} f(v_{\ell}) z_M \frac{1}{u(v_k)-u(v_k-z_M)}, &\text{if }  k' = k-1\\
	0, &\text{otherwise}.
	\end{cases}
	\]
	We have already seen in the proof of Theorem~\ref{thm:single_buyer_regular} that 
	\[
	\Gamma_k(0;\lambda,\mu) = \Gamma_k(z_M;\lambda,\mu) = f(v_k) \phi_u(k) z_M.
	\]
	Then, we can apply the ironing process by adding \emph{loops} in $\lambda$ using \eqref{eq:loop1} and \eqref{eq:loop2}.  The loops are added in a way such that 
	\[
	\Gamma_k(0;\widetilde{\lambda},\mu) = f(v_k) \widetilde{\phi}_u(k) z_M
	\]
	for any $k \in [K]$, where $\widetilde{\lambda}$ is the $\lambda$ after the ironing process.  This can always be done by iteratively adding a properly chosen loop between each adjacent pairs of types within the same ironing interval.
	
	In the next step, we set $\widetilde{\nu}_k = \max\{0,f(v_k)\widetilde{\phi}_u(k)z_M\}$ and $\widetilde{\mu}_k = \mu_k$ for any $k \in [K]$.  We need to verify that the set of dual variables $(\widetilde{\lambda},\widetilde{\mu},\widetilde{\nu})$ is a set of feasible dual variables.  First, by definition of $\widetilde{\lambda}$, we have
	\[
	\Gamma_k(0;\widetilde{\lambda},\widetilde{\mu}) \le \widetilde{\nu}_k.
	\]
	for any $k \in [K]$.  Second, by the property of adding a loop, we have
	\begin{align*}
	\Gamma_k(z_M;\widetilde{\lambda},\widetilde{\mu}) &= \Gamma_k(0;\widetilde{\lambda},\widetilde{\mu}) \le \widetilde{\nu}_k \\
	\Pi_k(z_M;\widetilde{\lambda},\widetilde{\mu}) &= (1-e^{-\alpha z_M})\Gamma_k(0;\widetilde{\lambda},\widetilde{\mu}) \le \widetilde{\nu}_k
	\end{align*}
	Therefore, $(\widetilde{\lambda},\widetilde{\mu},\widetilde{\nu})$ is feasible.  Finally, under this choice of the dual variables and the definition of the ironed virtual value, we have
	\[
	\sum_{k} \nu_k = \sum_{k:\widetilde{\phi}_u(k)>0} f(v_k) \widetilde{\phi}_u(k) z_M.
	\]
	Since adding a loop within an interval does not change the average virtual value of the interval, we have
	\begin{align*}
	\sum_{k} \nu_k &= \sum_{k:\widetilde{\phi}_u(k)>0} f(v_k) \widetilde{\phi}_u(k) z_M \\
	&= \sum_{k:\widetilde{\phi}_u(k)>0} f(v_k) \phi_u(k) z_M  \\
	&= \sum_{k:\widetilde{\phi}_u(k)>0} \left(\frac{ \sum_{k' \ge k}f(v_{k'}) u(v_k)}{u(v_k)-u(v_k-z_M)} -  \frac{\sum_{k' \ge {k+1}}f(v_{k'}) u(v_{k+1})}{u(v_{k+1})-u(v_{k+1}-z_M)} \right) z_M  \\
	&= \sum_{k:\widetilde{\phi}_u(k)>0} f(v_{k})z_M \cdot \frac{u(v_{k^*})}{u(v_{k^*})-u(v_{k^*}-z_M)},
	\end{align*}
	where $k^* = \min\{k:\widetilde{\phi}_u(k)>0\}$.  This concludes the proof.
\end{proof}

\section{Missing Proofs from Optimal Mechanism Design for Multiple Buyers} \label{sec:app_multi_buyer}

\subsection{Proof of Theorem~\ref{lem:multibuyer_mech_feasible}}
\label{sec:app_multi_buyer_ppt}
First, we note that for each buyer $i \in [n]$ and $k \ge k^*$, the probability  $q_{i}(k)$ that she pays $z_M$ can be alternatively defined using the following recursive relationship
$$
q_i(k) = \frac{1}{1-x_i(k)} \cdot \left[ \frac{x_i(k) u(v_{k}) -  U_{i,k-1}(v_k)}{-u(-z_M)} \right],
$$
where $U_{i,k}(v)=x_i(k)u(v)+(1-x_i(k))q_i(k)u(-z_M)$ is the utility for buyer $i$ given she bids $v_{k}$ and her true value is $v$.  

We prove Theorem~\ref{lem:multibuyer_mech_feasible} by showing the following three lemmas:
\begin{lemma} \label{lem:2buyer_mech_ir}
	Consider $n \ge 2$ buyers with exponential utility.  With Assumption~(A1), the loser-pay auction is individually rational, i.e. for any buyer $i \in [n]$ and for any $k \in [K]$, it holds that $U_{i,k}(v_k) \ge 0$.
\end{lemma}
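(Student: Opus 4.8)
The plan is to reduce the statement to a telescoping identity for $U_{i,k}(v_k)$ and then read off nonnegativity term by term. First I would dispose of the easy case: if the bid $v_k$ is below the reserve, i.e.\ $\widetilde{\Phi}_u(k)\le 0$ (in particular whenever $k<k^*$), then by the allocation and payment rules of the loser-pay auction buyer $i$ neither wins the item nor is ever charged, so $x_i(k)=0$ and $q_i(k)=0$, hence $U_{i,k}(v_k)=0\ge 0$. It remains to treat $k\ge k^*$.

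For $k\ge k^*$ I would start from $U_{i,k}(v_k)=x_i(k)u(v_k)+(1-x_i(k))\,q_i(k)\,u(-z_M)$ and substitute the closed form of $q_i(k)$. The factor $1-x_i(k)$ cancels against the denominator of $q_i(k)$ and $u(-z_M)$ cancels the $-u(-z_M)$, leaving
$$U_{i,k}(v_k)=x_i(k)\,u(v_k)-\sum_{k'=k^*}^{k}\bigl[x_i(k')-x_i(k'-1)\bigr]\,u(v_{k'}).$$
Since $x_i(k^*-1)=0$ the increments telescope, $\sum_{k'=k^*}^{k}[x_i(k')-x_i(k'-1)]=x_i(k)$, and an Abel summation (summation by parts) rewrites the second sum as $x_i(k)u(v_k)-\sum_{k'=k^*}^{k-1}x_i(k')\bigl(u(v_{k'+1})-u(v_{k'})\bigr)$. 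Substituting this back, the two copies of $x_i(k)u(v_k)$ cancel and we obtain
$$U_{i,k}(v_k)=\sum_{k'=k^*}^{k-1}x_i(k')\,\bigl(u(v_{k'+1})-u(v_{k'})\bigr).$$
Because each interim allocation probability $x_i(k')$ is nonnegative and $u$ is strictly increasing (so each difference $u(v_{k'+1})-u(v_{k'})$ is positive), every summand is nonnegative, which is exactly the claim; the sum is empty for $k=k^*$, consistently giving $U_{i,k^*}(v_{k^*})=0$.

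An equivalent route I would mention, which parallels the recursive description of $q_i(k)$ stated just before the lemma, is to observe that substituting $q_i(k)$ into $U_{i,k}(v_k)$ gives $U_{i,k}(v_k)=U_{i,k-1}(v_k)$, and then expand $U_{i,k-1}$ at the two arguments $v_k$ and $v_{k-1}$ to get $U_{i,k-1}(v_k)=U_{i,k-1}(v_{k-1})+x_i(k-1)\bigl(u(v_k)-u(v_{k-1})\bigr)\ge U_{i,k-1}(v_{k-1})$; a downward induction to the base case $U_{i,k^*-1}(v_{k^*-1})=0$ then finishes it. There is no serious obstacle: the only ingredients are $x_i(k')\ge 0$, the boundary value $x_i(k^*-1)=0$, and monotonicity of $u$; in particular Assumption~(A1) plays no role in individual rationality (it is needed only to guarantee $q_i(k)<1$, i.e.\ feasibility of the payment rule, which is a separate lemma). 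The one point that needs a little care is the bookkeeping of the endpoints in the telescoping and Abel sums and the degenerate case $k=k^*$, together with verifying that the stated formula for $q_i(k)$ is exactly the one that makes the $1-x_i(k)$ and $u(-z_M)$ factors cancel cleanly.
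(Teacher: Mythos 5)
Your proposal is correct and is essentially the paper's argument: the paper proves the lemma by induction via the identity $U_{i,k}(v_k)=U_{i,k-1}(v_k)\ge U_{i,k-1}(v_{k-1})\ge 0$ (using monotonicity of $U_{i,k-1}(\cdot)$ and the base case $U_{i,k^*}(v_{k^*})=0$), which is exactly the ``equivalent route'' you describe, and your primary Abel-summation formula $U_{i,k}(v_k)=\sum_{k'=k^*}^{k-1}x_i(k')\bigl(u(v_{k'+1})-u(v_{k'})\bigr)$ is just the closed-form unrolling of that same induction. Your side remark is also accurate: the paper's proof of this lemma likewise makes no use of Assumption~(A1), which enters only in the separate feasibility lemma guaranteeing $q_i(k)<1$.
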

\begin{proof}
	We prove this lemma by induction for each buyer.  Consider buyer $i$. For $k < k^*$, we have that $U_{i,k}(v_k)=0$ by definition of the mechanism.  For $k = k^*$, we can see that
	$$
	U_{i,k^*}(v_{k^*}) = x_i(k^*) u(v_{k^*}) - x_i(k^*) u(v_{k^*}) = 0.
	$$
	For the induction step, consider any $k > k^*$.  Assume $U_{i,k-1}(v_{k-1}) \ge 0$ holds.  We first note that
	\begin{align*}
	U_{i,k}(v_k) &= x_i(k)u(v_k) + (1-x_i(k))q_i(k)u(-z_M) \\
	&= x_i(k)u(v_k) - \left[ x_i(k) u(v_{k}) - U_{i,k-1}(v_k) \right]\\
	&= U_{i,k-1}(v_k).
	\end{align*}
	Since $U_{i,k}(v)$ is monotone non-decreasing in $v$ for any $i \in [n]$ and any $k \in [K]$, we have $U_{i,k-1}(v_k) \ge U_{i,k-1}(v_{k-1})$.  Therefore, by the induction hypothesis, we have
	$$ U_{i,k}(v_k) \ge U_{i,k-1}(v_{k-1}) \ge 0. $$
	The lemma follows by induction.
\end{proof}

\begin{lemma} \label{lem:2buyer_mech_feasible}
	Consider $n \ge 2$ buyers with exponential utility. With Assumption~(A1), the loser-pay auction is feasible, i.e. for each buyer $i \in [n]$ and for any $k \in [K]$, it holds that $q_i(k) \in [0,1]$.
\end{lemma}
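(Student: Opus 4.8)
The plan is to verify the two bounds $q_i(k)\ge 0$ and $q_i(k)\le 1$ separately, for every buyer $i$ and every $k$, splitting into the trivial range $k<k^*$ (where $q_i(k)=0$ by definition of the mechanism, hence in $[0,1]$) and the range $k\ge k^*$. The structural fact I would establish first, and reuse throughout, is the quantitative bound $1-x_i(k)\ge \tfrac{1}{n}f(v_k)>0$. To get it I condition on the bid of one fixed opponent $i'\neq i$ (possible since $n\ge 2$): if that opponent's ironed virtual value exceeds $\widetilde{\Phi}_u(k)$ then buyer $i$ loses with certainty, and if it equals $\widetilde{\Phi}_u(k)$ then buyer $i$ ties with at least that opponent and so wins with probability at most $\tfrac12$. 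Hence $1-x_i(k)\ge \tfrac12\Pr[\widetilde{\Phi}_u(t_{i'})\ge\widetilde{\Phi}_u(k)]\ge \tfrac12 f(v_k)\ge \tfrac1n f(v_k)$, using that bidding $v_k$ itself attains ironed virtual value $\widetilde{\Phi}_u(k)$ and that $n\ge 2$. This simultaneously makes the normalizing factor $\tfrac{1}{1-x_i(k)}$ well defined and puts the inequality of Assumption~(A1) within reach.

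For $q_i(k)\ge 0$ I would use the closed form $q_i(k)=\frac{1}{1-x_i(k)}\sum_{k'=k^*}^{k}\frac{[x_i(k')-x_i(k'-1)]u(v_{k'})}{-u(-z_M)}$: each summand is nonnegative because $x_i(\cdot)$ is non-decreasing in the bid (the allocation rule is monotone in the ironed virtual value, which is non-decreasing), $u(v_{k'})\ge u(0)=0$, and $-u(-z_M)=\beta(1-e^{-\alpha z_M})>0$; combined with $1-x_i(k)>0$ this gives $q_i(k)\ge 0$.

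For $q_i(k)\le 1$ I would switch to the recursive form $q_i(k)=\frac{x_i(k)u(v_k)-U_{i,k-1}(v_k)}{(1-x_i(k))(-u(-z_M))}$. Lemma~\ref{lem:2buyer_mech_ir} together with monotonicity of $U_{i,k-1}(\cdot)$ in the true value gives $U_{i,k-1}(v_k)\ge U_{i,k-1}(v_{k-1})\ge 0$, so $q_i(k)\le \frac{x_i(k)(e^{\alpha v_k}-1)}{(1-x_i(k))(1-e^{-\alpha z_M})}$. Since $t\mapsto t/(1-t)$ is increasing on $[0,1)$ and $x_i(k)\le 1-\tfrac1n f(v_k)$ from the first step, we get $\frac{x_i(k)}{1-x_i(k)}\le \frac{1-\frac1n f(v_k)}{\frac1n f(v_k)}$, and plugging this in and applying Assumption~(A1) yields $q_i(k)\le \frac{1-\frac1n f(v_k)}{\frac1n f(v_k)}\cdot\frac{e^{\alpha v_k}-1}{1-e^{-\alpha z_M}}<1$ (the case $v_k=0$, where $u(v_k)=0$, forces $q_i(k)=0$ and is covered automatically).

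I expect the main obstacle to be the combinatorial estimate $x_i(k)\le 1-\tfrac1n f(v_k)$: it is the only place where the tie-breaking rule of the loser-pay auction and the ironing-interval structure genuinely enter, and it must be phrased so that it holds uniformly over $k$ and is exactly the slack that Assumption~(A1) supplies. Everything else is routine once the interim allocation $x_i$ is known to be monotone and bounded away from $1$ in this quantitative way.
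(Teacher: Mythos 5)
Your proof is correct and follows essentially the same route as the paper's: the paper likewise reduces both bounds to the monotonicity of $x_i$, the IR lemma (giving $U_{i,k-1}(v_k)\ge U_{i,k-1}(v_{k-1})\ge 0$), and Assumption~(A1), merely packaging the argument as an induction on $k$ rather than reading nonnegativity off the closed-form telescoping sum. The one point where you add something is the explicit derivation of $1-x_i(k)\ge \tfrac{1}{n}f(v_k)$ by conditioning on a single opponent; the paper invokes the resulting inequality $\frac{x_i(k)}{1-x_i(k)}\le \frac{1-\frac{1}{n}f(v_k)}{\frac{1}{n}f(v_k)}$ without justification.
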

\begin{proof}
	We prove this by induction for each buyer. Consider buyer $i$. For $k < k^*$, by definition of $q_i(k)$, we have $q_i(k)=0$.  For $k=k^*$, it is clear that $q_i(k^*) \ge 0$.  Also, by assumption (A1), we have 
	$$
	q_i(k^*) = \frac{x_i(k^*)}{1-x_i(k^*)} \cdot \frac{u(v_{k^*})}{-u(-z_M)} \le \frac{1-\frac{1}{n}f(v_{k^*})}{\frac{1}{n}f(v_{k^*})} \cdot \frac{e^{\alpha v_{k^*}}-1}{e^{-\alpha z_M}-1} < 1.
	$$
	For the induction step, consider any $k > k^*$ and assume $q_i(k-1) \in [0,1]$.  First, to show that $q_i(k) \ge 0$, note that
	\begin{align*}
	x_i(k)u(v_k) - U_{i,k-1}(v_k) &= x_i(k)u(v_k) - x_i(k-1)u(v_k) - (1-x_i(k-1))q_i(k-1)u(-z_M) \\
	&= \left( x_i(k)-x_i(k-1) \right) u(v_k) + \left( 1-x_i(k-1) \right) q_i(k-1) (-u(-z_M)) \\
	&\ge 0,
	\end{align*}
	where the last inequality holds because $x_i(k) \ge x_i(k-1)$ by definition, as well as the induction hypothesis.  For the upper bound, since by Lemma~\ref{lem:2buyer_mech_ir}, we have $U_{i,k-1}(v_k) \ge 0$ for any $k > k^*$,  therefore, 
	$$
	q_i(k) < \frac{x_i(k)}{1-x_i(k)} \cdot \frac{u(v_k)}{-u(-z_M)} \le \frac{1-\frac{1}{n}f(v_{k})}{\frac{1}{n}f(v_{k})} \cdot \frac{e^{\alpha v_{k}}-1}{e^{-\alpha z_M}-1} < 1.
	$$
	The lemma follows by induction.
\end{proof}

\begin{lemma}
	Consider $n \ge 2$ buyers with exponential utility. With Assumption~(A1), the loser-pay auction is Bayesian incentive compatible.
\end{lemma}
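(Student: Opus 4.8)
The plan is to prove BIC by the classical monotonicity-plus-payment-identity argument, specialized to the exponential utility. Recall that the interim utility of buyer $i$ who reports $v_k$ when her true value is $v$ is $U_{i,k}(v)=x_i(k)u(v)+(1-x_i(k))q_i(k)u(-z_M)$, so the BIC constraint \eqref{eq:lp_cbic} is equivalent to $U_{i,k}(v_k)\ge U_{i,k'}(v_k)$ for every buyer $i$ and every pair $k,k'\in[K]$. First I would record two structural facts. (i) The interim allocation $x_i(k)$ is non-decreasing in $k$: the loser-pay auction allocates to the bidder with the largest ironed virtual value $\widetilde{\Phi}_u$ with symmetric tie-breaking, $\widetilde{\Phi}_u$ is non-decreasing by construction of the ironing, and $x_i(k)=0$ for $k<k^*$, so a higher report can only weakly raise the winning probability. (ii) Writing $u(x)=\beta(e^{\alpha x}-1)$ and setting $w=e^{\alpha v}$, each $v\mapsto U_{i,k}(v)$ is an affine function of $w$ with slope $\beta x_i(k)\ge 0$; thus the curves $\{U_{i,k}(\cdot)\}_k$ are increasing affine functions of $w$ whose slopes are non-decreasing in $k$.

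Next I would establish the payment identity. Substituting the recursive expression for $q_i(k)$ recorded at the start of this section gives, for $k>k^*$, $U_{i,k}(v_k)=x_i(k)u(v_k)-\big[x_i(k)u(v_k)-U_{i,k-1}(v_k)\big]=U_{i,k-1}(v_k)$, and for $k\le k^*$ one checks directly that $U_{i,k}(v_k)=U_{i,k-1}(v_k)=0$. So consecutive utility curves agree at $v=v_k$; combined with fact (ii) this yields the single-crossing relation $U_{i,k}(v)\ge U_{i,k-1}(v)$ for $v\ge v_k$ and $U_{i,k}(v)\le U_{i,k-1}(v)$ for $v\le v_k$, with equality throughout when the slopes coincide (as happens inside an ironed interval, where $x_i$ and hence $q_i$ are constant, so the two curves are identical).

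Then I would chain these local comparisons. For a downward deviation $k'<k$: since $v_k\ge v_{j+1}$ for all $k'\le j<k$, we get $U_{i,j+1}(v_k)\ge U_{i,j}(v_k)$, so telescoping yields $U_{i,k}(v_k)=U_{i,k-1}(v_k)\ge U_{i,k-2}(v_k)\ge\cdots\ge U_{i,k'}(v_k)$. For an upward deviation $k'>k$: since $v_k\le v_j$ for all $k<j\le k'$, we get $U_{i,j}(v_k)\le U_{i,j-1}(v_k)$, so telescoping yields $U_{i,k'}(v_k)\le U_{i,k'-1}(v_k)\le\cdots\le U_{i,k}(v_k)$. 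In both cases $U_{i,k}(v_k)\ge U_{i,k'}(v_k)$, which is exactly \eqref{eq:lp_cbic}.

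The hard part will be the bookkeeping around the reserve index $k^*$ and the ironed intervals: checking that $x_i(\cdot)$ is genuinely monotone across the $k^*$ boundary, that $q_i(\cdot)$ (and hence $U_{i,k}(\cdot)$) is constant within each ironing interval so that the single-crossing step degenerates gracefully to equalities there, and that the payment identity holds uniformly in these degenerate cases. Once these are verified, the single-crossing/telescoping core is routine; the ingredients needed for the curves to make sense — $q_i(k)\in[0,1]$ and monotonicity of $v\mapsto U_{i,k}(v)$ — are already available from Lemma~\ref{lem:2buyer_mech_feasible} and Lemma~\ref{lem:2buyer_mech_ir} under Assumption~(A1).
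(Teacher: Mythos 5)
Your proposal is correct and follows essentially the same route as the paper: the paper also uses the payment identity $U_{i,k}(v_k)=U_{i,k-1}(v_k)$ together with the slope ordering $\partial U_{i,k}/\partial v=\alpha x_i(k)e^{\alpha v}$ (i.e., monotonicity of $x_i$), and then chains adjacent comparisons — phrased there as two inductions on $k'$, for $v\ge v_k$ when $k'<k$ and $v\le v_k$ when $k'>k$ — which is exactly your single-crossing-plus-telescoping argument.
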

\begin{proof}
	Fix any $k \in [K]$. We would like to show that for any $k' \in [K]$ and $k' \not=k$, it holds that $U_{i,k}(v_k) \ge U_{i,k'}(v_k)$.  Before we prove the lemma, note that
	$$
	\frac{\partial U_{i,k}(v)}{\partial v} = \alpha x_i(k) e^{\alpha v}.
	$$
	Hence, by definition of $x_i(k)$, for any $k > k'$, we have $x_i(k) \ge x_i(k')$.  Therefore, for any $k > k'$, we have $\frac{\partial U_{i,k}(v)}{\partial v} \ge \frac{\partial U_{i,k'}(v)}{\partial v}$.  
	
	To prove the lemma, first, we show that for any $k' < k$, we have $U_{i,k}(v_k) \ge U_{i,k'}(v_k)$. To see this, we claim that $U_{i,k}(v) \ge U_{i,k'}(v)$ for any $v \ge v_k$.  We prove this by induction on $k'$.  The base case ($k'=k-1$) holds since $U_{i,k}(v_k)=U_{i,k-1}(v_k)$ by definition of $q_i(k)$. For the induction step, consider any $k' < k-1$ and assume $U_{i,k}(v) \ge U_{i,k'+1}(v)$ holds for any $v \ge v_k$.  Since $U_{i,k'+1}(v_{k'+1})=U_{i,k'}(v_{k'+1})$ by definition of $q_i(k'+1)$, we can find that $U_{i,k'+1}(v) \ge U_{i,k'}(v)$ for any $v \ge v_{k'+1}$.  Therefore, we have $U_{i,k}(v) \ge U_{i,k'+1}(v) \ge U_{i,k'}(v)$ for any $v \ge v_k$.  The claim follows by induction.
	
	For $k' > k$, we will use a similar argument, i.e. we claim that $U_{i,k}(v) \ge U_{i,k'}(v)$ for any $v \le v_k$ and prove it by induction.  The base case ($k'=k+1$) holds because $U_{i,k}(v_{k+1}) = U_{i,k+1}(v_{k+1})$.  For the induction step, consider any $k' > k+1$ and assume $U_{i,k}(v) \ge U_{i,k'-1}(v)$ holds for any $v \le v_k$.  Since $U_{i,k'-1}(v_{k'}) = U_{i,k'}(v_{k'})$, we have $U_{i,k'-1}(v) \ge U_{i,k'}(v)$ for any $v \le v_{k'}$.  Therefore, we have $U_{i,k}(v) \ge U_{i,k'-1}(v) \ge U_{i,k'}(v)$ for any $v \le v_k$.  The claim follows by induction.  This completes the proof of the lemma.	 
\end{proof}

\subsection{Duality Theory for Optimal Mechanism}
\label{sec:app_multi_buyer_opt}

\begin{lemma} \label{lemma:two_buyer_exp_two_price1}
	In the dual program \eqref{eq:mech_dual_multibuyer}, both $\Gamma_{i,k}(z;\lambda,\mu)$ and $\Pi_{i,k}(z;\lambda,\mu)$ are either increasing or strongly convex in $z$ for any $i \in [n]$ and for $z \ge 0$.
\end{lemma}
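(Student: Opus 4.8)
The plan is to mirror, essentially verbatim, the proof of Lemma~\ref{lemma:exp_two_price1} for the single-buyer case, since the functions $\Gamma_{i,k}$ and $\Pi_{i,k}$ have exactly the same algebraic shape as $\Gamma_k$ and $\Pi_k$ — the only change is that the dual variables carry an extra buyer index $i$, which plays no role in the dependence on $z$. First I would substitute the exponential utility $u(x)=\beta(e^{\alpha x}-1)$ into the definitions of $\Gamma_{i,k}(z;\lambda,\mu)$ and $\Pi_{i,k}(z;\lambda,\mu)$, expand, and collect the terms according to their dependence on $z$: the linear term $f(v_k)z$, the terms proportional to $e^{-\alpha z}$, and the constants.

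After collecting, I would introduce the shorthand
\begin{align*}
A_{i,k} &= \sum_{k'}\left(\lambda_{i,k,k'}-\lambda_{i,k',k}\right) + \mu_{i,k}, \\
B_{i,k} &= \sum_{k'}\left(\lambda_{i,k,k'}e^{\alpha v_k} - \lambda_{i,k',k}e^{\alpha v_{k'}}\right) + \mu_{i,k}e^{\alpha v_k},
\end{align*}
so that $\Gamma_{i,k}(z;\lambda,\mu) = f(v_k)z + \beta B_{i,k}e^{-\alpha z} - \beta A_{i,k}$ and $\Pi_{i,k}(z;\lambda,\mu) = f(v_k)z + \beta A_{i,k}e^{-\alpha z} - \beta A_{i,k}$. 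Then I would argue the sign dichotomy: differentiating in $z$, $\Gamma_{i,k}'(z) = f(v_k) - \alpha\beta B_{i,k}e^{-\alpha z}$ and $\Gamma_{i,k}''(z) = \alpha^2\beta B_{i,k}e^{-\alpha z}$. If $B_{i,k}>0$ the second derivative is bounded below by a positive constant on $z\in[0,z_M]$, hence $\Gamma_{i,k}$ is strongly convex there; if $B_{i,k}\le 0$ then $\Gamma_{i,k}'(z)\ge f(v_k)>0$ (using $f(v)>0$ for all $v$), so $\Gamma_{i,k}$ is strictly increasing. The identical argument applied to $\Pi_{i,k}$ with $A_{i,k}$ in place of $B_{i,k}$ finishes the proof.

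There is no real obstacle here; the only thing to be careful about is the bookkeeping when expanding $\sum_{k'}\big(\lambda_{i,k,k'}\beta(e^{\alpha(v_k-z)}-1) - \lambda_{i,k',k}\beta(e^{\alpha(v_{k'}-z)}-1)\big) + \mu_{i,k}\beta(e^{\alpha(v_k-z)}-1)$ and checking that the $z$-independent pieces indeed assemble into $-\beta A_{i,k}$ while the $e^{-\alpha z}$ coefficient assembles into $\beta B_{i,k}$ (for $\Gamma_{i,k}$) or $\beta A_{i,k}$ (for $\Pi_{i,k}$, where the $e^{\alpha v}$ weights are absent because every surviving term has argument $-z$). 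I would also remark explicitly that "strongly convex" is meant on the relevant compact domain $z\in[0,z_M]$, which is all that is needed for the subsequent claim that the constraints $\{\Gamma_{i,k}(z_j;\lambda,\mu)\le\nu_{i,\vv{k}}+\gamma_{\vv{k}}\}_j$ and $\{\Pi_{i,k}(z_j;\lambda,\mu)\le\nu_{i,\vv{k}}\}_j$ can bind only at the endpoints $j=1$ or $j=M$.
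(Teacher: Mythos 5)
Your proposal is correct and follows essentially the same route as the paper's own proof: the same decomposition $\Gamma_{i,k}(z)=f(v_k)z+\beta B_{i,k}e^{-\alpha z}-\beta A_{i,k}$ and $\Pi_{i,k}(z)=f(v_k)z+\beta A_{i,k}e^{-\alpha z}-\beta A_{i,k}$, with the same case split on the signs of $A_{i,k}$ and $B_{i,k}$. Your added remarks (the explicit derivatives and the note that strong convexity need only hold on $[0,z_M]$) are fine but not a departure from the paper's argument.
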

\begin{proof}
	To show the lemma, we can simply rewrite $\Gamma_{i,k}(z;\lambda,\mu)$ as
	\begin{align}
	\Gamma_{i,k}(z;\lambda,\mu) =f(v_k)z + \beta B_{i,k} e^{-\alpha z} - \beta A_{i,k},
	\end{align}
	where
	\begin{align*}
	A_{i,k} &= \sum_{k'}(\lambda_{i,k,k'}-\lambda_{i,k',k}) + \mu_{i,k} \\
	B_{i,k} &= \sum_{k'}(\lambda_{i,k,k'}e^{\alpha v_k} - \lambda_{i,k',k}e^{\alpha v_{k'}}) +  \mu_{i,k} e^{\alpha v_k}.
	\end{align*}
	We can find that if $B_{i,k}>0$, then $\Gamma_{i,k}(z;\lambda,\mu)$ is strongly convex in $z$ whereas if $B_{i,k} \le 0$, then $\Gamma_{i,k}(z;\lambda,\mu)$ is monotone increasing in $z$.
	Similarly, for $\Pi_{i,k}(z;\lambda,\mu)$, we have
	\begin{align}
	\Pi_{i,k}(z;\lambda,\mu) =f(v_k)z + \beta A_{i,k} e^{-\alpha z} - \beta A_{i,k}.
	\end{align}
	Also, we can find that if $A_{i,k}>0$, then $\Pi_{i,k}(z;\lambda,\mu)$ is strongly convex in $z$ whereas if $A_{i,k} \le 0$, then $\Pi_{i,k}(z;\lambda,\mu)$ is monotone increasing in $z$.
\end{proof}

\begin{proof}[Proof of Theorem~\ref{thm:two_buyer_regular}, with regularity condition]
	To prove the theorem, we construct a set of feasible dual variables and show that it makes the objective of the dual program \eqref{eq:mech_dual_multibuyer} equal to the revenue of the proposed mechanism.
	
	For each buyer $i \in [n]$ and any $k,k' \in [K]$, we set $\mu_{i,k}=0$ and 
	\[
	\lambda_{i,k,k'} = 
	\begin{cases}
	\sum_{\ell \ge k} f(v_{\ell}) z_M \frac{e^{\alpha v_k}}{u(v_k)-u(v_k-z_M)}, &\text{if }  k' = k-1\\
	0, &\text{otherwise}.
	\end{cases}
	\]
	Also, for each buyer $i \in [n]$ and any $\vv{k} \in [K]^n$, we set $\nu_{i,\vv{k}}=0$ and
	\[
	\gamma_{\vv{k}} = \max_{i \in [n]} \left\{ 0, f(\vv{k})\Phi_u(k_i)z_M  \right\}.
	\]
	
	We first verify that this choice of values for the dual variables is feasible for the dual program \eqref{eq:mech_dual_multibuyer}.  By Lemma~\ref{lemma:two_buyer_exp_two_price1}, it suffices to check whether $f(\vv{k}_{-i})\Gamma_{i,k_i}(0;\lambda,\mu) \le \gamma_{\vv{k}}$, $f(\vv{k}_{-i})\Gamma_{i,k_i}(z_M;\lambda,\mu) \le \gamma_{\vv{k}}$, and $f(\vv{k}_{-i})\Pi_{i,k_i}(z_M;\lambda,\mu) \le 0$. 
	First, we have
	\begin{align*}
	f(\vv{k}_{-i})\Gamma_{i,k_i}(0; \lambda,\mu) &= f(\vv{k}_{-i}) \left[ \lambda_{i,k_i,k_i-1} u(v_{k_i}) - \lambda_{i,{k_i}+1,k_i} u(v_{k_i+1}) \right]\\
	&= f(\vv{k}_{-i}) \left[ \frac{\sum_{\ell \ge k_i} f(v_{\ell}) z_M e^{\alpha v_{k_i}} u(v_{k_i})}{u(v_{k_i})-u(v_{k_i}-z_M)} - \frac{\sum_{\ell \ge k_i+1} f(v_{\ell}) z_M e^{\alpha v_{k_i+1}} u(v_{k_i+1})}{u(v_{k_i+1})-u(v_{k_i+1}-z_M)} \right]\\
	&= f(\vv{k})\Phi_u(k_i)z_M \le \gamma_{\vv{k}}
	\end{align*}
	and 
	\begin{align*}
	&f(\vv{k}_{-i})\Pi_{i,k_i}(z_M;\lambda,\mu) \\
	&\quad = f(\vv{k}_{-i})\left[ f(v_{k_i}) z_M + (\lambda_{i,k_i,k_i-1} - \lambda_{i,k_i+1,k_i}) u(-z_M) \right] \\
	&\quad = f(\vv{k}_{-i})\left[ f(v_{k_i}) + \left( \frac{\sum_{\ell \ge k_i} f(v_{\ell}) e^{\alpha v_{k_i}}}{u(v_{k_i})-u(v_{k_i}-z_M)} -  \frac{\sum_{\ell \ge k_i+1} f(v_{\ell}) e^{\alpha v_{k_i+1}}}{u(v_{k_i+1})-u(v_{k_i+1}-z_M)} \right) u(-z_M) \right] z_M \\
	&\quad = f(\vv{k}_{-i})\left[ f(v_{k_i}) + \left( \frac{\sum_{\ell \ge k_i} f(v_{\ell}) e^{\alpha v_{k_i}}}{e^{\alpha v_{k_i}}(1-e^{-\alpha z_M})} -  \frac{\sum_{\ell \ge k_i+1} f(v_{\ell}) e^{\alpha v_{k_i+1}}}{e^{\alpha v_{k_i+1}}(1-e^{-\alpha z_M})} \right) (e^{-\alpha z_M}-1) \right] z_M \\
	&\quad = 0.
	\end{align*}
	For $\Gamma_{i,k_i}(z_M;\lambda,\mu)$, we note that
	\begin{align*}
	f(v_k)\Phi_u(k)z_M &= \sum_{\ell \ge k} f(v_{\ell}) z_M \frac{e^{\alpha v_k} u(v_k)}{u(v_k)-u(v_k-z_M)} - \sum_{\ell \ge k+1} f(v_{\ell}) z_M \frac{e^{\alpha v_{k+1}} u(v_{k+1})}{u(v_{k+1})-u(v_{k+1}-z_M)}\\
	&= \sum_{\ell \ge k} f(v_{\ell}) z_M \frac{e^{\alpha v_k}-1}{1-e^{-\alpha z_M}} - \sum_{\ell \ge k+1} f(v_{\ell}) z_M \frac{e^{\alpha v_{k+1}}-1}{1-e^{-\alpha z_M}}\\
	&= -\frac{f(v_k)}{1-e^{-\alpha z_M}} + \sum_{\ell \ge k} f(v_{\ell}) z_M \frac{e^{\alpha v_{k}}}{1-e^{-\alpha z_M}} - \sum_{\ell \ge k+1} f(v_{\ell}) z_M \frac{e^{\alpha v_{k+1}}}{1-e^{-\alpha z_M}}.
	\end{align*}
	Then, we have
	\begin{align*}
	&f(\vv{k}_{-i})\Gamma_{i,k_i}(z_M; \lambda,\mu) \\
	&\quad= f(\vv{k}_{-i}) \left[ f(v_{k_i})z_M + \lambda_{i,k_i,k_i-1} u(v_{k_i}-z_M) - \lambda_{i,k_i+1,k_i} u(v_{k_i+1}-z_M) \right]\\
	&\quad= f(\vv{k}_{-i}) \left[ f(v_{k_i}) + \frac{\sum_{\ell \ge k_i} f(v_{\ell})  e^{\alpha v_{k_i}} u(v_{k_i}-z_M)}{u(v_{k_i})-u(v_{k_i}-z_M)} -  \frac{\sum_{\ell \ge k+1} f(v_{\ell}) e^{\alpha v_{k_i+1}} u(v_{k_i+1}-z_M)}{u(v_{k_i+1})-u(v_{k_i+1}-z_M)} \right] z_M\\
	&\quad= f(\vv{k}_{-i}) \left[ f(v_{k_i}) + \frac{\sum_{\ell \ge k_i} f(v_{\ell})  e^{\alpha v_{k_i}} (e^{\alpha (v_{k_i}-z_M)} - 1)}{e^{\alpha v_{k_i}} ( 1- e^{-\alpha z_M})} -  \frac{\sum_{\ell \ge k+1} f(v_{\ell}) e^{\alpha v_{k_i+1}} (e^{\alpha (v_{k_i+1}-z_M)}-1)}{e^{\alpha v_{k_i+1}}(1-e^{-\alpha z_M})} \right] z_M\\
	&\quad= f(\vv{k}_{-i}) \left[ f(v_{k_i})\frac{-e^{-\alpha z_M}}{1-e^{-\alpha z_M}} + \frac{\sum_{\ell \ge k_i} f(v_{\ell})  e^{\alpha (v_{k_i}-z_M)}}{1- e^{-\alpha z_M}} -  \frac{\sum_{\ell \ge k+1} f(v_{\ell}) e^{\alpha (v_{k_i+1}-z_M)}}{1-e^{-\alpha z_M}} \right] z_M\\
	&\quad= f(\vv{k})\Phi_u(k_i)z_M e^{-\alpha z_M} \le \gamma_{\vv{k}}.
	\end{align*}
	
	For the next step, we show that under this choice of dual variables, the dual objective equals the revenue of the mechanism.  The revenue of the mechanism is the sum of the payments from all buyers.  Consider buyer $i$. The expected payment by buyer $i$ can be written as
	\begin{align}
	&\sum_{k_i \ge k^*} f(v_{k_i})(1-x_i(k_i))q_i(k_i)z_M \nonumber\\
	&\quad= \sum_{k_i \ge k^*} \sum_{k = k^*}^{k_i} f(v_{k_i})\left[\frac{[x_i(k)-x_i(k-1)]u(v_{k})}{-u(-z_M)}\right] z_M \nonumber\\
	&\quad= \sum_{k_i \ge k^*} \left[\sum_{k \ge k_i} f(v_k) \frac{x_i(k_i)u(v_{k_i})}{-u(z_M)} - \sum_{k \ge k_i+1}f(v_k)\frac{x_i(k_i)u(v_{k_i+1})}{-u(z_M)}\right] z_M, \label{eq:two_buyers_rev_from_b1_}
	\end{align}
	where in the last equality we change the order of summation.
	
	Further, we note that
	\begin{align*}
	\sum_{\vv{k} \in [K]^n}\gamma_{\vv{k}} \left[ \frac{\mathbbm{1}\left\{ v_{k_i} \ge v_{k_{i'}}, \forall i' \not= i \right\}}{\sum_{i' \in [n]} \mathbbm{1}\left\{ v_{k_i} = v_{k_{i'}} \right\}} \right] 
	&= \sum_{k_i \ge k^*} \sum_{\vv{k}_{-i} \in [K]^{n-1}} \gamma_{\vv{k}} \left[ \frac{\mathbbm{1}\left\{ v_{k_i} \ge v_{k_{i'}}, \forall i' \not= i \right\}}{\sum_{i' \in [n]} \mathbbm{1}\left\{ v_{k_i} = v_{k_{i'}} \right\}} \right] \\
	&= \sum_{k_i \ge k^*} \sum_{\vv{k}_{-i} \in [K]^{n-1}} f(k_i,\vv{k}_{-i}) \Phi_u(k_i) z_M \left[ \frac{\mathbbm{1}\left\{ v_{k_i} \ge v_{k_{i'}}, \forall i' \not= i \right\}}{\sum_{i' \in [n]} \mathbbm{1}\left\{ v_{k_i} = v_{k_{i'}} \right\}} \right] \\
	&= \sum_{k_i \ge k^*} x_i(k_i) f(v_{k_i}) \Phi_u(k_i) z_M.
	\end{align*}
	
	Then, by definition of $\Phi_u(k_i)$, we have
	\begin{align}
	&\sum_{\vv{k} \in [K]^n}\gamma_{\vv{k}} \left[ \frac{\mathbbm{1}\left\{ v_{k_i} \ge v_{k_{i'}}, \forall i' \not= i \right\}}{\sum_{i' \in [n]} \mathbbm{1}\left\{ v_{k_i} = v_{k_{i'}} \right\}} \right] = \sum_{k_i \ge k^*} x_i(k_i) f(v_{k_i}) \Phi_u(k_i) z_M \nonumber\\
	&= \sum_{k_i \ge k^*} x_i(k_i) \left( \sum_{k_i' \ge k_i}f(v_{k_i'}) \cdot \frac{e^{\alpha v_{k_i}} u(v_{k_i})}{u(v_{k_i})-u(v_{k_i}-z_M)} - \sum_{k_i' \ge {k_i+1}}f(v_{k_i'}) \cdot \frac{e^{\alpha v_{k_i+1}} u(v_{k_i+1})}{u(v_{k_i+1})-u(v_{k_i+1}-z_M)} \right) z_M \nonumber\\
	&= \sum_{k_i \ge k^*} x_i(k_i) \left( \sum_{k_i' \ge k_i}f(v_{k_i'}) \cdot \frac{u(v_{k_i})}{-u(-z_M)} - \sum_{k_i' \ge {k_i+1}}f(v_{k_i'}) \cdot \frac{ u(v_{k_i+1})}{-u(-z_M)} \right) z_M, \label{eq:temp_1234_}
	\end{align}
	where in the last equality we use the fact that $u(v)-u(v-z_M)=\beta e^{\alpha v}(1-e^{-\alpha z_M}) = -e^{\alpha v}u(-z_M)$.  We can find that \eqref{eq:temp_1234_} is exactly equal to \eqref{eq:two_buyers_rev_from_b1_}.  Hence,
	$$ \sum_{\vv{k} \in [K]^n}\gamma_{\vv{k}} \left[ \frac{\mathbbm{1}\left\{ v_{k_i} \ge v_{k_{i'}}, \forall i' \not= i \right\}}{\sum_{i' \in [n]} \mathbbm{1}\left\{ v_{k_i} = v_{k_{i'}} \right\}} \right] = \sum_{k_i \ge k^*} f(v_{k_i})(1-x_i(k_i))q_i(k_i)z_M.$$
	Combining these together, we can conclude that the dual objective
	\begin{align*}
	\sum_{\vv{k} \in [K]^n} \gamma_{\vv{k}} 
	&=  \sum_{\vv{k} \in [K]^n}\gamma_{\vv{k}} \left[ \sum_{i \in [n]} \frac{ \mathbbm{1}\left\{ v_{k_i} \ge v_{k_{i'}}, \forall i' \not= i \right\}}{\sum_{i' \in [n]} \mathbbm{1}\left\{ v_{k_i} = v_{k_{i'}} \right\}} \right] \\
	&= \sum_{i \in [n]} \sum_{k_i \ge k^*} f(v_{k_i})(1-x_i(k_i))q_i(k_i)z_M
	\end{align*}
	is equal to the revenue of the mechanism.  Therefore the mechanism is optimal.
\end{proof}

\subsection{Optimal Mechanism beyond Regularity Condition}
\label{sec:app_iron_multiple}
In the multiple buyer scenario, we can repeat the ironing process that we have discussed in Section~\ref{sec:app_iron_single} with the new virtual value function defined in Section~\ref{sec:multi_buyer_intro}.  We can iron the virtual value function by first taking the convex hull of the revenue curve, defined as
\[
F_{\Phi}(k) = \sum_{\ell \le k} \Phi_u(\ell).
\]
Then, we take the derivative of the revenue curve as the ironed virtual value function.  Formally, we can define the ironing process for the multi-buyer case as:
\begin{definition}[Ironed virtual value for multiple buyers]
	Given a virtual value function $\Phi_u$.  Let $\{[a_1,b_1],[a_2,b_2],\dots,[a_P,b_P]\}$ denote the minimum intervals that are not convex on the revenue curve $F_{\Phi}$.  The ironed virtual value function is defined as
	\[
	\widetilde{\Phi}_{u}(k) = 
	\begin{cases}
	\frac{ \sum_{\ell = a_p}^{b_p} f(v_{\ell}) \Phi_u(\ell) }{ \sum_{\ell = a_p}^{b_p} f(v_{\ell}) } , &\text{if }  k \in [a_p,b_p], \text{ for any } p \in [P]\\
	\Phi_u(\ell), &\text{otherwise}.
	\end{cases}
	\]
\end{definition}

Then, we show the optimality of the loser-pay auction without regularity condition.

\begin{proof}[Proof of Theorem~\ref{thm:two_buyer_regular}, without regularity condition]
	To prove the theorem, we will take the similar modification to the dual variables as we have done in the proof of Theorem~\ref{thm:mec_exputil}.  We will show that we can add \emph{loops} to the dual variables assigned in Section~\ref{sec:app_multi_buyer_opt} to apply the ironing process.
	
	First, consider the process of adding a \emph{loop} to $\lambda$.  Let $\lambda'$ coincides with $\lambda$ except at $k_i$ and $k_i'$ for some $k_i' < k_i$ and some $i \in [n]$:
	\begin{align*}
	\lambda_{i,k_i,k_i'}' &= \lambda_{i,k_i,k_i'} + \sum_{\ell \ge k_i} A z_M \frac{e^{\alpha v_{k_i}}}{u(v_{k_i}) - u(v_{k_i}-z_M)}\\
	\lambda_{i,k_i',k_i}' &= \lambda_{i,k_i',k_i} + \sum_{\ell \ge k_i} A z_M \frac{e^{\alpha v_{k_i'}}}{u(v_{k_i'}) - u(v_{k_i'}-z_M)}
	\end{align*}
	for some $A > 0$.  First, we can find that
	\begin{align*}
	&f(\vv{k}_{-i}) \Gamma_{i,k_i}(0;\lambda',\mu) \\
	&= f(\vv{k}_{-i}) \left[ \Gamma_{i,k_i}(0;\lambda,\mu) + \sum_{\ell \ge k_i} A z_M \frac{e^{\alpha v_{k_i}}u(v_{k_i})}{u(v_{k_i}) - u(v_{k_i}-z_M)} - \sum_{\ell \ge k_i} A z_M \frac{e^{\alpha v_{k_i'}}u(v_{k_i'})}{u(v_{k_i'}) - u(v_{k_i'}-z_M)}\right]\\
	&= f(\vv{k}_{-i}) \left[ \Gamma_{i,k_i}(0;\lambda,\mu) + (K-k_i+1) A z_M \frac{e^{\alpha v_{k_i}} - e^{\alpha v_{k_i'}}}{1 - e^{-\alpha z_M}}\right]\\
	& > f(\vv{k}_{-i})\Gamma_{i,k_i}(0;\lambda,\mu).
	\end{align*}
	Similarly, we have
	\begin{align*}
	&f(\vv{k}_{-i}) \Gamma_{i,k_i'}(0;\lambda',\mu) \\
	&= f(\vv{k}_{-i}) \left[ \Gamma_{i,k_i'}(0;\lambda,\mu) + \sum_{\ell \ge k_i} A z_M \frac{e^{\alpha v_{k_i'}}u(v_{k_i'})}{u(v_{k_i'}) - u(v_{k_i'}-z_M)} - \sum_{\ell \ge k_i} A z_M \frac{e^{\alpha v_{k_i}}u(v_{k_i})}{u(v_{k_i}) - u(v_{k_i}-z_M)}\right]\\
	&= f(\vv{k}_{-i}) \left[ \Gamma_{i,k_i'}(0;\lambda,\mu) + (K-k_i+1) A z_M \frac{e^{\alpha v_{k_i'}} - e^{\alpha v_{k_i}}}{1 - e^{-\alpha z_M}}\right]\\
	& < f(\vv{k}_{-i})\Gamma_{i,k_i'}(0;\lambda,\mu).
	\end{align*}
	We can see that after adding the loop, $\Gamma_{i,k_i}$ is increased by $(K-k_i+1) A z_M \frac{e^{\alpha v_{k_i}} - e^{\alpha v_{k_i'}}}{1 - e^{-\alpha z_M}}$ and $\Gamma_{i,k_i'}$ is decreased by $(K-k_i+1) A z_M \frac{e^{\alpha v_{k_i}} - e^{\alpha v_{k_i'}}}{1 - e^{-\alpha z_M}}$.  Due to the connection of $\Gamma_{i,k_i}$ and the virtual value function as we mentioned in Section~\ref{sec:app_multi_buyer_opt}, adding the loop is equivalent to modifying the virtual value function.  In addition, we can find that adding the loop does not change the average virtual value of $k_i$ and $k_i'$, i.e. $f(\vv{k}_{-i})\Gamma_{i,k_i}(0;\lambda',\mu) + f(\vv{k}_{-i})\Gamma_{i,k_i'}(0;\lambda',\mu) = f(\vv{k}_{-i})\Gamma_{i,k_i}(0;\lambda,\mu) + f(\vv{k}_{-i})\Gamma_{i,k_i'}(0;\lambda,\mu) = f(\vv{k}_{-i},k_i)\Phi_u(k_i) + f(\vv{k}_{-i},k_i')\Phi_u(k_i')$.  Then, we make the following claim:
	\begin{claim}
		Assume $f(\vv{k}_{-i})\Gamma_{i,k}(z_M;\lambda,\mu) = f(\vv{k}_{-i})\Gamma_{i,k}(0;\lambda,\mu) e^{-\alpha z_M}$ and $f(\vv{k}_{-i})\Pi_{i,k}(z_M;\lambda,\mu) = 0$ hold for $k=k_i$ and $k=k_i'$ for some $k_i'<k_i$ and some $i \in [n]$.  Suppose $\lambda'$ is formed by adding a loop between $k_i$ and $k_i'$ in $\lambda$. Then, it holds that
		\begin{align*}
		f(\vv{k}_{-i})\Gamma_{i,k}(z_M;\lambda',\mu) &= f(\vv{k}_{-i})\Gamma_{i,k}(0;\lambda',\mu) e^{-\alpha z_M} \\
		f(\vv{k}_{-i})\Pi_{i,k}(z_M;\lambda',\mu) &= 0
		\end{align*}
		for $k=k_i$ and $k=k_i'$.
	\end{claim}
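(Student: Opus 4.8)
The plan is to reduce both displayed equalities in the claim to a single scalar identity on the quantity $A_{i,k}=\sum_{k'}(\lambda_{i,k,k'}-\lambda_{i,k',k})+\mu_{i,k}$ from the proof of Lemma~\ref{lemma:two_buyer_exp_two_price1}, and then to check that adding the prescribed loop leaves $A_{i,k_i}$ and $A_{i,k_i'}$ untouched. I would start from the closed forms derived there, $\Gamma_{i,k}(z;\lambda,\mu)=f(v_k)z+\beta B_{i,k}e^{-\alpha z}-\beta A_{i,k}$ and $\Pi_{i,k}(z;\lambda,\mu)=f(v_k)z+\beta A_{i,k}e^{-\alpha z}-\beta A_{i,k}$, where $B_{i,k}=\sum_{k'}(\lambda_{i,k,k'}e^{\alpha v_k}-\lambda_{i,k',k}e^{\alpha v_{k'}})+\mu_{i,k}e^{\alpha v_k}$.

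First I would observe that $\Pi_{i,k}(0;\lambda,\mu)=0$ holds identically, and that a one-line computation shows that both $\Pi_{i,k}(z_M;\lambda,\mu)$ and $\Gamma_{i,k}(z_M;\lambda,\mu)-\Gamma_{i,k}(0;\lambda,\mu)e^{-\alpha z_M}$ are equal to $f(v_k)z_M-\beta A_{i,k}(1-e^{-\alpha z_M})$. Since $f(\vv{k}_{-i})>0$, both equalities in the hypothesis of the claim are therefore equivalent to the single identity $A_{i,k}=f(v_k)z_M/(\beta(1-e^{-\alpha z_M}))$ for $k\in\{k_i,k_i'\}$, and the conclusion is equivalent to that same identity with $\lambda'$ in place of $\lambda$. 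Note that $B_{i,k}$ enters neither invariant — which is precisely why the loop is free to change $\Gamma_{i,k}(0;\lambda,\mu)=\beta(B_{i,k}-A_{i,k})$, the object playing the role of the virtual value.

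It then remains to show that $A_{i,k}$ is unchanged at $k=k_i$ and $k=k_i'$ when $\lambda$ is replaced by $\lambda'$, while $\mu$ is unchanged by construction. Using $u(v)-u(v-z_M)=\beta e^{\alpha v}(1-e^{-\alpha z_M})$, the two loop increments $\lambda'_{i,k_i,k_i'}-\lambda_{i,k_i,k_i'}$ and $\lambda'_{i,k_i',k_i}-\lambda_{i,k_i',k_i}$ both collapse to the same value $\delta=(K-k_i+1)Az_M/(\beta(1-e^{-\alpha z_M}))$, because the numerators $e^{\alpha v_{k_i}}$ and $e^{\alpha v_{k_i'}}$ cancel against those denominators. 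In $A_{i,k_i}$ the term $\lambda_{i,k_i,k_i'}$ appears with a plus sign and $\lambda_{i,k_i',k_i}$ with a minus sign, so the net change is $\delta-\delta=0$; symmetrically the net change in $A_{i,k_i'}$ is $-\delta+\delta=0$, and no other $A_{i,k}$ is affected since the loop modifies only $\lambda_{i,k_i,k_i'}$ and $\lambda_{i,k_i',k_i}$. Hence the scalar identity, and therefore both displayed equalities, persist at $k_i$ and $k_i'$.

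I do not expect a genuine obstacle; the whole content is recognizing that (i) the two invariants collapse to one condition on $A_{i,k}$ alone, and (ii) the deliberate $e^{\alpha v_{k_i}}$ weighting in the loop definition is exactly what forces the two increments to be equal, so that they cancel in both $A_{i,k_i}$ and $A_{i,k_i'}$. The only care needed is the sign bookkeeping of which of $\lambda_{i,k_i,k_i'},\lambda_{i,k_i',k_i}$ enters which of $A_{i,k_i},A_{i,k_i'}$ — this is the \emph{distortion} accounting that replaces the flow-conservation argument of \cite{cai2016duality}.
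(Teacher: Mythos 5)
Your proof is correct. It does take a genuinely different route from the paper's: the paper verifies the claim by brute-force expansion, substituting the two loop increments into $\Gamma_{i,k_i}(z_M;\lambda',\mu)$, $\Gamma_{i,k_i'}(z_M;\lambda',\mu)$ and $\Pi_{i,k_i}(z_M;\lambda',\mu)$ and simplifying each until the stated relations reappear, whereas you reorganize the argument around the closed forms $\Gamma_{i,k}(z)=f(v_k)z+\beta B_{i,k}e^{-\alpha z}-\beta A_{i,k}$ and $\Pi_{i,k}(z)=f(v_k)z+\beta A_{i,k}e^{-\alpha z}-\beta A_{i,k}$ already established in Lemma~\ref{lemma:two_buyer_exp_two_price1}. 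Your key observation — that $\Pi_{i,k}(z_M;\lambda,\mu)$ and $\Gamma_{i,k}(z_M;\lambda,\mu)-\Gamma_{i,k}(0;\lambda,\mu)e^{-\alpha z_M}$ are \emph{identically} equal to $f(v_k)z_M-\beta A_{i,k}(1-e^{-\alpha z_M})$, so both hypotheses and both conclusions collapse to the single condition $A_{i,k}=f(v_k)z_M/(\beta(1-e^{-\alpha z_M}))$ — is not made explicit in the paper, and it reduces the whole claim to checking that the two loop increments are equal (both are $(K-k_i+1)Az_M/(\beta(1-e^{-\alpha z_M}))$ after the $e^{\alpha v}$ factors cancel against $u(v)-u(v-z_M)=\beta e^{\alpha v}(1-e^{-\alpha z_M})$) and hence cancel in $A_{i,k_i}$ and $A_{i,k_i'}$ by the sign pattern. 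What your approach buys is a transparent explanation of \emph{why} the $e^{\alpha v}$ weighting in the loop definition is the right one: it is exactly the weighting that preserves the conservation-type quantity $A_{i,k}$ (the analogue of flow conservation in \cite{cai2016duality}) while leaving $B_{i,k}$, and hence the virtual value $\Gamma_{i,k}(0;\lambda,\mu)$, free to move. The paper's computation verifies the same facts but leaves that structure implicit. Your sign bookkeeping for which of $\lambda_{i,k_i,k_i'},\lambda_{i,k_i',k_i}$ enters $A_{i,k_i}$ versus $A_{i,k_i'}$ is correct, and your remark that no other $A_{i,k}$ is touched is also right.
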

	To prove this claim, we first note that
	\begin{align*}
	&f(\vv{k}_{-i})\Gamma_{i,k_i}(z_M;\lambda',\mu) \\
	&= f(\vv{k}_{-i})\left[ \Gamma_{i,k_i}(z_M;\lambda,\mu) + \sum_{\ell \ge k_i} A z_M \frac{e^{\alpha v_{k_i}}u(v_{k_i}-z_M)}{u(v_{k_i}) - u(v_{k_i}-z_M)} - \sum_{\ell \ge k_i} A z_M \frac{e^{\alpha v_{k_i'}}u(v_{k_i'}-z_M)}{u(v_{k_i'}) - u(v_{k_i'}-z_M)} \right] \\
	&= f(\vv{k}_{-i})\left[ \Gamma_{i,k_i}(z_M;\lambda,\mu) + \sum_{\ell \ge k_i} A z_M \frac{e^{\alpha v_{k_i}}(e^{\alpha v_{k_i}}e^{-\alpha z_M} - 1)}{e^{\alpha v_{k_i}}(1-e^{-\alpha z_M})} - \sum_{\ell \ge k_i} A z_M \frac{e^{\alpha v_{k_i'}}(e^{\alpha v_{k_i'}}e^{-\alpha z_M} - 1)}{e^{\alpha v_{k_i'}}(1-e^{-\alpha z_M})} \right] \\
	&= f(\vv{k}_{-i})\left[ \Gamma_{i,k_i}(z_M;\lambda,\mu) + \sum_{\ell \ge k_i} A z_M \frac{e^{\alpha v_{k_i}}e^{-\alpha z_M} - e^{\alpha v_{k_i'}}e^{-\alpha z_M} }{1-e^{-\alpha z_M}} \right] \\
	&= f(\vv{k}_{-i})\left[ \Gamma_{i,k_i}(0;\lambda,\mu) e^{-\alpha z_M} + (K-k_i+1) A z_M \frac{e^{\alpha v_{k_i}} - e^{\alpha v_{k_i'}}}{1-e^{-\alpha z_M}} e^{-\alpha z_M} \right] \\
	&= f(\vv{k}_{-i})\Gamma_{i,k_i}(0;\lambda',\mu)e^{-\alpha z_M}
	\end{align*}
	Using the similar derivation, we can also see that
	\begin{align*}
	&f(\vv{k}_{-i})\Gamma_{i,k_i'}(z_M;\lambda',\mu) \\
	&= f(\vv{k}_{-i})\left[ \Gamma_{i,k_i'}(z_M;\lambda,\mu) + \sum_{\ell \ge k_i} A z_M \frac{e^{\alpha v_{k_i'}}u(v_{k_i'}-z_M)}{u(v_{k_i'}) - u(v_{k_i'}-z_M)} - \sum_{\ell \ge k_i} A z_M \frac{e^{\alpha v_{k_i}}u(v_{k_i}-z_M)}{u(v_{k_i}) - u(v_{k_i}-z_M)} \right] \\
	&= f(\vv{k}_{-i})\left[ \Gamma_{i,k_i'}(0;\lambda,\mu) e^{-\alpha z_M} + (K-k_i+1) A z_M \frac{e^{\alpha v_{k_i'}} - e^{\alpha v_{k_i}}}{1-e^{-\alpha z_M}} e^{-\alpha z_M} \right] \\
	&= f(\vv{k}_{-i})\Gamma_{i,k_i'}(0;\lambda',\mu)e^{-\alpha z_M}
	\end{align*}
	For $f(\vv{k}_{-i})\Pi_{i,k}(z_M;\lambda',\mu)$, 
	\begin{align*}
	&f(\vv{k}_{-i})\Pi_{i,k_i}(z_M;\lambda',\mu) \\
	&= f(\vv{k}_{-i})\left[ \Pi_{i,k_i}(z_M;\lambda,\mu) + \sum_{\ell \ge k_i} A z_M \frac{e^{\alpha v_{k_i}}u(-z_M)}{u(v_{k_i}) - u(v_{k_i}-z_M)} - \sum_{\ell \ge k_i} A z_M \frac{e^{\alpha v_{k_i'}}u(-z_M)}{u(v_{k_i'}) - u(v_{k_i'}-z_M)} \right] \\
	&= f(\vv{k}_{-i})\left[ \Pi_{i,k_i}(z_M;\lambda,\mu) + \sum_{\ell \ge k_i} A z_M \frac{e^{\alpha v_{k_i}}(e^{-\alpha z_M} - 1)}{e^{\alpha v_{k_i}}(1-e^{-\alpha z_M})} - \sum_{\ell \ge k_i} A z_M \frac{e^{\alpha v_{k_i'}}(e^{-\alpha z_M} - 1)}{e^{\alpha v_{k_i'}}(1-e^{-\alpha z_M})} \right] \\
	&= f(\vv{k}_{-i})\Pi_{i,k_i}(z_M;\lambda,\mu)\\
	& = 0
	\end{align*}
	Similarly, we can conclude that $f(\vv{k}_{-i})\Pi_{i,k_i'}(z_M;\lambda',\mu)=0$.  This proves the claim.
	
	For the main part of the proof of the theorem, we first consider the assignment of the dual variables that we have used in Section~\ref{sec:app_multi_buyer_opt}: For each buyer $i \in [n]$ and any $k,k' \in [K]$, we set $\mu_{i,k}=0$ and 
	\[
	\lambda_{i,k,k'} = 
	\begin{cases}
	\sum_{\ell \ge k} f(v_{\ell}) z_M \frac{e^{\alpha v_k}}{u(v_k)-u(v_k-z_M)}, &\text{if }  k' = k-1\\
	0, &\text{otherwise}.
	\end{cases}
	\]
	From Section~\ref{sec:app_multi_buyer_opt}, we showed that this choice of $\lambda$ and $\mu$ has the following properties:
	\begin{enumerate}
		\item $f(\vv{k}_{-i})\Gamma_{i,k_i}(0;\lambda,\mu) = f(\vv{k})\Phi_u(k_i)z_M$ for any buyer $i \in [n]$ and $k_i \in [K]$.
		\item $f(\vv{k}_{-i})\Gamma_{i,k_i}(z_M;\lambda,\mu) = f(\vv{k})\Phi_u(k_i)z_M e^{-\alpha z_M}$ for any buyer $i \in [n]$ and $k_i \in [K]$.
		\item $f(\vv{k}_{-i})\Pi_{i,k_i}(z_M;\lambda,\mu) = 0$ for any buyer $i \in [n]$ and $k_i \in [K]$.
	\end{enumerate}
	Then, we apply the ironing process by adding \emph{loops} in $\lambda$.  The loops are added in a way such that 
	\[
	f(\vv{k}_{-i})\Gamma_{i,k_i}(0;\widetilde{\lambda},\mu) = f(\vv{k})\widetilde{\Phi}_u(k_i)z_M.
	\]
	for any buyer $i \in [n]$ and any $\vv{k} \in [K]^n$, where $\widetilde{\lambda}$ is the $\lambda$ after ironing.  This can always be done by iteratively adding a properly chosen loop between adjacent pairs of types within the same ironing interval.
	
	In the next step, for each buyer $i \in [n]$ and any $\vv{k} \in [K]^n$, we set $\nu_{i,\vv{k}} = 0$ and 
	\[
	\gamma_{\vv{k}} = \max_{i \in [n]} \{0,f(\vv{k})\widetilde{\Phi}_u(k_i)z_M\}.
	\]
	It is easy to see that the new assignment of the dual variables $(\widetilde{\lambda},\widetilde{\mu},\widetilde{\nu},\widetilde{\gamma})$ is a feasible assignment using the definition of $\lambda$ and the claim that we have made above.  Then, we note that
	\begin{align*}
	&\sum_{\vv{k} \in [K]^n}\widetilde{\gamma}_{\vv{k}} \left[ \frac{\mathbbm{1}\left\{ \widetilde{\Phi}_u(v_{k_i}) \ge \widetilde{\Phi}_u(v_{k_{i'}}), \forall i' \not= i \right\}}{\sum_{i' \in [n]} \mathbbm{1}\left\{ \widetilde{\Phi}_u(v_{k_i}) = \widetilde{\Phi}_u(v_{k_{i'}}) \right\}} \right] \\ 
	&= \sum_{k_i \ge k^*} \sum_{\vv{k}_{-i} \in [K]^{n-1}} \widetilde{\gamma}_{\vv{k}} \left[ \frac{\mathbbm{1}\left\{ \widetilde{\Phi}_u(v_{k_i}) \ge \widetilde{\Phi}_u(v_{k_{i'}}), \forall i' \not= i \right\}}{\sum_{i' \in [n]} \mathbbm{1}\left\{ \widetilde{\Phi}_u(v_{k_i}) = \widetilde{\Phi}_u(v_{k_{i'}}) \right\}} \right] \\
	&= \sum_{k_i \ge k^*} \sum_{\vv{k}_{-i} \in [K]^{n-1}} f(k_i,\vv{k}_{-i}) \widetilde{\Phi}_u(k_i) z_M \left[ \frac{\mathbbm{1}\left\{ \widetilde{\Phi}_u(v_{k_i}) \ge \widetilde{\Phi}_u(v_{k_{i'}}), \forall i' \not= i \right\}}{\sum_{i' \in [n]} \mathbbm{1}\left\{ \widetilde{\Phi}_u(v_{k_i}) = \widetilde{\Phi}_u(v_{k_{i'}}) \right\}} \right] \\
	&= \sum_{k_i \ge k^*} x_i(k_i) f(v_{k_i}) \widetilde{\Phi}_u(k_i) z_M.
	\end{align*}
	
	Then, by definition of $\widetilde{\Phi}_u(k_i)$, we have
	\begin{align}
	&\sum_{\vv{k} \in [K]^n} \widetilde{\gamma}_{\vv{k}} \left[ \frac{\mathbbm{1}\left\{ \widetilde{\Phi}_u(v_{k_i}) \ge \widetilde{\Phi}_u(v_{k_{i'}}), \forall i' \not= i \right\}}{\sum_{i' \in [n]} \mathbbm{1}\left\{ \widetilde{\Phi}_u(v_{k_i}) = \widetilde{\Phi}_u(v_{k_{i'}}) \right\}} \right] \nonumber \\
	&= \sum_{k_i \ge k^*} x_i(k_i) f(v_{k_i}) \widetilde{\Phi}_u(k_i) z_M
	= \sum_{k_i \ge k^*} x_i(k_i) f(v_{k_i}) \Phi_u(k_i) z_M \label{eq:temp_ironing1234}\\
	&= \sum_{k_i \ge k^*} x_i(k_i) \left( \sum_{k_i' \ge k_i}f(v_{k_i'}) \cdot \frac{e^{\alpha v_{k_i}} u(v_{k_i})}{u(v_{k_i})-u(v_{k_i}-z_M)} - \sum_{k_i' \ge {k_i+1}}f(v_{k_i'}) \cdot \frac{e^{\alpha v_{k_i+1}} u(v_{k_i+1})}{u(v_{k_i+1})-u(v_{k_i+1}-z_M)} \right) z_M \nonumber\\
	&= \sum_{k_i \ge k^*} x_i(k_i) \left( \sum_{k_i' \ge k_i}f(v_{k_i'}) \cdot \frac{u(v_{k_i})}{-u(-z_M)} - \sum_{k_i' \ge {k_i+1}}f(v_{k_i'}) \cdot \frac{ u(v_{k_i+1})}{-u(-z_M)} \right) z_M, \label{eq:temp_12345_}
	\end{align}
	where \eqref{eq:temp_ironing1234} is because adding loops within an interval do not change the average virtual value of the interval.  We can find that \eqref{eq:temp_12345_} exactly equals to \eqref{eq:two_buyers_rev_from_b1_}.  Therefore, we have
	\[
	 \sum_{k_i \ge k^*} f(v_{k_i})(1-x_i(k_i))q_i(k_i)z_M = \sum_{\vv{k} \in [K]^n} \widetilde{\gamma}_{\vv{k}} \left[ \frac{\mathbbm{1}\left\{ \widetilde{\Phi}_u(v_{k_i}) \ge \widetilde{\Phi}_u(v_{k_{i'}}), \forall i' \not= i \right\}}{\sum_{i' \in [n]} \mathbbm{1}\left\{ \widetilde{\Phi}_u(v_{k_i}) = \widetilde{\Phi}_u(v_{k_{i'}}) \right\}} \right] = \sum_{\vv{k}\in[K]^n}\widetilde{\gamma}_{\vv{k}},
	\]
	i.e. the revenue of the loser-pay auction is equal to the dual objective under the dual variables $(\widetilde{\lambda},\widetilde{\mu},\widetilde{\nu},\widetilde{\gamma})$.  Therefore, by strong duality, the mechanism is optimal.
\end{proof}

\section{Sub-optimality for General Utility Functions}\label{quadratic}

Consider the class of quadratic utility functions with the form $u(x)=\beta[(x+L)^2-L^2]$
for some $L>z_M$.  
Unlike exponential utility functions, for quadratic utility, the optimal mechanism can contain more than $2$ menu options.  We provide \evn{an example} to show the following theorem:

\begin{theorem} \label{thm:quadratic_util_failure}
	Consider a single risk-loving buyer. There exists a convex utility function $u$, and a distribution $f$ over a set of possible values $\setbvals$ such that the revenue maximizing randomized take-it-or-leave-it price is not optimal.
\end{theorem}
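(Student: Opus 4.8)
The plan is to produce an explicit counterexample: a quadratic utility function, a finite value set $V$ with three elements, a price set $\setprices$, and a distribution $f$, together with a Bayesian incentive compatible and individually rational mechanism built from (at least) three menu options whose revenue strictly exceeds that of the revenue‑maximizing randomized take‑it‑or‑leave‑it price. Since Theorem~\ref{thm:rl-random} says that price achieves revenue $R^{\mathrm{TIOLI}} = \max_{k}\frac{z_M\,\Pr[t\ge v_k]\,u(v_k)}{u(v_k)-u(v_k-z_M)}$, it suffices to beat this number with any feasible mechanism, and that (equivalently) refutes optimality of the mechanism Theorem~\ref{thm:mec_exputil} identifies for exponential utilities.

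First I fix the utility $u(x)=\beta[(x+L)^2-L^2]$ with $L>z_M$; on the relevant domain $[v_1-z_M,v_K]\subseteq(-L,\infty)$ this $u$ is strictly increasing, convex, and satisfies $u(0)=0$, so it is an admissible risk‑loving utility. The one structural fact I use throughout is that for a quadratic $u$ one has $\E[u(t-\bm{P})] = u(t-\E[\bm{P}]) + \beta\,\mathrm{Var}(\bm{P})$: the buyer's ``risk bonus'' from a randomized payment depends only on the \emph{variance} of the payment, not on her type. This decoupling is exactly what is absent in the exponential case, where the Jensen gap of a $\{0,z_M\}$‑lottery scales multiplicatively like $e^{\alpha t}$ — and that multiplicative scaling is what makes a single randomized take‑it‑or‑leave‑it price optimal. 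In dual terms it is the content of Lemma~\ref{lemma:exp_two_price1} (that $\Gamma_k$ and $\Pi_k$ are monotone or strongly convex in $z$, forcing the dual, hence by complementary slackness the primal, to use only the extreme prices $\{0,z_M\}$) that fails: for quadratic $u$, $\Gamma_k(z)$ is a genuine quadratic in $z$ and can be \emph{concave}, so the optimal primal may place weight on an interior price or, more importantly, on an interior allocation probability, neither of which a two‑option take‑it‑or‑leave‑it price can reproduce.

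Next I fix a small instance: $V=\{0,v_2,v_3\}$ with $0<v_2<v_3<z_M$, a distribution $f$, and numerical parameters chosen so that $R^{\mathrm{TIOLI}}$ has a clean closed form, with $L$ taken only slightly larger than $z_M$ so that $u$ is appreciably convex on $[-z_M,v_K]$. I then construct a three‑option menu: the null option $(0,0)$; an option $M_2$ for type $v_2$ with an interim allocation probability $x\in(0,1)$ and a payment randomized over $\{0,z_M\}$ (possibly on both the ``win'' and ``loss'' branches); and an option $M_3$ for type $v_3$ with full allocation and a $\{0,z_M\}$‑randomized payment. The partial allocation $x$ in $M_2$ is the screening device: a high‑value buyer is deterred from $M_2$ (which gives her the item only with probability $x$) provided we leave her a controlled information rent on $M_3$, and the additive variance identity makes the required rent easy to bound. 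I verify feasibility (immediate for one buyer), individual rationality (arranged to bind at $M_2$ for $v_2$ and to hold with slack at $M_3$ for $v_3$), and Bayesian incentive compatibility (a finite list of inequalities: $v_2$ weakly prefers $M_2$ to $M_3$ and to $(0,0)$; $v_3$ weakly prefers $M_3$ to $M_2$ and to $(0,0)$; type $0$ rejects everything), each of which, via the variance identity, reduces to comparing explicit quadratics in the parameters. The revenue of the menu is $\Pr[v_2]\cdot(\text{expected payment under }M_2)+\Pr[v_3]\cdot(\text{expected payment under }M_3)$, and I pick the parameters so that this strictly exceeds $R^{\mathrm{TIOLI}}$.

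The main obstacle is precisely this final balancing: strengthening the screening (shrinking $x$, or redistributing variance) to kill the high type's deviation also shrinks the extra revenue extracted from the low type and forces a concession to the high type, so one must locate a parameter regime — in particular $L$ close enough to $z_M$ that the curvature is pronounced — in which the net change is \emph{strictly} positive. Establishing this needs a short but careful computation; checking that the analogous computation collapses to equality for an exponential utility (as it must, by Theorem~\ref{thm:mec_exputil}) is what confirms the phenomenon is genuinely about convexity beyond exponential, which is the point of the theorem.
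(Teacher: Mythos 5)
Your overall strategy is the right one, and it is essentially the paper's: the paper also proves this theorem by exhibiting a concrete instance with quadratic utility $u(x)=(x+1)^2-1$, a uniform value distribution, $\setprices=\{0,1\}$, and a two-option menu consisting of a partial-allocation, loser-pays-style option $\left(\frac{1}{1.96},0,1\right)$ together with a full-allocation randomized price $\left(1,\frac{1536}{2695},0\right)$, then checks directly that the induced utility curves give the claimed acceptance thresholds and that the resulting revenue ($\approx 0.3259$) strictly exceeds the optimal randomized take-it-or-leave-it revenue ($\approx 0.3200$). Your proposed menu structure (a partial-allocation option screening the middle type, a full-allocation option for the high type) is the same device, and your observation that for quadratic $u$ one has $\E[u(t-\Pv)]=u(t-\E[\Pv])+\beta\,\mathrm{Var}(\Pv)$ is correct and is a clean way to organize the IC/IR checks.

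However, there is a genuine gap: the theorem is an existence statement, and its entire content is the explicit instance. You never actually fix $V$, $f$, $L$, the allocation probability $x$, or the payment lotteries, and you never verify the strict revenue inequality --- you defer exactly the ``short but careful computation'' that constitutes the proof. The assertion that a parameter regime exists in which the screening/rent trade-off nets out strictly positive is precisely what must be demonstrated, and it is not obvious from your variance identity alone. Indeed, that identity cuts both ways: because the risk bonus $\beta\,\mathrm{Var}(\Pv)$ is \emph{type-independent}, it does not by itself explain why offering several menu options should beat a single randomized price (for exponential utility the analogous bonus is type-dependent and yet a single option \emph{is} optimal), so the advantage of the extra option must come from the interaction between partial allocation and the curvature of $u$, and this can only be confirmed numerically or by an explicit closed-form computation on concrete parameters. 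Until you supply those numbers and check the finitely many IC, IR, and revenue inequalities, the proof is a plan rather than a proof.
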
 
\begin{proof}
	Consider the utility function $u(x)=(x+1)^2-1$. Assume the buyer's value is uniformly distributed between $\{0,0.1,0.2\dots,0.9\}$.  Also, assume $\setprices=\{0,1\}$.  The revenue of the optimal take-it-or-leave-it randomized pricing can be found by solving
	$$
	\max_{v \in \setbvals} \frac{\Pr[t\ge v] u(v)}{u(v)-u(v-z_M)}  \approx 0.3200.
	$$
	where the optimal revenue is attained when offering the two-priced menu option $(1,0.5333,0)$.
	
	Then, consider the mechanism with the following two-priced menu options:
	\begin{align*}
	&\left(\frac{1}{1.96},0,1\right) \approx (0.5102,0,1),
	&\left(1,\frac{1536}{2695},0\right) \approx (1,0.5699,0)
	\end{align*}
	For the first menu option, its utility curve is
	$$ U_1(v) = \frac{1}{1.96} u(v) + \frac{0.96}{1.96} u(-1) = \frac{1}{1.96}(v+1)^2 - 1. $$
	Therefore, for any buyer with $v \ge 0.4$, she prefers this menu option to $(0,0,0)$.  Next, the utility curve of the second menu option is 
	$$ U_2(v) = \frac{1536}{2695} u(v-1) + \frac{1159}{2695} u(v) = v^2 + \frac{2318}{2695}v - \frac{1536}{2695}.$$
	If we compare $U_1(v)$ with $U_2(v)$, we can find that
	$$ U_2(v) - U_1(v) = \frac{24}{2695}(11v+3)(5v-3), $$
	which means that for any buyer with $v \ge 0.6$, she prefers the second menu option to the first one.  The revenue of this mechanism is \evn{therefore}
	$$ \frac{0.96}{1.96}\cdot 0.2 + \frac{1536}{2695} \cdot 0.4 \approx 0.3259, $$
	which is greater than the optimal take-it-or-leave-it randomized pricing. 
\end{proof}


\end{document}